\def\ps@pprintTitle{%
	\let\@oddhead\@empty
	\let\@evenhead\@empty
	\def\@oddfoot{}%
	\let\@evenfoot\@oddfoot}
\newtheorem{definition}{Definition}
\newtheorem{theorem}{Theorem}
\newtheorem{lemma}{Lemma}
\newtheorem{remark}{Remark}
\theoremstyle{definition}
\newcommand{\circled}[2][]{%
	\tikz[baseline=(char.base)]{%
		\node[shape = circle, draw, inner sep = 1pt]
		(char) {\phantom{\ifblank{#1}{#2}{#1}}};%
		\node at (char.center) {\makebox[0pt][c]{#2}};}}
\journal{Performance Evaluation}
\begin{document}

\begin{frontmatter}

\title{Asymptotic Performance Evaluation of Battery Swapping and Charging Station for Electric Vehicles}


\author[ust]{Xiaoqi Tan}
\ead{ecexiaoqi.tan@connect.ust.hk}

\author[ust]{Bo Sun}
\ead{bsunaa@connect.ust.hk}

\author[zjut]{Yuan Wu}
\ead{iewuy@zjut.edu.cn}

\author[ust]{Danny H.K. Tsang}

\ead{eetsang@ust.hk}

\address[ust]{Department of Electronic and Computer Engineering,\\ Hong Kong University of Science and Technology, Clear Water Bay, Kowloon, Hong Kong}
\address[zjut]{College of Information Enginnering, Zhejiang University of Technology, Hangzhou, China}

\begin{abstract}
A battery swapping and charging station (BSCS) is an energy refueling station, where i) electric vehicles (EVs) with depleted batteries (DBs) can swap their DBs for fully-charged ones, and ii) the swapped DBs are then charged until they are fully-charged. Successful deployment of a BSCS system necessitates a careful planning of swapping- and charging-related infrastructures, and thus a comprehensive performance evaluation of the BSCS is becoming crucial. This paper studies such a performance evaluation problem with a novel mixed queueing network (MQN) model and validates this model with extensive numerical simulation. We adopt the EVs' blocking probability as our quality-of-service measure and focus on studying the impact of the key parameters of the BSCS (e.g., the numbers of parking spaces, swapping islands,  chargers, and batteries) on the blocking probability. We prove a necessary and sufficient condition for showing the ergodicity of the MQN when the number of batteries approaches infinity, and further prove that the blocking probability has two different types of asymptotic behaviors. Meanwhile, for each type of asymptotic behavior, we analytically derive the asymptotic lower bound of the blocking probability.
\end{abstract}

\begin{keyword}
	Battery Swapping and Charging Station, Electric Vehicles, Mixed Queueing Network, Asymptotic Analysis, Capacity Planning.
\end{keyword}

\end{frontmatter}

\section{Introduction}
\vspace{-0.4cm}

The transportation sector accounts for a substantial portion (over a 20\% share in the United States \cite{transportation_sector_greenhouse_gas}) of greenhouse gas emissions and over 70\% of the global oil consumption. Therefore,  it is conceived that transportation electrification, especially the deployment of electric vehicles (EVs), will be the most promising medium-term solution to reduce carbon emissions and oil supply risks \cite{evaluation_envoronmental_impact}. The speed of uptake of EVs, however, is highly sensitive to the well-known \textit{range anxiety} issue (i.e., the worry that an EV will fail to reach its destination due to insufficient energy). Although some recent advancements in energy refueling solutions can mitigate this problem to some extent\footnote{ For instance, the Supercharging technology from Tesla Motors Inc. makes it less difficult to charge an EV in a public charging station \cite{tesla_supercharging_stations}. Meanwhile, the widespread deployment of Level-1 and Level-2 plug-in slow charging spots also enable EV owners to charge their EVs for hours when  parked at the work place or to perform overnight charging at home \cite{charging_station}.}, it is still far from practical to refuel an EV within a reasonably short time in the middle of a trip. Moreover, the range anxiety issue is further complicated by the current scarcity of public charging stations, which further discourages EVs from being massively adopted.

A promising solution that can potentially overcome the range anxiety issue is the deployment of battery swapping and charging stations (BSCSs) \cite{battery_charger_for_bss, bss_electric_bus,tesla_battery_swapping, bss_qingdao}. Specifically, a BSCS is an energy refueling station where i) the depleted batteries (DBs) of EVs can be swapped for fully-charged ones (i.e., the swapping service), and ii) the swapped DBs can then be charged  until they are fully-charged (i.e., the charging service). The key advantage of the BSCS is that the EV owners wait for only a short period of time for swapping their batteries, and the swapped DBs can be charged in standalone mode at any time. For instance, for the electric-bus BSCS project in Qingdao, China \cite{bss_qingdao}, the swapping service takes only several minutes. It is even more impressive that Tesla Motors can finish swapping a battery for its Model S in around 90 seconds \cite{tesla_battery_swapping}, which is even faster than refueling a gasoline tank for conventional internal combustion engine vehicles.



In addition to the fast swapping service, the battery-swapping mode can bring more advantages for both the  EV  customers and the BSCS operators in the following three aspects. \textit{First}, as a practical business model that is currently adopted by some companies in China (e.g.,  \cite{bss_qingdao}), the batteries can be owned by the BSCS operator and leased to customers, and the payment can be charged based on monthly driving mileage. Therefore, the battery-swapping mode can decouple the ownership of batteries and vehicles and thus can significantly reduce the upfront cost of purchasing an EV. As a result, the adoption rate of EVs might be largely increased. \textit{Second}, owing to the operator's proficiency, the swapped DBs can be charged in a more appropriate manner (e.g., to prolong batteries' lifetimes) than that of being charged individually by the EV owners. \textit{Third}, the swapped DBs can be  aggregated in a large quantity and form a gigantic battery energy storage system. Therefore, the BSCS can provide enormous flexibility for grid operators to perform critical tasks such as balancing the grid \cite{scheduling_hybrid_service_scheduling_bss} and buffering intermittent renewable energy \cite{evaluation_economic_value_PV} \cite{planning_optimization_issue}, which will considerably improve the stability of power networks.

Undoubtedly, successful deployment of an advanced energy refueling network of BSCSs necessitates a careful planning of swapping- and charging-related infrastructures \cite{planning_infrastructure_planning}. As a fundamental step to design such an energy refueling network, a comprehensive performance evaluation of each BSCS is important. To this end, this paper focuses on the theoretical modeling and asymptotic performance evaluation, with their applications to the capacity planning of a BSCS. Before presenting the contribution of this paper, we first review our prior studies regarding the modeling and performance evaluation of the BSCS.



\begin{figure*}
	\centering
	\begin{tikzpicture}[scale=0.55, transform shape, >=latex]
	\draw[->,line width=1pt] (-8,0.5) to (-7,0.5); 
	\draw (-9,0.5) node[]{EV Arrivals};
	
	\fill[bottom color=red] (-6.2,0) -- (-6.2,1) -- (-3,1) -- (-3,0);
	\draw (-7,0) rectangle (-3,1);
	\draw (-7,0) to (-3,0); 
	\draw (-7,1) to (-3,1);
	
	\foreach \x in {-3.2,-3.4,...,-6}
	{
		\draw (\x,0) to (\x,1);
	}
	
	\draw (-5,-0.5) node[]{ Buffer Size $ V $};
	\draw (-5,2.5)  node[right]{ \textbf{Open EV-Queue}};
	
	\fill[bottom color=red] (-9.4,5) -- (-9.4,5.2) -- (-8.4,5.2) -- (-8.4,5);
	\draw (-9.4,5) rectangle (-8.4,5.2);
	\draw (-8,5.1) node[right]{EVs};
	
	\fill[bottom color=gray] (-9.4,4) -- (-9.4,4.2) -- (-8.4,4.2) -- (-8.4,4);
	\draw (-9.4,4) rectangle (-8.4,4.2);
	\draw (-8,4.1) node[right]{Batteries};
	
	\draw (-8.9,3)  circle (0.5cm);
	\draw (-8,3)   node[right]{Servers};
	
	\draw (-9.8,2.4) rectangle (-6.1,5.4);
	\draw[->] (-3,0.5) to (-1.5,1.5);
	\draw[->] (-3,0.5) to (-1.5,-0.5);
	\draw (-1,1.5)  circle (0.5cm); \draw (-1,1.5) node[]{$ \mathrm{SS}_{1}$};
	\draw (-1,0.5)  node[] {\vdots};
	\draw (-1,-0.5) circle (0.5cm);\draw (-1,-0.5) node[]{$ \mathrm{SS}_{S}$};
	\draw (-1,-1.5) node[below]{Swapping Service};
	\draw[dashed] (-1.7,-1.2) rectangle (-0.3,5.2);
	\draw[dashed] (-7.4,-1.5) rectangle ( 0,2.2);
	
	\draw[] (-0.5,1.5) to (0.5,0.5);
	\draw[] (-0.5,-0.5) to (0.5,0.5);
	\draw[->] (0.5,0.5) to (3,0.5); 
	\draw[] (1.75,0.5) node[above]{DB Arrivals}; 
	\draw (4.5,2.5) node[above] {\textbf{DB-Queue}};
	\fill[bottom color=gray] (3.3,0) -- (3.3,1) -- (4.5,1) -- (4.5,0);
	\draw (3.5,0) rectangle (4.5,1);
	\draw (2.9,0) to (3.5,0); 
	\draw (2.9,1) to (3.5,1);
	\draw (3.3,0) to (3.3,1);
	\draw (3.5,0) to (3.5,1);
	\draw (3.7,0) to (3.7,1);
	\draw (3.9,0) to (3.9,1);
	\draw (4.1,0) to (4.1,1);
	\draw (4.3,0) to (4.3,1);
	\draw (6.5,1.5) circle(0.5cm); \draw (6.5,1.5) node[]{$ \mathrm{CS}_{1} $}; 
	\draw (6.5,0.5) node[]{\vdots};
	\draw (6.5,-0.5) circle(0.5cm); \draw (6.5,-0.5) node[]{$ \mathrm{CS}_{C} $};
	\draw (6.5,-1.5) node[below]{Charging Service};
	\draw[dashed] (2.2,-1.5) rectangle (7.2,2.2);
	
	\draw[->] (4.5,0.5) to (6, 1.4);
	\draw[->] (4.5,0.5) to (6,-0.4);
	
	\draw (6.9,1.2)  to  (7.65,0.5);
	\draw (6.9,-0.2) to  (7.65,0.5);
	\draw [->] (7.65,0.5) to (10,0.5) to (10,5.4) to (-1,5.4) to (-1,4.5);
	\fill[bottom color=gray] (-1.5,3) -- (-1.5,3.6) -- (-0.5,3.6) -- (-0.5,3);
	\draw(-1.5,3) rectangle (-0.5,3.6);
	\draw (-1.5,3.2) to (-0.5,3.2);
	\draw (-1.5,3.4) to (-0.5,3.4);
	\draw (-1.5,3.6) to (-1.5,3.6);
	\draw (-0.5,5)   to (-0.5,3.6);\draw (-1.5,5) to (-1.5,3.6);
	\draw[->] (-1,3) to (-1,2.2);
	\draw(-2,5.5) node[above]{\textbf{FB-Queue}};
	\draw( 4,5.5) node[above]{FB Arrivals};
	\end{tikzpicture}
	\caption{The MQN  that models the operations of the BSCS. We denote the different SSs and CSs by $ \mathrm{SS}_{i}, i\in\{1,\cdots,S\} $ and $\mathrm{CS}_{k}, k\in\{1,\cdots,C\}  $, respectively.}
	\label{the_model}
\end{figure*}
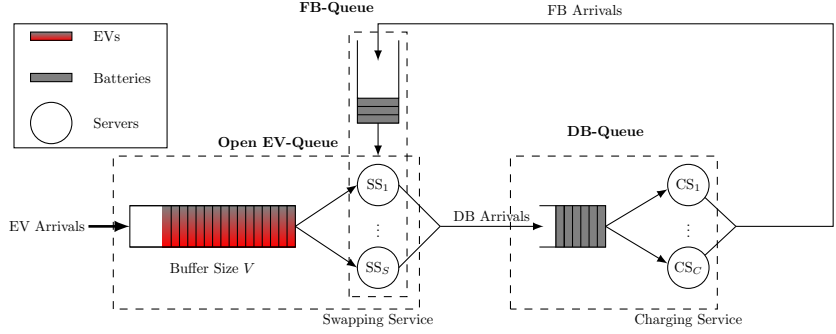

\subsection{Our Prior Work}
\label{our_contribution} 
In our previous papers \cite{evaluation_own_queueing_model_battery_swapping,scheduling_own_control_battery_swapping}, we have shown that a BSCS can be modeled as a mixed queueing network (MQN) (see Fig. \ref{the_model}). The proposed queueing network is mixed in the sense that it consists of two coupled queueing systems, i.e., an open EV-queue and a closed battery-queue. Meanwhile, the closed battery-queue further consists of two sub-queues, which are respectively denoted as the DB-queue and the FB-queue. The open EV-queue absorbs EVs from outside, and provides swapping services for EVs by first unloading DBs from vehicles to the station, and then loading FBs from the station to the vehicles. Each EV departs with a FB, and the DB unloaded from the EV will be left in the closed battery-queue, waiting for the charging service. Note that a new DB joins the closed battery-queue if and only if there is a departure of a FB, and thus a BSCS always has a fixed number of batteries in the closed battery-queue. The open EV-queue strongly couples with the closed battery-queue  since the FB-queue shares the same swapping servers (SSs) with the open EV-queue. Such a coupling effect between these two queues makes our MQN different from the standard Jackson networks \cite{jackson_networks} \cite{closed_queueing_network_exponential_servers}, which renders the analysis of the MQN nontrivial.

In \cite{evaluation_own_queueing_model_battery_swapping}, we adopted the embedded Markov chain approach to analyze the steady-state distribution of the proposed MQN. Based on some mild approximation, we obtained the steady-state distribution of the MQN, with which we further quantified various performance metrics. In \cite{scheduling_own_control_battery_swapping}, we formulated the charging control problem of a BSCS as an Markov decision process, which aims at finding an optimal policy  to minimize the total charging cost while guaranteeing a certain level of quality-of-service (QoS). To the best of our knowledge, our proposed MQN for modeling the BSCS has never been studied  by any related literature before.


\subsection{Contribution and Organization of This Paper}
Different from the steady-state analysis in \cite{evaluation_own_queueing_model_battery_swapping} and the charging control in \cite{scheduling_own_control_battery_swapping}, this paper focuses on the asymptotic performance analysis with its application to the capacity planning of the BSCS. In particular, as we can see from Fig. \ref{the_model}, there are four important parameters that determine the size/capacity of a BSCS, namely, i) the number of parking spaces, which corresponds to the buffer size of the open EV-queue, ii) the number of swapping islands\footnote{A swapping island can either represent a robot if the swapping service is performed autonomously, or it can represent a worker if the swapping service is performed manually.}, which corresponds to the number of SSs, iii) the number of chargers, which corresponds to the number of charging servers (CSs), and iv) the number of batteries in the closed battery-queue. Intuitively, each feasible tuple of the four parameters defines a planning decision. \textit{The asymptotic queueing analysis in this paper aims to analyze how the blocking probability of the BSCS behaves when the number of batteries becomes sufficiently large. More importantly, we will particularly show how the asymptotic analysis facilitates a good capacity planning for the BSCS}.  The key contributions of the paper are summarized as follows:
\begin{enumerate}[leftmargin=*]
	\item \textbf{Asymptotic Analysis}. We derive the balance equations for the MQN and calculate the corresponding steady-state distribution based on a two-dimensional continuous-time Markov chain (CTMC) approach. The key theoretical result established in this paper is the proof of an analytical necessary and sufficient condition, by which we can show the ergodicity of the MQN when the number of batteries approaches infinity.  Moreover, depending on whether this condition is satisfied or not,  we show that the MQN is asymptotically equivalent to two sub-queueing networks with much simpler structures. To the best of our knowledge, the asymptotic analysis of the MQN model has never been studied in existing literature. 
	
	\item \textbf{Capacity Planning}. The theoretical development of this paper contributes to the capacity planning of BSCSs in practice. Specifically, we classify the four parameters into three stages and propose the concept of multi-stage capacity planning. In Stage-I, we analytically investigate how the number of parking spaces and the number of SSs influence the $(N,S)$-limiting lower bound of the blocking probability no matter how many CSs and batteries are deployed in the MQN.  In Stage-II, we define two operating modes for the BSCS (i.e., the charging-limiting mode and the swapping-limiting mode) based on the number of CSs. In Stage-III, we prove that the $(N,S)$-limiting lower bound of the blocking probability is achievable only in the swapping-limiting mode but not in the charging-limiting mode. For the latter working-mode, we further analytically derive the achievable lower bound of the blocking probability. In summary, our proposed multi-stage capacity planning framework reveals the nature of how the four parameters influence the overall QoS of the MQN, which is of great importance in the planning of real BSCSs. 
\end{enumerate}

The rest of the paper is organized as follows. We extensively survey the related literature in Sec. \ref{sec_literature_review}. We present the details of the MQN model and calculate the associated steady-state distribution in Sec. \ref{queueing_network_models}. We then introduce the concepts of multi-stage capacity planning, the $(N,S)$-limiting lower bound of the blocking probability, the two operating modes, and the asymptotic properties in Sec. \ref{QoS_Multi_Stage_Capacity_Planning}. As the main theoretical contribution of this paper, the necessary and sufficient condition to show the asymptotic ergodicity of the MQN is proved in Sec. \ref{proof_of_theorem_1}. 
Numerical simulation and discussion are presented in Sec. \ref{sec_simulation_results}. We finally conclude our paper in Sec. \ref{conclusion}.

\section{Literature Review}
\label{sec_literature_review}
Motivated by the aforementioned advantages of the battery-swapping concept, there has been a growing amount of  research on BSCSs in recent years. In this section, we classify the related literature into the following three streams\footnote{The classification is just for the purpose of a clear presentation, and some of the surveyed literature may cover multiple streams. }. 


The first stream focuses on the \textit{modeling} and \textit{evaluation} of the BSCSs. The evaluation of BSCSs includes the  performance analysis of environmental impact
\cite{evaluation_envoronmental_impact}, economic benefits \cite{evaluation_economic_value_PV, evaluation_economic_microgrid},  and reliability impact on the power system \cite{evaluation_system_reliability_battery_exchange_mode}, etc. For instance, the authors of  \cite{evaluation_envoronmental_impact} perform a comparison study on the battery-swapping mode and the conventional fixed-battery mode (i.e., the conventional plug-in charging mode) in terms of their capabilities of reducing oil dependence and carbon emission. \cite{evaluation_economic_value_PV} proposes to  absorb the surplus electricity from photovoltaics (PVs) by using the unloaded batteries of a BSCS. It is demonstrated that the marginal economic value of the PV inverter and that of batteries heavily depends on each other's capacity. Therefore, the capacities of both the PV inverter and the batteries should be properly selected. 
The authors of \cite{evaluation_economic_microgrid} propose  an energy dispatching strategy for a microgrid system containing a BSCS, wind generator, PV system, fuel cell, and etc. The simulation results in \cite{evaluation_economic_microgrid} show that a considerable amount of profit can be generated by appropriately operating the BSCS as an energy storage system.  
In addition to the potential economic and environmental benefits, it is demonstrated that the reliability of power systems can be significantly improved if the battery-swapping mode is adopted \cite{evaluation_system_reliability_battery_exchange_mode}. All these models and evaluation methods
\cite{evaluation_envoronmental_impact,evaluation_economic_value_PV,evaluation_economic_microgrid, evaluation_system_reliability_battery_exchange_mode} demonstrate that appropriate capacity planning of the BSCS is very important in determining the overall benefits, especially when the  capital cost of batteries is high.

The second stream focuses on the \textit{planning} and \textit{design} of BSCSs \cite{planning_infrastructure_planning,planning_optimization_issue, planning_electric_taxi, planning_battery_swapping_distribution_systems, planning_design_mobile_charging_service}. The key focus of this stream of research is to strategically determine the location and capacity of a BSCS or a network of BSCSs. Mak \textit{et al.} \cite{planning_infrastructure_planning}  study the infrastructure planning problem for EVs with the battery-swapping mode. The planning problem is to locate a number of battery swapping stations at strategic locations along a network of freeways. Based on two different planning objectives (cost-concerned or profit-driven), the planning problem is formulated as two different robust optimization problems. By solving these two robust optimization problems,  the corresponding strategies to minimize the total expected planning cost and to achieve a certain amount of profit are obtained in  \cite{planning_infrastructure_planning}. In \cite{planning_optimization_issue}, the authors propose to utilizing the gigantic storage system (i.e., aggregated batteries from a BSCS) to integrate renewable energy into the power system. The objective is to determine the optimal capacity of the gigantic storage system such that the total cost of the system is minimized. The authors find that the optimal capacity highly depends on the number of charge-discharge cycles of the batteries. In \cite{planning_electric_taxi}, the authors propose an battery swapping station planning algorithm for urban electrical taxis, whose target was to minimize the total time required for refueling all  of the taxis. 

The third stream focuses on the \textit{scheduling} and \textit{operation} of BSCSs \cite{scheduling_bss_OR_letters, scheduling_power_supply_optimization_tvt,scheduling_phev_exchange_station,scheduling_hybrid_service_scheduling_bss,scheduling_optimal_management_policies,scheduling_battery_purchasing_optimization,scheduling_own_control_battery_swapping}. As defined by \cite{scheduling_bss_OR_letters}, the scheduling of a BSCS is a new inventory management problem, whose key is to obtain an optimal charging (and possibly discharging) strategy that optimizes a certain objective (e.g., minimize the total charging cost, etc) and guarantees a certain amount of FBs simultaneously. Based on different application scenarios and assumptions, several papers have further studied this new inventory management problem. For instance,  in \cite{scheduling_power_supply_optimization_tvt} and \cite{scheduling_hybrid_service_scheduling_bss}, the authors propose an optimal cost-effective operation of a BSCS with dynamic electricity price and uncertain FB demand, and the authors of \cite{scheduling_hybrid_service_scheduling_bss} further investigate the economic benefits of services like battery-to-grid and battery-to-battery. In \cite{scheduling_optimal_management_policies}, the optimal charging and discharging policies for maximizing the expected total profit over a fixed time horizon (i.e., short-term) have been proposed. Different from \cite{scheduling_optimal_management_policies}, the authors of \cite{scheduling_battery_purchasing_optimization} investigate the joint optimization of the battery charging and purchasing strategies for a single BSCS and a network of BSCSs. Therefore, the long-term investment in batteries and the short-term operational cost can be balanced.

Despite the above literature, there has been little work focusing on the QoS analysis of BSCSs. Our previous work \cite{evaluation_own_queueing_model_battery_swapping} aims to fill this gap by proposing an MQN model for the BSCS, in which only the steady-state QoS analysis has been studied. Different from \cite{evaluation_own_queueing_model_battery_swapping}, this paper performs an asymptotic queueing analysis for the MQN model, and aims to link the theoretical results to the practical capacity planning of BSCSs. Note that this work mainly contributes to the first stream of research, but our asymptotic queueing analysis provides rich insights for the multi-stage capacity planning of BSCSs, which thus also contributes to the second stream of research.

\section{Steady-State Distribution}
\label{queueing_network_models}
In this section, we derive the balane equations for the MQN based on the two-dimensional CTMC approach and then calculate its steady-state distribution. We first present the assumptions and notations regarding the proposed MQN model. 


\subsection{Assumptions and Notations} 
As illustrated in Fig. \ref{the_model}, we assume that the EVs arrive at the BSCS according to a Poisson process with rate $ \lambda $. Each EV will either be served immediately or wait for service, and then immediately leave the system after service. We use $S$ to denote the total number of SSs, and each SS is assumed to have an exponentially distributed service time with service rate $ \nu $. The  total number of parking spaces is assumed to be $ V=N-S $, where $ N $ denotes the total capacity of the open EV-queue. Recall that there exists a fixed number of batteries circulating through the closed battery-queue at all times, and this fixed number of batteries is denoted by $ B $. We further use $ C $ to denote the number of  CSs. Due to the randomness of the initial state-of-charge (SoC) of the DBs, we assume that the charging time is exponentially distributed with rate $ \mu $\footnote{We point out that although the assumptions about the swapping time and the charging time are motivated for mathematical tractability, they are in general close to reality and widely used in the queueing theory related literature (e.g.,  \cite{ev_power_allocation, evaluation_economic_value_PV, planning_infrastructure_planning}). }. Since it is often practically feasible to have enough space to store all the batteries in the BSCS, we assume that the buffer sizes of the DB-queue and the FB-queue are all infinite. Therefore, the closed battery-queue in the MQN does not have the blocking phenomenon \cite{closed_queueing_networks}.

\subsection{Balance Equations and Steady-State Distribution}
\label{steady_state_distribution_exponential}
We use a triple $ (n,b,j) $ to denote the state of having $ n $ EVs (waiting and in-service) in the open EV-queue, $ b $ FBs (waiting and in-service) in the FB-queue, and $ j $ DBs (waiting and in-service) in the DB-queue, where $ n\in \{0,1,\cdots, N\}$, $b$, $j\in \{0,1,\cdots, B\} $. Note that $ b+j=B $ always holds, we thus simply use $ \pi_{n,b} $ to denote the steady-state probability of being in state $ (n,b,j) $. 

To analyze the steady-state distribution of the MQN, we show the two-dimensional CTMC of the proposed MQN in Fig. \ref{CTMC}. The transition equations  can be organized into five cases which respectively corresponds to the five different operating regions as follows:

\textbf{Region 1}: the first row in Fig. \ref{CTMC}, i.e., when $ n=0 $. We have the following three cases to represent the transition equations:	
\begin{itemize}
	\item for $ n=0 $ and $ b = 0 $, i.e., the state $ (0,0) $,
	\begin{align}
	\pi_{0,0}\Big(\mu\min\{B,C\}+\lambda\Big)=\pi_{1,1}\nu,
	\label{first_case_first_region}
	\end{align}
	\item for $ n=0 $ and $ b=B $, i.e., the state $ (0,B) $,
	\begin{align}
	\pi_{0,B}\lambda=\pi_{0,B-1}\mu,
	\end{align}
	\item for $ n=0 $ and $  1\leq b\leq B-1 $, we have
	\begin{align}
	\pi_{0,b}\Big(\mu\min\{B-b,C\}+\lambda\Big)=\pi_{0,b-1}\mu\min\{B-b+1,C\}+ \pi_{1,b+1}\nu.
	\end{align}
\end{itemize}

\begin{figure}
	\centering
	\begin{tikzpicture}[scale=0.53, transform shape,>=latex]
	\draw[dashed] (-7,5) to (-7,-8) to (9,-8) to (-3,5) to (-7,5);
	
	\draw[dotted] (-2.4,5) to (15,5) to (15, -8) to (9.6,-8) to (-2.4,5);
	
	\path
	(-6,4)  node (00) [shape=circle,draw] {$ \ 0, 0 \  $  }
	(-4,4)  node (01) [shape=circle,draw] {$ \ 0, 1 \  $ }
	( 0,4)  node (02) [shape=circle,draw] {{\small 0, B-7} }
	( 2,4)  node (03) [shape=circle,draw] {{\small 0, B-6} }
	( 4,4)  node (04) [shape=circle,draw] {{\small 0, B-5} }
	( 6,4)  node (05) [shape=circle,draw] {{\small 0, B-4} }
	( 8,4)  node (06) [shape=circle,draw] {{\small 0, B-3} }
	( 10,4)  node (07) [shape=circle,draw] {{\small 0, B-2} }
	( 12,4)  node (08) [shape=circle,draw] {{\small 0, B-1} }
	( 14,4)  node (09) [shape=circle,draw] { $\ 0 $, B }

	(-6,2) node (10)[shape=circle,draw] { $ \ 1, 0 \ $ }
	(-4,2) node (11)[shape=circle,draw] { $ \ 1, 1 \ $ }
	(-2,2) node (12)[shape=circle,draw] { $ \ 1, 2 \ $}
	( 2,2)  node (13) [shape=circle,draw] {{\small 1, B-6} }
	( 4,2)  node (14) [shape=circle,draw] {{\small 1, B-5} }
	( 6,2)  node (15) [shape=circle,draw] {{\small 1, B-4} }
	( 8,2)  node (16) [shape=circle,draw] {{\small 1, B-3} }
	( 10,2)  node (17) [shape=circle,draw] {{\small 1, B-2} }
	( 12,2)  node (18) [shape=circle,draw] {{\small 1, B-1} }
	( 14,2)  node (19) [shape=circle,draw] { $\ 1 $, B }
	
	(-6,0) node (20)[shape=circle,draw] { $ \ 2, 0 \ $ }
	(-4,0) node (21)[shape=circle,draw] { $ \ 2, 1 \ $ }
	(-2,0) node (22)[shape=circle,draw] { $ \ 2, 2 \ $ }
	( 0,0) node (23)[shape=circle,draw] { $ \ 2, 3 \ $ }
	( 4,0)  node (24) [shape=circle,draw] {{\small 2, B-5} }
	( 6,0)  node (25) [shape=circle,draw] {{\small 2, B-4} }
	( 8,0)  node (26) [shape=circle,draw] {{\small 2, B-3} }
	( 10,0)  node (27) [shape=circle,draw] {{\small 2, B-2} }
	( 12,0)  node (28) [shape=circle,draw] {{\small 2, B-1} }
	( 14,0)  node (29) [shape=circle,draw] { $\ 2 $, B }

	(-6,-2) node (30)[shape=circle,draw] { $\ 3, 0 \ $ }
	(-4,-2) node (31)[shape=circle,draw] { $\ 3, 1 \ $ }
	(-2,-2) node (32)[shape=circle,draw] { $\ 3, 2 \ $ }
	( 0,-2) node (33)[shape=circle,draw] { $\ 3, 3 \ $ }	
	( 2,-2) node (34)[shape=circle,draw] { $\ 3, 4 \ $ }
	( 6,-2)  node (35) [shape=circle,draw]  {{\small 3, B-4} }
	( 8,-2)  node (36) [shape=circle,draw]  {{\small 3, B-3} }
	( 10,-2)  node (37) [shape=circle,draw] {{\small 3, B-2} }
	( 12,-2)  node (38) [shape=circle,draw] {{\small 3, B-1} }
	( 14,-2)  node (39) [shape=circle,draw] { $\ 3 $, B }

	(-6,-5) node (n0)[shape=circle,draw] { {\small N-1,0} }
	(-4,-5) node (n1)[shape=circle,draw] { {\small N-1,1} }
	(-2,-5) node (n2)[shape=circle,draw] { {\small N-1,2} }
	( 0,-5) node (n3)[shape=circle,draw] { {\small N-1,3} }	
	( 2,-5) node (n4)[shape=circle,draw] { {\small N-1,4} }
	( 4,-5) node (n5)[shape=circle,draw] { {\small N-1,5} }
	
	( 8, -5) node (n6)[shape=circle,draw] { {\scriptsize N-1,B-3} }
	( 10,-5) node (n7)[shape=circle,draw] { {\scriptsize N-1,B-2} }
	( 12,-5) node (n8)[shape=circle,draw] { {\scriptsize N-1,B-1} }	
	( 14,-5) node (n9)[shape=circle,draw] {  {\footnotesize N-1, B} }

	(-6,-7) node (N0)[shape=circle,draw] { N, $  0\  $ }
	(-4,-7) node (N1)[shape=circle,draw] { N, $  1\ $ }
	(-2,-7) node (N2)[shape=circle,draw] { N, $  2\ $ }
	( 0,-7) node (N3)[shape=circle,draw] { N, $  3\ $ }	
	( 2,-7) node (N4)[shape=circle,draw] { N, $  4\ $ }
	( 4,-7) node (N5)[shape=circle,draw] { N, $  5\ $ }
	( 6, -7) node (N6)[shape=circle,draw] {N, $  6\ $ }	
	( 10,-7) node (N7)[shape=circle,draw] { {\small N,B-2} }
	( 12,-7) node (N8)[shape=circle,draw] { {\small N,B-1} }	
	( 14,-7) node (N9)[shape=circle,draw] {  N, B };
	
	\draw (-2,4) node[]{{\Huge $ \cdots $}};
	\draw (0, 2) node[]{{\Huge $ \cdots $}};
	\draw (2, 0) node[]{{\Huge $ \cdots $}};
	\draw (4,-2) node[]{{\Huge $ \cdots $}};
	\draw (6,-5) node[]{{\Huge $ \cdots $}};
	\draw (8,-7) node[]{{\Huge $ \cdots $}};
	
	\draw (-6,-3.5) node[]{{\LARGE $ \vdots $}};
	\draw (-4,-3.5) node[]{{\LARGE $ \vdots $}};
	\draw (-2,-3.5) node[]{{\LARGE $ \vdots $}};
	\draw ( 0,-3.5) node[]{{\LARGE $ \vdots $}};
	\draw ( 2,-3.5) node[]{{\LARGE $ \vdots $}};
	\draw ( 4,-3.5) node[]{{\LARGE $ \vdots $}};
	\draw ( 6,-3.5) node[]{{\LARGE $ \vdots $}};
	\draw ( 8,-3.5) node[]{{\LARGE $ \vdots $}};
	\draw ( 10,-3.5) node[]{{\LARGE $ \vdots $}};
	\draw ( 12,-3.5) node[]{{\LARGE $ \vdots $}};
	\draw (14,-3.5) node[]{{\LARGE $ \vdots $}};
	\draw[->] (00) to[bend left] node[below]{$ C\mu$ }(01);
	\draw[->] (02) to[bend left] node[below]{$ 7\mu$ }(03);
	\draw[->] (03) to[bend left] node[below]{$ 6\mu$ }(04);
	\draw[->] (04) to[bend left] node[below]{$ 5\mu$ }(05);
	\draw[->] (05) to[bend left] node[below]{$ 4\mu$ }(06);
	\draw[->] (06) to[bend left] node[below]{$ 3\mu$ }(07);
	\draw[->] (07) to[bend left] node[below]{$ 2\mu$ }(08);
	\draw[->] (08) to[bend left] node[below]{$ \mu$ }(09);
	
	\draw[->] (10) to[bend left] node[below]{$ C\mu$ }(11);
	\draw[->] (11) to[bend left] node[below]{$ C\mu$ }(12);
	\draw[->] (13) to[bend left] node[below]{$ 6\mu$ }(14);
	\draw[->] (14) to[bend left] node[below]{$ 5\mu$ }(15);
	\draw[->] (15) to[bend left] node[below]{$ 4\mu$ }(16);
	\draw[->] (16) to[bend left] node[below]{$ 3\mu$ }(17);
	\draw[->] (17) to[bend left] node[below]{$ 2\mu$ }(18);
	\draw[->] (18) to[bend left] node[below]{$ \mu$ }(19);
	
	\draw[->] (20) to[bend left] node[below]{$ C\mu$ }(21);
	\draw[->] (21) to[bend left] node[below]{$ C\mu$ }(22);
	\draw[->] (22) to[bend left] node[below]{$ C\mu$ }(23);
	\draw[->] (24) to[bend left] node[below]{$ 5\mu$ }(25);
	\draw[->] (25) to[bend left] node[below]{$ 4\mu$ }(26);
	\draw[->] (26) to[bend left] node[below]{$ 3\mu$ }(27);
	\draw[->] (27) to[bend left] node[below]{$ 2\mu$ }(28);
	\draw[->] (28) to[bend left] node[below]{$ \mu$  }(29);
	
	\draw[->] (30) to[bend left] node[below]{$ C\mu$ }(31);
	\draw[->] (31) to[bend left] node[below]{$ C\mu$ }(32);
	\draw[->] (32) to[bend left] node[below]{$ C\mu$ }(33);
	\draw[->] (33) to[bend left] node[below]{$ C\mu$ }(34);
	\draw[->] (35) to[bend left] node[below]{$ 4\mu$ }(36);
	\draw[->] (36) to[bend left] node[below]{$ 3\mu$ }(37);
	\draw[->] (37) to[bend left] node[below]{$ 2\mu$ }(38);
	\draw[->] (38) to[bend left] node[below]{$ \mu$  }(39);
	
	\draw[->] (n0) to[bend left] node[below]{$ C\mu$ }(n1);
	\draw[->] (n1) to[bend left] node[below]{$ C\mu$ }(n2);
	\draw[->] (n2) to[bend left] node[below]{$ C\mu$ }(n3);
	\draw[->] (n3) to[bend left] node[below]{$ C\mu$ }(n4);
	\draw[->] (n4) to[bend left] node[below]{$ C\mu$ }(n5);
	\draw[->] (n6) to[bend left] node[below]{$ 3\mu$ }(n7);
	\draw[->] (n7) to[bend left] node[below]{$ 2\mu$ }(n8);
	\draw[->] (n8) to[bend left] node[below]{$ \mu$ } (n9);
	
	\draw[->] (N0) to[bend left] node[below]{$ C\mu$ }(N1);
	\draw[->] (N1) to[bend left] node[below]{$ C\mu$ }(N2);
	\draw[->] (N2) to[bend left] node[below]{$ C\mu$ }(N3);
	\draw[->] (N3) to[bend left] node[below]{$ C\mu$ }(N4);
	\draw[->] (N4) to[bend left] node[below]{$ C\mu$ }(N5);
	\draw[->] (N5) to[bend left] node[below]{$ C\mu$ }(N6);
	\draw[->] (N7) to[bend left] node[below]{$ 2\mu$ }(N8);
	\draw[->] (N8) to[bend left] node[below]{$ \mu$ }(N9);
	
	\draw[->] (00) to[bend right] node[right]{$ \lambda $ }(10);
	\draw[->] (10) to[bend right] node[right]{$ \lambda $ }(20);
	\draw[->] (20) to[bend right] node[right]{$ \lambda $ }(30);
	\draw[->] (n0) to[bend right] node[right]{$ \lambda $ }(N0);
	
	\draw[->] (01) to[bend right] node[right]{$ \lambda $ }(11);
	\draw[->] (11) to[bend right] node[right]{$ \lambda $ }(21);
	\draw[->] (21) to[bend right] node[right]{$ \lambda $ }(31);
	\draw[->] (n1) to[bend right] node[right]{$ \lambda $ }(N1);

	\draw[->] (02) to[bend right] node[right]{$ \lambda $ }(-0.3, 2.7);
	\draw[->] (13)  to[bend right] node[right]{$ \lambda $ }( 1.7, 0.7);
	\draw[->] (24) to[bend right] node[right]{$ \lambda $ }(  3.7, -1.3);
	
	\draw[->] (-0.2, 1.4) to[bend right] node[right]{$ \lambda $ }(23);
	\draw[->] (-2.2, 3.4) to[bend right] node[right]{$ \lambda $ }(12);
	\draw[->] (1.8, -0.6) to[bend right] node[right]{$ \lambda $ }(34);

	\draw[->] (12) to[bend right] node[right]{$ \lambda $ }(22);
	\draw[->] (22) to[bend right] node[right]{$ \lambda $ }(32);
	\draw[->] (n2) to[bend right] node[right]{$ \lambda $ }(N2);

	\draw[->] (03) to[bend right] node[right]{$ \lambda $ }(13);
	\draw[->] (23) to[bend right] node[right]{$ \lambda $ }(33);
	\draw[->] (n3) to[bend right] node[right]{$ \lambda $ }(N3);
	
	\draw[->] (04) to[bend right] node[right]{$ \lambda $ }(14);
	\draw[->] (14) to[bend right] node[right]{$ \lambda $ }(24);
	\draw[->] (n4) to[bend right] node[right]{$ \lambda $ }(N4);
	
	\draw[->] (05) to[bend right] node[right]{$ \lambda $ }(15);
	\draw[->] (15) to[bend right] node[right]{$ \lambda $ }(25);
	\draw[->] (25) to[bend right] node[right]{$ \lambda $ }(35);
	\draw[->] (n5) to[bend right] node[right]{$ \lambda $ }(N5);
	
	\draw[->] (06) to[bend right] node[right]{$ \lambda $ }(16);
	\draw[->] (16) to[bend right] node[right]{$ \lambda $ }(26);
	\draw[->] (26) to[bend right] node[right]{$ \lambda $ }(36);

	\draw[->] (07) to[bend right] node[right]{$ \lambda $ }(17);
	\draw[->] (17) to[bend right] node[right]{$ \lambda $ }(27);
	\draw[->] (27) to[bend right] node[right]{$ \lambda $ }(37);
	\draw[->] (n7) to[bend right] node[right]{$ \lambda $ }(N7);
	
	\draw[->] (08) to[bend right] node[right]{$ \lambda $ }(18);
	\draw[->] (18) to[bend right] node[right]{$ \lambda $ }(28);
	\draw[->] (28) to[bend right] node[right]{$ \lambda $ }(38);
	\draw[->] (n8) to[bend right] node[right]{$ \lambda $ }(N8);
	
	\draw[->] (09) to[bend right] node[right]{$ \lambda $ }(19);
	\draw[->] (19) to[bend right] node[right]{$ \lambda $ }(29);
	\draw[->] (29) to[bend right] node[right]{$ \lambda $ }(39);
	\draw[->] (n9) to[bend right] node[right]{$ \lambda $ }(N9);
	
	\draw[->] (30) to[bend right] node[right]{$ \lambda $ } (-6.3, -3.3);
	\draw[->] (-6.3, -3.6) to[bend right] node[right]{$ \lambda $ } (n0);
	
	\draw[->] (31) to[bend right] node[right]{$ \lambda $ } (-4.3, -3.3);
	\draw[->] (-4.3, -3.6) to[bend right] node[right]{$ \lambda $ } (n1);
	
	\draw[->] (32) to[bend right] node[right]{$ \lambda $ } (-2.3, -3.3);
	\draw[->] (-2.3, -3.6) to[bend right] node[right]{$ \lambda $ } (n2);
	
	\draw[->] (33) to[bend right] node[right]{$ \lambda $ } (-0.3, -3.3);
	\draw[->] (-0.3, -3.6) to[bend right] node[right]{$ \lambda $ } (n3);
	
	\draw[->] (34) to[bend right] node[right]{$ \lambda $ } (1.7, -3.3);
	\draw[->] (1.7, -3.6) to[bend right] node[right]{$ \lambda $ } (n4);
	
	\draw[->] (3.7, -3.6) to[bend right] node[right]{$ \lambda $ } (n5);	
	
	\draw[->] (35) to[bend right] node[right]{$ \lambda $ } (5.7, -3.3);
	\draw[->] (5.8, -5.6) to[bend right] node[right]{$ \lambda $ } (N6);

	\draw[->] (36) to[bend right] node[right]{$ \lambda $ } (7.7, -3.3);
	\draw[->] (7.7, -3.6) to[bend right] node[right]{$ \lambda $ } (n6);
	\draw[->] (n6) to[bend right] node[right]{$ \lambda $ } (7.7, -6.4);
	
	\draw[->] (37) to[bend right] node[right]{$ \lambda $ } (9.7, -3.3);
	\draw[->] (9.7, -3.6) to[bend right] node[right]{$ \lambda $ } (n7);
	
	\draw[->] (38) to[bend right] node[right]{$ \lambda $ } (11.7, -3.3);
	\draw[->] (11.7, -3.6) to[bend right] node[right]{$ \lambda $ } (n8);
	
	\draw[->] (39) to[bend right] node[right]{$ \lambda $ } (13.7, -3.3);
	\draw[->] (13.7, -3.6) to[bend right] node[right]{$ \lambda $ } (n9);
	\draw[->] (n1) to node[below]{$ \nu $ }(-5.6, -3.5);
	\draw[->] (n2) to node[below]{$ 2\nu $ }(-3.6, -3.5);
	\draw[->] (n3) to node[below]{$ 3\nu $ }(-1.6, -3.5);
	\draw[->] (n4) to node[below]{$ 4\nu $ }( 0.4, -3.5);	
	\draw[->] (n5) to node[below]{$ 5\nu $ }( 2.4, -3.5);
	\draw[->] (n6) to node[below]{$ S\nu $ }( 6.4, -3.5);
	\draw[->] (n7) to node[below]{$ S\nu $ }( 8.4, -3.5);
	\draw[->] (n8) to node[below]{$ S\nu $ }( 10.4, -3.5);
	\draw[->] (n9) to node[below]{$ S\nu $ }( 12.4, -3.5);
	\draw[->] (11) to node[below]{$ \nu $ }(00);
	\draw[->] (12) to node[below]{$ \nu $ } (01);
	\draw[->] (13) to node[below]{$ \nu $ } (02);
	\draw[->] (14) to node[below]{$ \nu $ } (03);
	\draw[->] (15) to node[below]{$ \nu $ } (04);
	\draw[->] (16) to node[below]{$ \nu $ } (05);
	\draw[->] (17) to node[below]{$ \nu $ } (06);
	\draw[->] (18) to node[below]{$ \nu $ } (07);
	\draw[->] (19) to node[below]{$ \nu $ } (08);

	\draw[->] (21) to node[below]{$ \nu $ }(10);
	\draw[->] (22) to node[below]{$ 2\nu $ }(11);
	\draw[->] (23) to node[below]{$ 2\nu $ }(12);
	\draw[->] (24) to node[below]{$ 2\nu $ }(13);
	\draw[->] (25) to node[below]{$ 2\nu $ }(14);
	\draw[->] (26) to node[below]{$ 2\nu $ }(15);
	\draw[->] (27) to node[below]{$ 2\nu $ }(16);
	\draw[->] (28) to node[below]{$ 2\nu $ }(17);
	\draw[->] (29) to node[below]{$ 2\nu $ }(18);

	\draw[->] (31) to node[below]{$ \nu $ }(20);
	\draw[->] (32) to node[below]{$ 2\nu $ }(21);
	\draw[->] (33) to node[below]{$ 3\nu $ }(22);
	\draw[->] (34) to node[below]{$ 3\nu $ }(23);
	\draw[->] (35) to node[below]{$ 3\nu $ }(24);
	\draw[->] (36) to node[below]{$ 3\nu $ }(25);
	\draw[->] (37) to node[below]{$ 3\nu $ }(26);
	\draw[->] (38) to node[below]{$ 3\nu $ }(27);
	\draw[->] (39) to node[below]{$ 3\nu $ }(28);

	\draw[->] (N1) to node[below]{$ \nu $ }(n0);
	\draw[->] (N2) to node[below]{$ 2\nu $ }(n1);
	\draw[->] (N3) to node[below]{$ 3\nu $ }(n2);
	\draw[->] (N4) to node[below]{$ 4\nu $ }(n3);
	\draw[->] (N5) to node[below]{$ 5\nu $ }(n4);
	\draw[->] (N6) to node[below]{$ 6\nu $ }(n5);
	\draw[->] (N7) to node[below]{$ S\nu $ }(n6);
	\draw[->] (N8) to node[below]{$ S\nu $ }(n7);
	\draw[->] (N9) to node[below]{$ S\nu $ }(n8);
	\draw[->] (01)       to[bend left] node[below]{$ C\mu$ }(-2.5,4.3);
	\draw[->] (-1.7,4.3) to[bend left] node[below]{$ 8\mu$ }(02);
	
	\draw[->] (12)       to[bend left] node[below]{$ C\mu$ }(-0.5,2.3);
	\draw[->] (0.3,2.3)  to[bend left] node[below]{$ 7\mu$ }(13);
	
	\draw[->] (23)       to[bend left] node[below]{$ C\mu$ }( 1.5,0.3);
	\draw[->] (2.3,0.3)  to[bend left] node[below]{$ 6\mu$ }(24);
	
	\draw[->] (34)       to[bend left] node[below]{$ C\mu$ }( 3.5,-1.7);
	\draw[->] (4.3,-1.7) to[bend left] node[below]{$ 5\mu$ }(35);
	
	\draw[->] (n5)       to[bend left] node[below]{$ C\mu$ }( 5.5,-4.7);
	\draw[->] (6.3,-4.8) to[bend left] node[below]{$ 4\mu$ }(n6);
	
	\draw[->] (N6)       to[bend left] node[below]{$ C\mu$ }( 7.5,-6.7);
	\draw[->] (8.3,-6.7) to[bend left] node[below]{$ 3\mu$ }(N7);
	
	\end{tikzpicture}
	\caption{Illustration of the CTMC for the MQN.  Each node \circled{$ n,b $} with $ n\in\{0, \cdots, N\} $ and $ b\in \{0, \cdots, B\} $ denotes a two-dimensional state with $ n $ EVs  and $ b $ FBs in the open EV-queue and the FB-queue, respectively. For simplicity, we assume $ C\leq B $ in this figure.}
	\label{CTMC}
\end{figure}
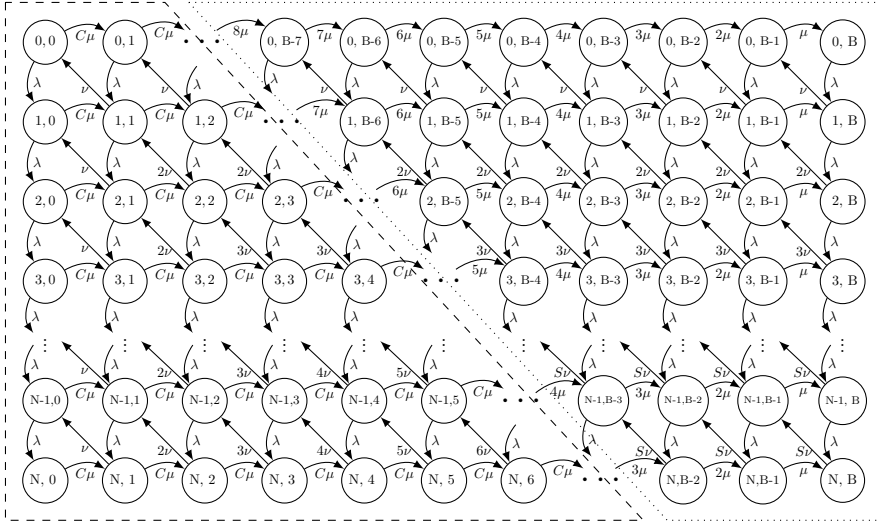

\textbf{Region 2}: the last row in Fig. \ref{CTMC}, i.e., when $ n=N $. We again have the following three cases:
\begin{itemize}
	\item for $ n=N $ and $ b =0 $, i.e., the state $ (N,0) $,
	\begin{align}
	\pi_{N,0}\mu\min\{B,C\}=\pi_{N-1,0}\lambda,
	\end{align}
	
	\item for $ n=N $ and $ b = B $, i.e., the state $ (N,B) $,
	\begin{align}
	\pi_{N,B}\nu\min\{N,B,S\}=\pi_{N-1,B}\lambda+\pi_{N,B-1}\mu,
	\end{align}
	
	\item for $ n=N $ and $ 1\leq b \leq B-1 $, we have
	\begin{align}
	\pi_{N,b}\big(\mu\min\{B-b,C\}+\nu\min\{N,b,S\}\big)
	=  \pi_{N-1,b}\lambda+ \pi_{N,b-1}\mu\min\{B-b+1,C\}.
	\end{align}
\end{itemize}

\textbf{Region 3}: the middle of the leftmost column in Fig. \ref{CTMC}, i.e., for $ b=0 $ and $ 1\leq n \leq N-1 $:
\begin{equation}
\pi_{n,0}\Big(\mu\min\{B,C\}+\lambda\Big) =\pi_{n-1,0}\lambda+\pi_{n+1,1}\nu.
\label{leftmost_column}
\end{equation}

\textbf{Region 4}: the middle of the rightmost column in Fig. \ref{CTMC}, i.e., for $ b=B $  and $ 1\leq n \leq N-1 $:
\begin{equation}
\pi_{n,B}\big(\nu\min\{N,B,S\}+\lambda\big) =\pi_{n-1,B}\lambda+ \pi_{n,B-1}\mu.
\label{rightmost_column}
\end{equation}

\textbf{Region 5}: the middle of the whole transition diagram in Fig. \ref{CTMC}, i.e., for $ 1\leq n \leq N-1 $ and $ 1\leq b\leq B-1 $:
\begin{align}
\nonumber
  &\pi_{n,b}\big(\mu\min\{B-b,C\}+\lambda+\nu\min\{n,b,S\}\big)\\
= &\pi_{n,b-1}\mu\min\{B-b+1,C\} + \pi_{n-1,b}\lambda + \pi_{n+1,b+1}\nu\min\{n+1,b+1,S\}.
\label{middle_whole_digram}
\end{align}

By organizing all the above transition equations into a matrix form, we have the following linear systems:
\begin{equation}
\boldsymbol{\pi}\mathbf{Q}=\mathbf{0}, \mathrm{\ and\ } \boldsymbol{\pi}\mathbf{e}=1, 
\label{finite_linear_system}
\end{equation}
where $ \boldsymbol{\pi}=[\boldsymbol{\pi}_0,\boldsymbol{\pi}_1,\cdots,\boldsymbol{\pi}_B] $ with $ \boldsymbol{\pi}_b=[\pi_{0,b},\pi_{1,b},\cdots,\pi_{N,b}] $, $\forall b \in\{0,1,\cdots,B\}$, and $ \mathbf{e} $ is an $(N+1)(B+1)\times 1$ column vector all of whose entries are 1s. $ \mathbf{Q} $ is the infinitesimal generator matrix or the transition rate matrix, given as follows:
\begin{align}
\mathbf{Q} = \left[
\begin{array}{llllllllllll}
\mathbf{L}_{\mathrm{00}} & \mathbf{F}_{\mathrm{01}} & \mathbf{0} & \cdots\\
\mathbf{D}_{\mathrm{10}} & \mathbf{L} & \mathbf{F}& \mathbf{0} & \cdots\\
\mathbf{0}& \mathbf{D} & \mathbf{L} & \mathbf{F}& \mathbf{0} & \cdots\\
& \ddots  & \ddots & \ddots & \ddots\\
& \cdots& \mathbf{0} & \mathbf{D} & \mathbf{L} & \mathbf{F}^{\mathrm{N}}_{\mathrm{10}}\\
& & \cdots & \mathbf{0} & \mathbf{D}^{\mathrm{N}}_{\mathrm{01}} & \mathbf{L}^{\mathrm{N}}_{\mathrm{00}}\\
\end{array}
\right],
\label{Q_matrix}
\end{align}
where $\mathbf{L}_{\mathrm{00}}$ is an $ (N+1)S\times(N+1)S $ matrix corresponding to $ \boldsymbol{\pi}_{0:S-1}\triangleq [ \boldsymbol{\pi}_0,\cdots,\boldsymbol{\pi}_{S-1}] $, and $\mathbf{L}^{\mathrm{N}}_{\text{00}}$ is an $ (N+1)(C+1)\times(N+1)(C+1) $ matrix corresponding to $\boldsymbol{\pi}_{B-C:B} \triangleq [ \boldsymbol{\pi}_{B-C},\cdots,\boldsymbol{\pi}_{B}] $.  Additionally, matrices $ \mathbf{F} $, $ \mathbf{L} $, and $ \mathbf{D} $ are respectively given by
\begin{align}
\mathbf{F} = \left[
\begin{array}{llll}
C\mu &  \\
& C\mu & \\
&     & \ddots & \\
&     &        & C\mu\\
\end{array}
\right],
\mathbf{L} = \left[
\begin{array}{llll}
m_0 & \lambda \\
& m_1 & \ddots\\
&     & \ddots & \lambda\\
&     &        & m_N\\
\end{array}
\right],
\mathbf{D} = \left[
\begin{array}{llcl}
0   &    \\
d_1 & \ddots  \\
& \ddots & 0\\
& & d_N & 0\\
\end{array}
\right],
\label{matrices_F_L_D}
\end{align}
where $m_n = -(\mathbb{I}_{(n\not=N)}\lambda + C\mu + \nu\min\{n,S\})$, $ \forall n = \{0, 1, \dots, N\}$, and $d_n = \min\{n,S\}\nu$, $\forall n = \{1, 2, \dots, N\}$. Note that these three matrices are all of $ (N+1)\times(N+1) $. For brevity, we skip the details of matrices $\mathbf{L}_{\text{00}}, \mathbf{F}_{\text{01}}, \mathbf{D}_{\text{10}},\mathbf{L}^{\mathrm{N}}_{\text{00}}, \mathbf{D}^{\mathrm{N}}_{\text{01}}$, and $ \mathbf{F}^{\mathrm{N}}_{\text{10}}$, whose entries can be found by the balance equations listed in \eqref{first_case_first_region}-\eqref{middle_whole_digram}. 

From Fig. \ref{CTMC} we can see that the finite-state CTMC defined by $ \mathbf{Q} $  is ergodic (i.e., irreducible and positive recurrent), which means that the finite linear system \eqref{finite_linear_system} has a unique solution. Note that we have $ (N+1)(B+1)+1  $ equations but only $ (N+1)(B+1) $ variables in \eqref{finite_linear_system}. Therefore, one of the equations contained in matrix $ \mathbf{Q} $ should be eliminated  in order to obtain the unique steady-state distribution $ \boldsymbol{\pi} $ (e.g., by eliminating the first column of $ \mathbf{Q}  $).

\subsection{Performance Metric: Blocking Probability}
The blocking probability is a classical performance metric, which measures the EVs' probability of being blocked from joining the open EV-queue in our context. Based on the steady-state distribution and the PASTA property, the blocking probability can be expressed as a function of $ N,S,C $, and $ B $ as follows\footnote{Note that $ V $ can be easily calculated based on $ N $ and $ S $, we thus use $ N, S, C $, and $ B $ to represent the four planning parameters.}:
	\begin{equation}
	\mathbb{P}_{\mathrm{MQN}}(N,S,C,B)=\sum_{b=0}^{B}\pi_{N,b}.
	\label{prob_service_failure}
	\end{equation}	
Many other performance metrics can be defined once the steady-state distribution is obtained. For ease of presentation, our theoretical analysis will focus on the blocking probability, but our numerical simulation  in Section \ref{sec_simulation_results} will show other performance metrics  as well.

\section{Asymptotic Analysis Based on Multi-Stage Capacity Planning}
\label{QoS_Multi_Stage_Capacity_Planning}
In this section, we present the concept of multi-stage capacity planning and characterize the asymptotic behavior of the blocking probability when the number of batteries approaches infinity.

\subsection{A Three-Stage Scheme for Studying the Impact of $ N,S,C, $ and $ B $}
In practice, the four parameters are by nature in different planning timescales. Specifically, i) the capacity of the open EV-queue $ N $ and  the number of SSs $ S $ are primarily constrained by the area of the land and are difficult to change once fixed; ii) the number of CSs $ C $ is mainly constrained by the power transmission capacity from the grid to the BSCS, and is relatively more flexible to change than  $ N $ and $ S $. However, unlike the previous three parameters, the number of batteries $ B $ is very flexible to change during the operation of the BSCS. Therefore, to have a reasonable capacity planning of the BSCS, we propose a three-stage scheme for studying the impact of the four parameters, which includes the first stage (Stage-I) for studying the impact of $ N $ and $ S $, the second stage (Stage-II) for studying the impact of $ C $, and the third stage (Stage-III) for studying the impact of $ B $. 

The main advantage of the three-stage study is the potential of better exploiting different levels of flexibilities among the four parameters to yield an optimal multi-timescale capacity planning solution. However, this paper focuses on only quantifying how these four parameters influence the blocking probability. Therefore, finding a multi-timescale capacity planning solution that optimizes a specifically-defined objective is beyond the scope of this paper. Below, we will show how these four parameters influence the blocking probability in their corresponding stages.

\subsection{Stage-I: Determining the $ (N,S) $-Limiting Lower Bound by $ N $ and $ S $}
\label{m_m_s_n_queue}
If we assume that there always exist enough FBs in the FB-queue, then the open EV-queue can be separated from the MQN and works as an independent $ M/M/S/N $ queue\footnote{The arrivals of EVs and the swapping time of SSs follow the same distributions as the original open EV-queue.}. In this case, the blocking probability of this $ M/M/S/N $ queue serves as the lower bound for the blocking probability of the MQN, no matter how many CSs and batteries are used in the closed battery-queue.


Since the  $ M/M/S/N $ queue only depends on $ N $ and $ S $ ($ \lambda $ and $ \nu $ are known constants), we thus denote its blocking probability by $ \mathbb{P}_{\mathrm{nslb}}(N,S) $, where the subscript represents the term ``$ (N,S) $-limiting lower bound". It is known that $ \mathbb{P}_{\mathrm{nslb}}(N,S) $ is given by
\begin{equation}
\mathbb{P}_{\mathrm{nslb}}(N,S)= \frac{1}{S^{N-S}S!}\Big(\frac{\lambda}{\nu}\Big)^Np_0,
\label{P_best_N_S}
\end{equation} 
where $ p_0 $ denotes the stationary distribution of having no customer in the $ M/M/S/N $ queue. Note that $ p_0 $ can be calculated as follows:
\begin{align}
p_0 = \left[\sum\limits_{n=0}^S \frac{\lambda^n}{\nu^n n!} + \frac{\lambda^S}{\nu^SS!}\sum\limits_{n=S+1}^N \frac{\lambda^{n-S}}{\nu^{n-S} S^{n-S}}\right]^{-1}.
\label{z_0}
\end{align}

\begin{figure}
	\centering
	\includegraphics[width=8cm] {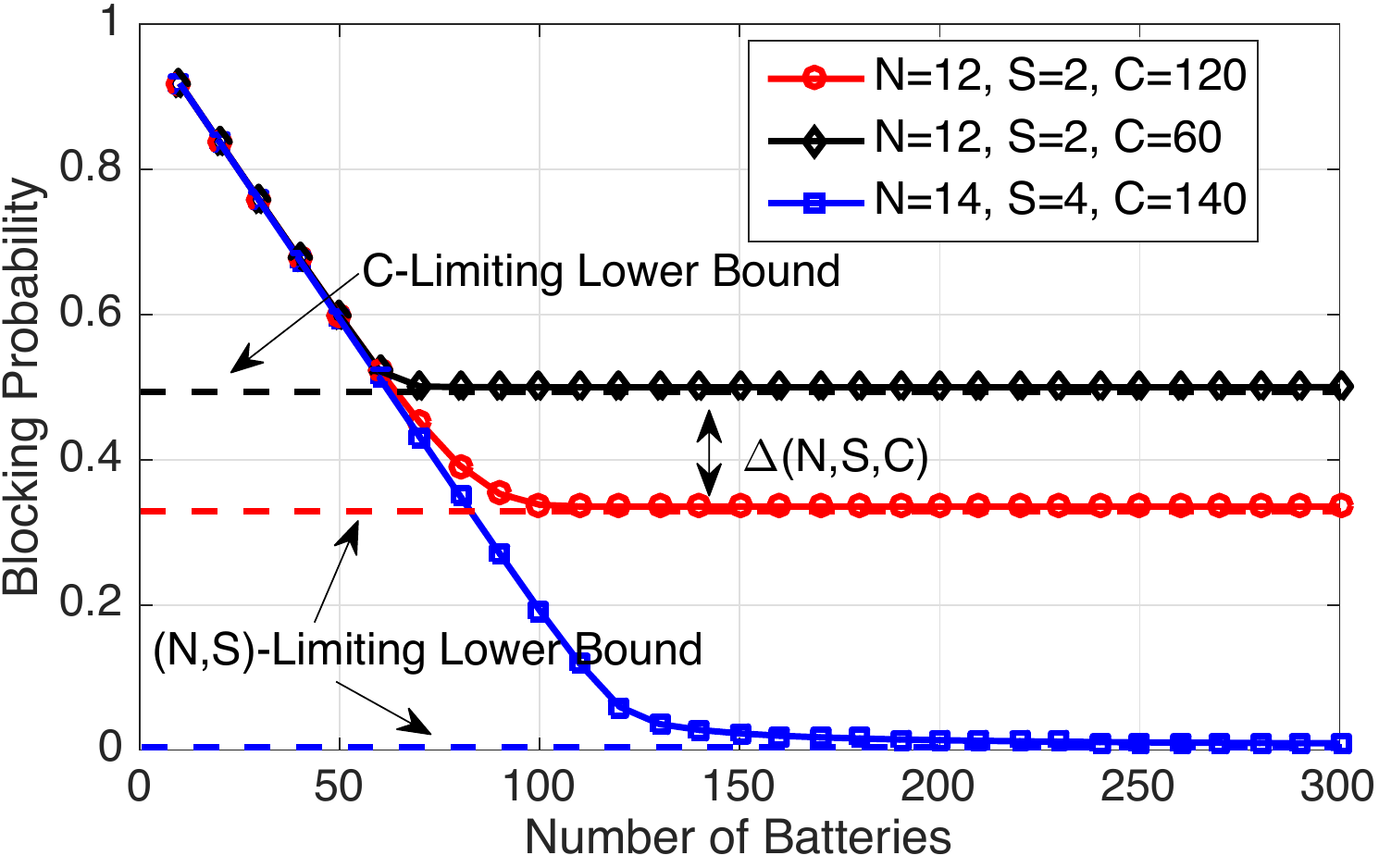}
	\caption{Convergence of the blocking probability  when $ \lambda = 1/30, \nu = 1/90 $, and $ \mu = 1/3600 $.}
	\label{quick_saturation_PoB}
\end{figure}

It is intuitive that the blocking probability of the MQN will be non-increasing when $ C $ and $ B $ increases in Stage-II and Stage-III, respectively. For instance,  as shown by the curve with circles in Fig. \ref{quick_saturation_PoB}, the blocking probability is non-increasing in $ B $ and finally converges to the $(N,S)$-limiting lower bound when $ B $ is sufficiently large, as shown by the middle dashed horizontal curve in Fig. \ref{quick_saturation_PoB}. Note that if $ N $ and $ S $ are not properly designed, then it is possible that even $ \mathbb{P}_{\mathrm{nslb}}(N,S) $ is still too high. For instance, the curve with circles in Fig. \ref{quick_saturation_PoB} depicts that the blocking probability quickly converges to $ \mathbb{P}_{\mathrm{nslb}}(12,2) = 0.336$ when $ N=12, S=2 $, and $ C=120 $, which means that over one third of the swapping requests will be blocked. Note that i) this $(N,S)$-limiting lower bound cannot be reduced in Stage-II and Stage-III by $ C $ and $ B $, and ii) it is difficult to change  $ N $ and $ S $ once they are fixed. Therefore, it is important to design appropriate values of $ N $ and $ S $ in Stage-I to facilitate a good blocking probability performance in Stage-II and Stage-III. For instance, the curve with squares in Fig. \ref{quick_saturation_PoB} illustrates the blocking probability with $ N=14,  S=4 $, and $ C=140 $. It is shown that the $(N,S)$-limiting lower bound of the latter design is $ \mathbb{P}_{\mathrm{nslb}}(14,4) = 0.083 $, which means that the blocking probability can be reduced to a much smaller value as long as $ B $ is sufficiently large.



In Fig. \ref{quick_saturation_PoB}, the curve with diamonds shows the blocking probability when $ N=12, S=2 $, and $ C=60 $. An important observation in this case is that, the $(N,S)$-limiting lower bound $ \mathbb{P}_{\mathrm{nslb}}(12,2)$ is not achievable even if $ B $ is sufficiently large. Instead, the blocking probability converges to another lower bound (i.e., the top horizontal dashed line in Fig. \ref{quick_saturation_PoB}) that is larger than the $(N,S)$-limiting lower bound by a certain gap $ \Delta(N,S,C) $. To avoid confusion between these two lower bounds, we name the new lower bound as the \textit{$ C $-limiting lower bound} (expression will be derived in \eqref{clm_asymptotic_convergence}). To facilitate a lower blocking probability in Stage-II and Stage-III, it is therefore important to understand when the $(N,S)$-limiting lower bound  is achievable/non-achievable with the given values of $ N $ and $ S $. Moreover, if the $(N,S)$-limiting lower bound is non-achievable, then it is also of practical importance to quantify the gap $ \Delta(N,S,C) $ between the two lower bounds.  The following two subsections are particularly motivated to answer these two questions. 

\subsection{Stage-II: Two Different Operating  Modes Dependent on the Value of $ C $}
\label{Achievable_Lower_Bound_of_the_QoS}
Note that in the MQN, the \textit{maximum input rate} of the FB-queue, or alternatively, the \textit{maximum output rate} of the DB-queue, is $ C\mu $. Meanwhile,  $\lambda\big(1-\mathbb{P}_{\mathrm{nslb}}(N,S)\big)$ denotes the \textit{maximum input rate} of the DB-queue, or alternatively, the \textit{maximum output rate} of the FB-queue. Once $ N $ and $ S $ are determined in Stage-I, $ \mathbb{P}_{\mathrm{nslb}}(N,S) $ can then be determined by \eqref{P_best_N_S}. Based on $ \mathbb{P}_{\mathrm{nslb}}(N,S), \lambda $, and $ \mu $, we can define two different operating modes distinguished by the value of $ C $ as follows:
\begin{itemize}[leftmargin=*]
	\item If $ C\leq \lfloor \lambda\big(1-\mathbb{P}_{\mathrm{nslb}}(N,S)\big)/\mu\rfloor $, the maximum output rate of the DB-queue is less than or equal to the maximum input rate of the DB-queue. Therefore, the charging service is the bottleneck of the BSCS, and we thus name this region as the \textit{charging-limiting mode}. 
	
	\item If $ C\geq \lceil \lambda\big(1-\mathbb{P}_{\mathrm{nslb}}(N,S)\big)/\mu\rceil $, then the maximum output rate of the FB-queue is less than or equal to the maximum input rate of the FB-queue. Therefore, the swapping service is the bottleneck of the BSCS, and we thus name this region as the \textit{swapping-limiting mode}. 
\end{itemize}
Note that we use $ \lfloor x \rfloor $ to denote the maximum integer that is no larger than $ x $, and use $ \lceil x \rceil $ to denote the minimum integer that is no less than $ x $. Since $ \lambda\big(1-\mathbb{P}_{\mathrm{nslb}}(N,S)\big)/\mu $ is an analytical threshold for $ C $ that determines the above two operating modes, we thus name this threshold as the \textit{$ C $-limiting threshold}. As  we will show in the next subsection, the $ C $-limiting threshold as well as the above two operating modes will directly influence the asymptotic convergence of the blocking probability in Stage-III.


%



\subsection{Stage-III: Asymptotic Convergence When $ B $ Approaches Infinity}
\label{asymptotic_properties_in_stage_III}
Recall that the number of batteries is flexible to change during the operation of the BSCS, it is therefore of practical importance to quantify the asymptotic behavior of the blocking probability when $ B $ becomes sufficiently large. Specifically, we have the following Theorem \ref{asymptotic_property_PoB} to demonstrate the asymptotic convergence of the blocking probability.

\begin{theorem} 
	Given $ N $ and $ S $ in Stage-I,  the $ C $-limiting threshold $ \lambda\big(1-\mathbb{P}_{\mathrm{nslb}}(N,S)\big)/\mu $  distinguishes the asymptotic performance of the blocking probability into the following two cases:
	\begin{itemize}[leftmargin=*]
		\item If $ C\leq \lfloor \lambda\big(1-\mathbb{P}_{\mathrm{nslb}}(N,S)\big)/\mu\rfloor $, i.e.,  the charging-limiting mode is active, then the blocking probability asymptotically converges to the $ C $-limiting lower bound when $ B $ approaches infinity. Mathematically, we have 
		\begin{align}
		\lim\limits_{B\rightarrow\infty}\mathbb{P}_{\mathrm{MQN}}(N,S,C,B) = 1-C\mu/\lambda \triangleq \mathbb{P}_{\mathrm{clb}}(C),
		\label{clm_asymptotic_convergence}
		\end{align}
		where $ \mathbb{P}_{\mathrm{clb}}(C) $ denotes the $ C $-limiting lower bound.
		\item If $C\geq \lceil \lambda\big(1-\mathbb{P}_{\mathrm{nslb}}(N,S)\big)/\mu\rceil $, i.e.,  the swapping-limiting mode is active, the the blocking probability asymptotically converges to the $(N,S)$-limiting lower bound when $ B $ approaches infinity. Mathematically, we have 
		\begin{align}
		\lim\limits_{B\rightarrow\infty}\mathbb{P}_{\mathrm{MQN}}(N,S,C,B) = \mathbb{P}_{\mathrm{nslb}}(N,S) = \frac{1}{S^{N-S}S!}\Big(\frac{\lambda}{\nu}\Big)^Np_0,
		\label{slm_asymptotic_convergence}
		\end{align}
		where $ p_0 $ is given by \eqref{z_0}. 
	\end{itemize}
	\label{asymptotic_property_PoB}
\end{theorem}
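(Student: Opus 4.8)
The plan is to reduce the theorem to a single statement about how the stationary mass of the depleted-battery count $j:=B-b$ behaves near $0$ (equivalently, how the FB-count $b$ behaves near $B$) as $B\to\infty$, and then to identify each limit by flow conservation applied to an appropriate infinite-state limiting chain. \emph{Step~1 (an exact flow identity).} Summing the balance equations \eqref{first_case_first_region}--\eqref{middle_whole_digram} across the cut $\{n\le m\}$ versus $\{n\ge m+1\}$ and telescoping over $m=0,\dots,N-1$ shows the EV throughput equals the swap rate, $\lambda\bigl(1-\mathbb{P}_{\mathrm{MQN}}(N,S,C,B)\bigr)=\nu\,\mathbb{E}[\min\{n,b,S\}]$; balancing the DB$\to$FB transitions against the FB$\to$DB transitions of the closed battery-queue gives $\nu\,\mathbb{E}[\min\{n,b,S\}]=\mu\,\mathbb{E}[\min\{B-b,C\}]$. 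Writing $\min\{B-b,C\}=C-(b-(B-C))^{+}$ and combining, one obtains the exact identity
\[
\mathbb{P}_{\mathrm{MQN}}(N,S,C,B)=1-\frac{C\mu}{\lambda}+\frac{\mu}{\lambda}\sum_{m=B-C+1}^{B}\mathbb{P}_{B}\bigl(b\ge m\bigr),
\]
where $\mathbb{P}_{B}$ denotes the stationary law of the $B$-battery chain; thus the theorem is equivalent to computing $\lim_{B\to\infty}\mathbb{E}_{B}[(C-j)^{+}]$.

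\emph{Step~2 (limiting chains and an ergodicity dichotomy).} ``Open up'' the battery loop in two ways. Chain~(a): track $(n,j)$, $j\in\{0,1,\dots\}$, with the FB-queue never binding, so swaps occur at rate $\nu\min\{n,S\}$ and charging at rate $\mu\min\{j,C\}$; then $n$ evolves as an autonomous $M/M/S/N$ chain, so this is an $M/M/C$-type queue in $j$ modulated by $n$, with mean level-drift at large $j$ equal to $\lambda\bigl(1-\mathbb{P}_{\mathrm{nslb}}(N,S)\bigr)-C\mu$; it is positive recurrent iff $C\mu>\lambda(1-\mathbb{P}_{\mathrm{nslb}}(N,S))$. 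Chain~(b): track $(n,b)$, $b\in\{0,1,\dots\}$, with charging always at full rate $C\mu$; once $b\ge S$ the phase $n$ again behaves as $M/M/S/N$, so the mean level-drift at large $b$ is $C\mu-\lambda(1-\mathbb{P}_{\mathrm{nslb}}(N,S))$, and chain~(b) is positive recurrent iff $C\mu<\lambda(1-\mathbb{P}_{\mathrm{nslb}}(N,S))$. Since $C$ is integer-valued, exactly one of chains~(a),(b) is positive recurrent unless $\lambda(1-\mathbb{P}_{\mathrm{nslb}}(N,S))/\mu$ equals the integer $C$ (the critical case), in which situation both are null recurrent and the two asserted limits coincide; these alternatives are precisely the swapping- and charging-limiting modes. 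The drift criteria are made rigorous by Foster--Lyapunov with test functions of the form $j+w(n)$, resp.\ $b+w(n)$, for a suitable bounded phase correction $w$.

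\emph{Step~3 (convergence of the stationary laws).} In the swapping-limiting case, reparametrise the $B$-battery chain by $(n,j)$, $j\in\{0,\dots,B\}$: for $j$ in any fixed bounded range its rates agree with those of chain~(a) once $B$ is large, the only discrepancy being the reflecting boundary at $j=B$, which recedes. The test function of Step~2 yields a drift bound $\le-\varepsilon$ outside a $B$-independent finite set, hence tightness of $j$ uniform in $B$ and weak convergence $\mathbb{P}_{B}\Rightarrow\pi_{\infty}^{(a)}$; since $n$ is autonomous $M/M/S/N$ under $\pi_{\infty}^{(a)}$, we get $\lim_{B\to\infty}\mathbb{P}_{\mathrm{MQN}}=\pi_{\infty}^{(a)}(n=N)=\mathbb{P}_{\mathrm{nslb}}(N,S)$, which is \eqref{slm_asymptotic_convergence}. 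In the charging-limiting case, argue with chain~(b): uniform tightness of $b$ gives $\mathbb{P}_{B}(b\ge B-C+1)\to0$, so the $C$-term correction sum of Step~1 vanishes and $\lim_{B\to\infty}\mathbb{P}_{\mathrm{MQN}}=1-C\mu/\lambda=\mathbb{P}_{\mathrm{clb}}(C)$, which is \eqref{clm_asymptotic_convergence} (equivalently, $\mathbb{P}_{B}\Rightarrow\pi_{\infty}^{(b)}$ and $\pi_{\infty}^{(b)}(n=N)=1-C\mu/\lambda$ by the flow identity applied to chain~(b)). In the critical case both limiting chains are null recurrent, so the level variable escapes to $+\infty$ in probability, $\mathbb{E}_{B}[(C-j)^{+}]\to0$, and $\lim_{B\to\infty}\mathbb{P}_{\mathrm{MQN}}=1-C\mu/\lambda=\mathbb{P}_{\mathrm{nslb}}(N,S)$ is recovered.

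\emph{Main obstacle.} Step~1 is routine bookkeeping; the real work is Steps~2--3, namely the ergodicity dichotomy and its strengthening to uniform-in-$B$ tightness, which is exactly what lets the finite-$B$ stationary laws converge to the limiting-chain laws (a truncation-convergence statement). The delicate point is that the level-drift of $j$ (resp.\ $b$) is modulated by the finite phase $n$ and is not of one sign pointwise --- only its average against the phase's stationary law is --- so a one-dimensional Lyapunov function does not suffice; one must add the bounded correction $w(n)$ and verify that the combined drift is negative outside a set whose size does not grow with $B$. The critical parameter value, at which neither limiting chain is ergodic, additionally requires a ``mass escaping to infinity'' argument in place of weak convergence.
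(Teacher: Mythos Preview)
Your proposal is correct and lands on the same two limiting chains as the paper --- your chain~(a) is the paper's \emph{EV-DB queue} and your chain~(b) its \emph{EV-FB queue} --- but the methodology differs in two places. First, the paper never writes your exact flow identity $\mathbb{P}_{\mathrm{MQN}}=1-C\mu/\lambda+\tfrac{\mu}{\lambda}\mathbb{E}[(C-j)^{+}]$; instead it argues separately, for each mode, that the finite-$B$ chain converges to the appropriate limiting chain and then reads off the blocking probability there (for chain~(b) via the same conservation $\lambda(1-\mathbb{P})=C\mu$ you use). Your identity is a genuine simplification: it reduces both cases to a single tightness question about the level variable and makes the charging-limiting conclusion immediate once $\mathbb{P}_{B}(b\ge B-C+1)\to0$. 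Second, the paper proves the ergodicity dichotomy by invoking Neuts' matrix-geometric criterion (Theorem~3.1.1 of \cite{matrix_geometric_method}) for level-independent QBDs, which packages exactly the phase-averaged drift condition you obtain via Foster--Lyapunov with the corrected test function $j+w(n)$ (resp.\ $b+w(n)$); your route is more elementary and, importantly, delivers uniform-in-$B$ drift and hence tightness for free, whereas the paper's convergence step (its Lemma~3) is comparatively informal --- it argues that one of the two battery sub-queues must blow up, so the other stabilises and the finite-$B$ chain ``converges'' to the corresponding infinite chain, then computes $\phi$ and $\psi$ via the matrix-geometric solution. Finally, you treat the critical case $C\mu=\lambda(1-\mathbb{P}_{\mathrm{nslb}})$ explicitly (both limiting chains null recurrent, mass escapes, the two asserted limits coincide); the paper's statement and proof are silent on this boundary, and its Lemma~2 as written needs the strict inequality from Neuts' criterion, so your handling here is cleaner.
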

\begin{proof}
	Note that the $ C $-limiting threshold $ \lambda\big(1-\mathbb{P}_{\mathrm{nslb}}(N,S)\big)/\mu $  is purely determined by $ N $ and $ S $ in Stage-I, and works only for $ C $ in Stage-II when $ B $ approaches infinity in Stage-III. Therefore, our proposed multi-stage capacity planning concept not only follows the practice, but also has a clear mathematical interpretation. The proof of this theorem is constructive but requires a lot of space, we thus present the entire proof in the next section, i.e., Section \ref{proof_of_theorem_1}.
\end{proof}

An interesting result established by  \eqref{clm_asymptotic_convergence} is that,  the $C$-limiting lower bound for the blocking probability when $ B $ approaches infinity, i.e., $ \mathbb{P}_{\mathrm{clb}}(C) $,  is purely determined by $ C $ and is independent of $ N $ and $ S $. Therefore, $ N $ and $ S $ cannot \textit{directly} influence the best performance of the MQN once it is operating in the charging-limiting mode. However, we cannot say that $ N $ and $ S $ have no impact on the best performance of the MQN since $ \mathbb{P}_{\mathrm{nslb}}(N,S) $ that defines the $ C $-limiting threshold depends on $ N $ and $ S $. 

Another interesting observation is that, $\mathbb{P}_{\mathrm{clb}}(C)$ is always larger than $ \mathbb{P}_{\mathrm{nslb}}(N,S) $ when the charging-limiting mode is active.  Therefore, there must be a non-zero probability that some EVs not only need to wait for the EVs in front of them, but also need to wait for FBs (i.e., two types of waiting). In contrast, we can see from \eqref{slm_asymptotic_convergence} that the blocking probability of the MQN is equal to that of the $ M/M/S/N $ queue when $ B $ approaches infinity. This is equivalent to saying that when $ C\geq \lceil \lambda\big(1-\mathbb{P}_{\mathrm{nslb}}(N,S)\big)/\mu\rceil $, it is with probability 1 that there exist enough FBs for the swapping-service when $ B $ approaches infinity. Therefore, when $ B $ approaches infinity, it is with probability 1 that there exists only one type of waiting for the EVs in the swapping-limiting mode,  which is more appealing in practice.

Theorem \ref{asymptotic_property_PoB} also shows that the gap between the $C$-limiting lower bound and the $(N,S)$-limiting lower bound (i.e., $ \Delta(N,S,C) $ in Fig. \ref{quick_saturation_PoB})  can be calculated  as
\begin{align}
\Delta(N,S,C) = \max\Big\{1-C\mu/\lambda- \mathbb{P}_{\mathrm{nslb}}(N,S),0\Big\},
\label{uniform_Delta}
\end{align}
where the `$ \max $' operator guarantees that only one of the two terms in the bracket is active. 
Specifically, when $ \Delta(N,S,C) > 0 $, the first term is active, and we say that the $(N,S)$-limiting lower bound is non-achievable. In this case, the charging-limiting mode is active because $ C $ is too small and below the $ C $-limiting threshold, and the $(N,S)$-limiting lower bound cannot be achieved  by any value of $ B $. In contrast,  when $ \Delta(N,S,C) = 0 $, the second term is active, and we say that the $(N,S)$-limiting lower bound is achievable since $ C $ is above the threshold (i.e., swapping-limiting mode is active). In this case, the blocking probability will asymptotically converge to the $(N,S)$-limiting lower bound when the number of batteries approaches infinity.



\begin{remark}
	Our simulation results show that the above asymptotic properties will ``almost" hold as long as $ B $ is ``slightly" larger than $ C $. For instance, the curve with diamonds in Fig. \ref{quick_saturation_PoB} shows that as long as $ B> C=60 $, the blocking probability quickly converges to the $C$-limiting lower bound $\mathbb{P}_{\mathrm{clb}}(C) = 0.5 $. This quick convergence phenomenon demonstrates that the lower bounds derived in Theorem \ref{asymptotic_property_PoB}  is very useful in practice, since $ B $ is not required to be too large.  Meanwhile, this quick convergence phenomenon also demonstrates that our assumption of the infinite buffer sizes for the DB-queue and the FB-queue is amenable.
\end{remark}



\section{Proof of Theorem \ref{asymptotic_property_PoB}}
\label{proof_of_theorem_1}
This section sketches the proof of Theorem \ref{asymptotic_property_PoB}. Our proof consists of four steps. Step 1 is presented in Subsection \ref{two_sub_queueing_networks}, where we define two auxiliary queueing networks based on the original MQN, i.e., the EV-FB queue and the EV-DB queue. Step 2 is further separated into Step 2(a) in Subsection \ref{transition_rate_matrix_EVFB_queue} and Step 2(b) in Subsection \ref{transition_rate_matrix_EVDB_queue}, where we derive the balance equations for the EV-FB queue  and the EV-DB queue, respectively.  Step 3 is presented in Subsection \ref{neccessary_and_sufficient_conditions}, where we prove that the transition rate matrices of the EV-FB queue and the EV-DB queue are ergodic in the charging-limiting mode and the swapping-limiting mode, respectively. Meanwhile, we show that the blocking probabilities of these two queueing networks are respectively $ \mathbb{P}_{\mathrm{clb}}(C)  $ and $ \mathbb{P}_{\mathrm{nslb}}(N,S) $. The last step (i.e., Step 4) is presented in Subsection \ref{final_proof_of_asymptotic_properties}, where we show that the MQN asymptotically converges to the EV-FB queue (the EV-DB queue) if the charging-limiting mode (the swapping-limiting mode) is active. We thus prove the correctness of \eqref{clm_asymptotic_convergence} and \eqref{slm_asymptotic_convergence} in Theorem \ref{asymptotic_property_PoB}.

\subsection{Step 1: Definitions of Two Sub-Queueing Networks}
\label{two_sub_queueing_networks}
To show how the two operating modes can facilitate the demonstration of the asymptotic properties of the MQN, we introduce the following two queueing networks.

\begin{definition}[\textbf{EV-FB Queue}]
	An EV-FB queue is a sub-queueing network of the original MQN that consists of only the open EV-queue and the FB-queue. In the EV-FB queue, the EVs' arrivals and the FBs' arrivals follow a Poisson process with rate $ \lambda $ and $ C\mu $, respectively.  
\end{definition}

\begin{definition}[\textbf{EV-DB Queue}]
	An EV-DB queue is a sub-queueing network of the original MQN that consists of only the open EV-queue  and the DB-queue. In the EV-DB queue, the EVs' arrivals follow a Poisson process with rate $ \lambda $.
\end{definition}

Basically, the EV-FB queue is the remaining part of the MQN after removing the DB-queue, while the EV-DB queue is the remaining part of the MQN after removing the FB-queue. Based on the definitions, the EVs' arrivals of the MQN and these two newly defined queues all follow the same Poisson process with rate $ \lambda $. Meanwhile, we assume that the swapping rate $ \nu $ and the charging rate $ \mu $ are respectively the same among these three queues. Therefore, both the EV-FB queue and the EV-DB queue will be uniquely determined by parameters $ N, S $, and $ C $. In the next two subsections, we will derive the transition rate matrices for the EV-FB queue and the EV-DB queue, respectively.

\begin{figure}
	\centering
	\subfigure[CTMC for the EV-FB queue.]{
		\begin{tikzpicture}[scale=0.6, transform shape, >=latex]
		\path
		(-6,4) node (00)[shape=circle,draw] { 0, 0 }
		(-4,4) node (01)[shape=circle,draw] { 0, 1 }
		(-2,4) node (02)[shape=circle,draw] { 0, 2 }
		( 0,4) node (03)[shape=circle,draw] { 0, 3 }	
		( 2,4) node (04)[shape=circle,draw] { 0, 4 }
		
		(-6,2) node (10)[shape=circle,draw] { 1, 0 }
		(-4,2) node (11)[shape=circle,draw] { 1, 1 }
		(-2,2) node (12)[shape=circle,draw] { 1, 2 }
		( 0,2) node (13)[shape=circle,draw] { 1, 3 }	
		( 2,2) node (14)[shape=circle,draw] { 1, 4 }
		
		(-6,0) node (20)[shape=circle,draw] { 2, 0 }
		(-4,0) node (21)[shape=circle,draw] { 2, 1 }
		(-2,0) node (22)[shape=circle,draw] { 2, 2 }
		( 0,0) node (23)[shape=circle,draw] { 2, 3 }	
		( 2,0) node (24)[shape=circle,draw] { 2, 4 }
		
		(-6,-4) node (n0)[shape=circle,draw] { N,0 }
		(-4,-4) node (n1)[shape=circle,draw] { N,1 }
		(-2,-4) node (n2)[shape=circle,draw] { N,2 }
		( 0,-4) node (n3)[shape=circle,draw] { N,3 }	
		( 2,-4) node (n4)[shape=circle,draw] { N,4 };
		
		\draw (4,4) node[]{{\Huge $ \cdots $}};
		\draw (4,2) node[]{{\Huge $ \cdots $}};
		\draw (4,0) node[]{{\Huge $ \cdots $}};
		\draw (4,-2) node[]{{\Huge $ \cdots $}};
		\draw (4,-4) node[]{{\Huge $ \cdots $}};
		
		\draw (-6,-2) node[]{{\LARGE $ \vdots $}};
		\draw (-4,-2) node[]{{\LARGE $ \vdots $}};
		\draw (-2,-2) node[]{{\LARGE $ \vdots $}};
		\draw ( 0,-2) node[]{{\LARGE $ \vdots $}};
		\draw ( 2,-2) node[]{{\LARGE $ \vdots $}};
		\draw[->] (00) to[bend left] node[below]{$ C\mu$ }(01);
		\draw[->] (01) to[bend left] node[below]{$ C\mu$ }(02);
		\draw[->] (02) to[bend left] node[below]{$ C\mu$ }(03);
		\draw[->] (03) to[bend left] node[below]{$ C\mu$ }(04);
		\draw[->] (04) to[bend left] node[below]{$ C\mu$ }(3.5,4.2);

		\draw[->] (10) to[bend left] node[below]{$ C\mu$ }(11);
		\draw[->] (11) to[bend left] node[below]{$ C\mu$ }(12);
		\draw[->] (12) to[bend left] node[below]{$ C\mu$ }(13);
		\draw[->] (13) to[bend left] node[below]{$ C\mu$ }(14);
		\draw[->] (14) to[bend left] node[below]{$ C\mu$ }(3.5,2.2);
		
		\draw[->] (20) to[bend left] node[below]{$ C\mu$ }(21);
		\draw[->] (21) to[bend left] node[below]{$ C\mu$ }(22);
		\draw[->] (22) to[bend left] node[below]{$ C\mu$ }(23);
		\draw[->] (23) to[bend left] node[below]{$ C\mu$ }(24);
		\draw[->] (24) to[bend left] node[below]{$ C\mu$ }(3.5,0.2);
		
		\draw[->] (n0) to[bend left] node[below]{$ C\mu$ }(n1);
		\draw[->] (n1) to[bend left] node[below]{$ C\mu$ }(n2);
		\draw[->] (n2) to[bend left] node[below]{$ C\mu$ }(n3);
		\draw[->] (n3) to[bend left] node[below]{$ C\mu$ }(n4);
		\draw[->] (n4) to[bend left] node[below]{$ C\mu$ }(3.5,-3.8);
		\draw[->] (00) to[bend right] node[right]{$ \lambda $ }(10);
		\draw[->] (10) to[bend right] node[right]{$ \lambda $ }(20);
		\draw[->] (20) to[bend right] node[right]{$ \lambda $ }(30);
		\draw[->] (-6.2,-2.4) to[bend right] node[right]{$ \lambda $ }(n0);
		
		\draw[->] (01) to[bend right] node[right]{$ \lambda $ }(11);
		\draw[->] (11) to[bend right] node[right]{$ \lambda $ }(21);
		\draw[->] (21) to[bend right] node[right]{$ \lambda $ }(31);
		\draw[->] (-4.2,-2.4) to[bend right] node[right]{$ \lambda $ }(n1);
		
		\draw[->] (02) to[bend right] node[right]{$ \lambda $ }(12);
		\draw[->] (12)  to[bend right] node[right]{$ \lambda $ }(22);
		\draw[->] (22) to[bend right] node[right]{$ \lambda $ }(32);
		\draw[->] (-2.2,-2.4) to[bend right] node[right]{$ \lambda $ }(n2);

		\draw[->] (03) to[bend right] node[right]{$ \lambda $ }(13);
		\draw[->] (13) to[bend right] node[right]{$ \lambda $ }(23);
		\draw[->] (23) to[bend right] node[right]{$ \lambda $ }(33);
		\draw[->] (-0.2,-2.4) to[bend right] node[right]{$ \lambda $ }(n3);

		\draw[->] (04) to[bend right] node[right]{$ \lambda $ }(14);
		\draw[->] (14) to[bend right] node[right]{$ \lambda $ }(24);
		\draw[->] (24) to[bend right] node[right]{$ \lambda $ }(34);
		\draw[->] (1.8,-2.4) to[bend right] node[right]{$ \lambda $ }(n4);

		\draw[->] (11) to node[below]{$ \nu $ }(00);
		\draw[->] (12) to node[below]{$ \nu $ } (01);
		\draw[->] (13) to node[below]{$ \nu $ } (02);
		\draw[->] (14) to node[below]{$ \nu $ } (03);

		\draw[->] (21) to node[below]{$ \nu $ }(10);
		\draw[->] (22) to node[below]{$ 2\nu $ }(11);
		\draw[->] (23) to node[below]{$ 2\nu $ }(12);
		\draw[->] (24) to node[below]{$ 2\nu $ }(13);
		
		\draw[->] (n1) to node[below]{$ \nu $ }(30);
		\draw[->] (n2) to node[below]{$ 2\nu $ }(31);
		\draw[->] (n3) to node[below]{$ 3\nu $ }(32);
		\draw[->] (n4) to node[below]{$ 4\nu $ }(33);
		\end{tikzpicture}
	} 
\qquad 
	\subfigure[CTMC for the EV-DB queue.]{\begin{tikzpicture}[scale=0.6, transform shape,>=latex]
		\path
		(-6,4) node (00)[shape=circle,draw] { 0, 0 }
		(-4,4) node (01)[shape=circle,draw] { 0, 1 }
		(-2,4) node (02)[shape=circle,draw] { 0, 2 }
		( 0,4) node (03)[shape=circle,draw] { 0, 3 }	
		( 2,4) node (04)[shape=circle,draw] { 0, 4 }
		
		(-6,2) node (10)[shape=circle,draw] { 1, 0 }
		(-4,2) node (11)[shape=circle,draw] { 1, 1 }
		(-2,2) node (12)[shape=circle,draw] { 1, 2 }
		( 0,2) node (13)[shape=circle,draw] { 1, 3 }	
		( 2,2) node (14)[shape=circle,draw] { 1, 4 }
		
		(-6,0) node (20)[shape=circle,draw] { 2, 0 }
		(-4,0) node (21)[shape=circle,draw] { 2, 1 }
		(-2,0) node (22)[shape=circle,draw] { 2, 2 }
		( 0,0) node (23)[shape=circle,draw] { 2, 3 }	
		( 2,0) node (24)[shape=circle,draw] { 2, 4 }
		
		(-6,-4) node (n0)[shape=circle,draw] { N, 0 }
		(-4,-4) node (n1)[shape=circle,draw] { N, 1 }
		(-2,-4) node (n2)[shape=circle,draw] { N, 2 }
		( 0,-4) node (n3)[shape=circle,draw] { N, 3 }	
		( 2,-4) node (n4)[shape=circle,draw] { N, 4 };
		
		\draw (4,4) node[]{{\Huge $ \cdots $}};
		\draw (4,2) node[]{{\Huge $ \cdots $}};
		\draw (4,0) node[]{{\Huge $ \cdots $}};
		\draw (4,-2) node[]{{\Huge $ \cdots $}};
		\draw (4,-4) node[]{{\Huge $ \cdots $}};
		
		\draw (-6,-2) node[]{{\LARGE $ \vdots $}};
		\draw (-4,-2) node[]{{\LARGE $ \vdots $}};
		\draw (-2,-2) node[]{{\LARGE $ \vdots $}};
		\draw ( 0,-2) node[]{{\LARGE $ \vdots $}};
		\draw ( 2,-2) node[]{{\LARGE $ \vdots $}};
		\draw[<-] (00) to[bend left] node[below]{$ \mu$ }(01);
		\draw[<-] (01) to[bend left] node[below]{$ 2\mu$ }(02);
		\draw[<-] (02) to[bend left] node[below]{$ 3\mu$ }(03);
		\draw[<-] (03) to[bend left] node[below]{$ 4\mu$ }(04);
		\draw[<-] (04) to[bend left] node[below]{$ 5\mu$ }(3.5,4.2);

		\draw[<-] (10) to[bend left] node[below]{$ \mu$ }(11);
		\draw[<-] (11) to[bend left] node[below]{$ 2\mu$ }(12);
		\draw[<-] (12) to[bend left] node[below]{$ 3\mu$ }(13);
		\draw[<-] (13) to[bend left] node[below]{$ 4\mu$ }(14);
		\draw[<-] (14) to[bend left] node[below]{$ 5\mu$ }(3.5,2.2);
		
		\draw[<-] (20) to[bend left] node[below]{$ \mu$ }(21);
		\draw[<-] (21) to[bend left] node[below]{$ 2\mu$ }(22);
		\draw[<-] (22) to[bend left] node[below]{$ 3\mu$ }(23);
		\draw[<-] (23) to[bend left] node[below]{$ 4\mu$ }(24);
		\draw[<-] (24) to[bend left] node[below]{$ 5\mu$ }(3.5,0.2);
		
		\draw[<-] (n0) to[bend left] node[below]{$ \mu$ }(n1);
		\draw[<-] (n1) to[bend left] node[below]{$ 2\mu$ }(n2);
		\draw[<-] (n2) to[bend left] node[below]{$ 3\mu$ }(n3);
		\draw[<-] (n3) to[bend left] node[below]{$ 4\mu$ }(n4);
		\draw[<-] (n4) to[bend left] node[below]{$ 5\mu$ }(3.5,-3.8);
		\draw[->] (00) to[bend right] node[right]{$ \lambda $ }(10);
		\draw[->] (10) to[bend right] node[right]{$ \lambda $ }(20);
		\draw[->] (20) to[bend right] node[right]{$ \lambda $ }(30);
		\draw[->] (-6.2,-2.4) to[bend right] node[right]{$ \lambda $ }(n0);
		
		\draw[->] (01) to[bend right] node[right]{$ \lambda $ }(11);
		\draw[->] (11) to[bend right] node[right]{$ \lambda $ }(21);
		\draw[->] (21) to[bend right] node[right]{$ \lambda $ }(31);
		\draw[->] (-4.2,-2.4) to[bend right] node[right]{$ \lambda $ }(n1);
		
		\draw[->] (02) to[bend right] node[right]{$ \lambda $ }(12);
		\draw[->] (12)  to[bend right] node[right]{$ \lambda $ }(22);
		\draw[->] (22) to[bend right] node[right]{$ \lambda $ }(32);
		\draw[->] (-2.2,-2.4) to[bend right] node[right]{$ \lambda $ }(n2);

		\draw[->] (03) to[bend right] node[right]{$ \lambda $ }(13);
		\draw[->] (13) to[bend right] node[right]{$ \lambda $ }(23);
		\draw[->] (23) to[bend right] node[right]{$ \lambda $ }(33);
		\draw[->] (-0.2,-2.4) to[bend right] node[right]{$ \lambda $ }(n3);

		\draw[->] (04) to[bend right] node[right]{$ \lambda $ }(14);
		\draw[->] (14) to[bend right] node[right]{$ \lambda $ }(24);
		\draw[->] (24) to[bend right] node[right]{$ \lambda $ }(34);
		\draw[->] (1.8,-2.4) to[bend right] node[right]{$ \lambda $ }(n4);

		\draw[->] (10) to node[below]{$ \nu $ }(01);
		\draw[->] (20) to node[below]{$ 2\nu $ } (11);
		\draw[->] (n0) to node[below]{$ S\nu $ } (31);

		\draw[->] (11) to node[below]{$ \nu $ } (02);
		\draw[->] (21) to node[below]{$ 2\nu $ }(12);
		\draw[->] (n1) to node[below]{$ S\nu $ } (32);
		
		\draw[->] (12) to node[below]{$ \nu $ }(03);
		\draw[->] (22) to node[below]{$ 2\nu $ }(13);
		\draw[->] (n2) to node[below]{$ S\nu $ }(33);
		
		\draw[->] (13) to node[below]{$ \nu $ } (04);
		\draw[->] (23) to node[below]{$ 2\nu $ }(14);
		\draw[->] (n3) to node[below]{$ S\nu $ } (34);
		
		\end{tikzpicture}
	}
	\caption{Illustration of the CTMC for the two sub-queueing networks.  Each node \circled{$ n,b $} with $ n\in\{0, \cdots, N\} $ and $ b\in \{0, \cdots, \infty\} $ in subfigure (a)  denotes a two-dimensional state with $ n $ EVs  and $ b $ FBs in the EV-FB queue; while each node \circled{$ n,j $} with $ n\in\{0, \cdots, N\} $ and $ j\in \{0, \cdots, \infty\} $ in subfigure (b)  denotes a two-dimensional state with $ n $ EVs  and $ j $ DBs in the EV-DB queue.}
	\label{CTMC_equivalent_model_I}
\end{figure}
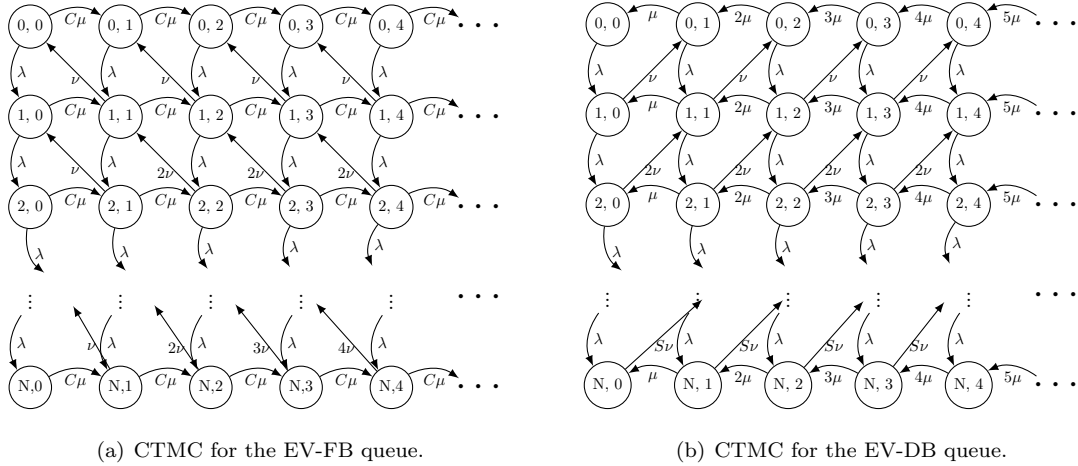

\subsection{Step 2(a): Transition Rate Matrix of the EV-FB Queue}
\label{transition_rate_matrix_EVFB_queue}
We denote the steady-state distribution of the EV-FB queue by $ \{\pi^{(\mathrm{EVFB})}_{n,b}\}_{\forall n,b} $. The two-dimensional CTMC for the EV-FB queue is shown in Fig. \ref{CTMC_equivalent_model_I}(a). Note that different from $ \pi_{n,b} $ of the original MQN, the number of FBs in $ \pi^{(\mathrm{EVFB})}_{n,b} $ can go to infinity. In particular, we have the following balance equations to show the state transitions:
\begin{align}
\label{equivalent_balance_equation_1}
&\pi^{(\mathrm{EVFB})}_{0,0}\Big(\lambda+C\mu\Big) =\pi^{(\mathrm{EVFB})}_{1,1}\nu;\\
&\pi^{(\mathrm{EVFB})}_{N,0}C\mu =\pi^{(\mathrm{EVFB})}_{N-1,0}\lambda;\\
&\pi^{(\mathrm{EVFB})}_{0,b}\Big(\lambda+C\mu\Big) =\pi^{(\mathrm{EVFB})}_{0,b-1}C\mu + \pi^{(\mathrm{EVFB})}_{1,b+1}\nu,\forall b\in \{1,\cdots, \infty\};\\
&\pi^{(\mathrm{EVFB})}_{n,0}\Big(\lambda+C\mu\Big) =\pi^{(\mathrm{EVFB})}_{n-1,0}\lambda + \pi^{(\mathrm{EVFB})}_{n+1,1}\nu,\forall n\in \{1,\cdots,N-1\};\\
&\pi^{(\mathrm{EVFB})}_{N,b}\Big(C\mu+\min\{b,S\}\nu\Big) =\pi^{(\mathrm{EVFB})}_{N-1,b}\lambda +\pi^{(\mathrm{EVFB})}_{N,b-1}C\mu,\forall b\in \{1,\cdots, \infty\};\\\nonumber
&\pi^{(\mathrm{EVFB})}_{n,b}\Big(C\mu+\min\{n,b,S\}\nu+\lambda\Big) =\pi^{(\mathrm{EVFB})}_{n-1,b}\lambda +\pi^{(\mathrm{EVFB})}_{n,b-1}C\mu+\\ & \pi^{(\mathrm{EVFB})}_{n+1,b+1}\nu\min\{n+1,b+1,S\},\forall n\in \{1,\cdots,N-1\}, b\in \{1,\cdots, \infty\}. \label{equivalent_balance_equation_6}
\end{align}

The above  infinite linear systems \eqref{equivalent_balance_equation_1}-\eqref{equivalent_balance_equation_6} can be organized into a matrix form as follows:
\begin{equation}
\boldsymbol{\pi}^{(\mathrm{EVFB})}\mathbf{Q}^{(\mathrm{EVFB})}=\mathbf{0}, \mathrm{\ and\ } \boldsymbol{\pi}^{(\mathrm{EVFB})}\mathbf{e}=1,
\label{linear_system_clm}
\end{equation}
where $ \boldsymbol{\pi}^{(\mathrm{EVFB})}=[\boldsymbol{\pi}_0^{(\mathrm{EVFB})},\boldsymbol{\pi}_1^{(\mathrm{EVFB})},\cdots, \boldsymbol{\pi}_b^{(\mathrm{EVFB})},\cdots] $ with $ \boldsymbol{\pi}_b^{(\mathrm{EVFB})} =[\pi_{0,b}^{(\mathrm{EVFB})}$, $\pi_{1,b}^{(\mathrm{EVFB})},\cdots,\pi_{N,b}^{(\mathrm{EVFB})}] $, $\forall b \in\{0,1,\cdots,\infty\}$, and $ \mathbf{Q}^{(\mathrm{EVFB})} $ is the transition rate matrix given as follows:
\begin{align}
\mathbf{Q}^{(\mathrm{EVFB})}= \left[
\begin{array}{llllllllllll}
\mathbf{L}_{\mathrm{00}}   & \mathbf{F}_{\mathrm{01}} & \mathbf{0}&\cdots\\
\mathbf{D}_{\mathrm{10}}   & \mathbf{L} & \mathbf{F}& \mathbf{0}&\cdots\\
\mathbf{0}& \mathbf{D} & \mathbf{L}  & \mathbf{F}& \mathbf{0}&\cdots\\
\cdots&   \mathbf{0}         & \mathbf{D}  & \mathbf{L} & \mathbf{F}& \mathbf{0}\\
&            &   \ddots          & \ddots     & \ddots     & \ddots
\end{array}
\right].
\label{Q_EVFB}
\end{align}
It can be observed that $ \mathbf{Q}^{(\mathrm{EVFB})} $ is the same as the previous finite matrix $ \mathbf{Q} $ in the upper-left-corner part (i.e., matrices $ \mathbf{L}_{00}, \mathbf{F}_{00} $ and $ \mathbf{D}_{10} $) and the repetitive part (i.e., matrices $ \mathbf{L}, \mathbf{F} $, and $ \mathbf{D} $  defined in Sec. \ref{steady_state_distribution_exponential}), but different from $ \mathbf{Q} $  in the right-bottom-corner part.  In fact, when $ B $ approaches infinity, the dashed part of the CTMC in Fig. \ref{CTMC} after removing the dotted part is exactly the CTMC illustrated in Fig. \ref{CTMC_equivalent_model_I}(a). This is equivalent to saying that $ \mathbf{Q}^{(\mathrm{EVFB})} $ is the remaining part of $ \mathbf{Q} $ after removing the right-bottom-corner blocks (i.e., matrices $ \mathbf{L}^{\mathrm{N}}_{00}, \mathbf{F}^{\mathrm{N}}_{00} $, and $ \mathbf{D}^{\mathrm{N}}_{10} $).

\subsection{Step 2(b): Transition Rate Matrix for the EV-DB Queue}
\label{transition_rate_matrix_EVDB_queue}
We denote the steady-state distribution of the EV-FB queue by $ \{\pi^{(\mathrm{EVDB})}_{n,j}\}_{\forall n,j} $. Here, we use $ (n,j) $ to denote the state of having $ n $ EVs and $ j $ DBs in the EV-DB queue, and the steady-state transitions of the EV-DB queue can be given as follows:
\begin{align}
&\pi^{(\mathrm{EVDB})}_{0,0} \lambda  =\pi^{(\mathrm{EVDB})}_{0,1}\mu;\\\nonumber
&\pi^{(\mathrm{EVDB})}_{0,j}\Big(\lambda+\min\{j,C\}\mu\Big) =\pi^{(\mathrm{EVDB})}_{0,j+1}\mu\min\{j+1,C\} +\\
& \pi^{(\mathrm{EVDB})}_{1,j-1}\nu\min\{j-1,S\},\forall j\in \{1,2,\cdots, \infty\};\\
&\pi^{(\mathrm{EVDB})}_{n,0}\Big(\lambda+\nu\min\{n,S\}\Big) = \pi^{(\mathrm{EVDB})}_{n-1,0}\lambda + \pi^{(\mathrm{EVDB})}_{n,1}\mu,\forall n\in \{1,\cdots,N-1\};\\\nonumber
&\pi^{(\mathrm{EVDB})}_{N,j}\Big(\nu\min\{N,S\}+\mu\min\{j,C\}\Big) =\pi^{(\mathrm{EVDB})}_{N-1,j}\lambda +\\ &\pi^{(\mathrm{EVDB})}_{N,j+1}\mu\min\{j+1,C\},\forall j\in \{0,1,2,\cdots, \infty\};\\\nonumber
&\pi^{(\mathrm{EVDB})}_{n,j}\Big(\mu\min\{j,C\}+\nu\min\{n,S\}+\lambda\Big) =\pi^{(\mathrm{EVDB})}_{n-1,b}\lambda + \pi^{(\mathrm{EVDB})}_{n,j+1}\mu\min\{j+1,C\}+\\
&\pi^{(\mathrm{EVDB})}_{n+1,j-1}\nu\min\{n+1,S\},\forall n\in \{1,\cdots,N-1\}, j\in \{1,2,\cdots, \infty\}.
\end{align}
The steady-state distribution can be obtained by solving the following infinite linear systems:
\begin{equation}
\boldsymbol{\pi}^{(\mathrm{EVDB})}\mathbf{Q}^{(\mathrm{EVDB})}=\mathbf{0}, \mathrm{\ and\ } \boldsymbol{\pi}^{(\mathrm{EVDB})}\mathbf{e}=1,
\end{equation}
where $ \boldsymbol{\pi}^{(\mathrm{EVDB})}=[\boldsymbol{\pi}_0^{(\mathrm{EVDB})},\boldsymbol{\pi}_1^{(\mathrm{EVDB})},\cdots, \boldsymbol{\pi}_j^{(\mathrm{EVDB})},\cdots] $ with $ \boldsymbol{\pi}_j^{(\mathrm{EVDB})}=[\pi_{0,j}^{(\mathrm{EVDB})}$, $\pi_{1,j}^{(\mathrm{EVDB})},\cdots,\pi_{N,j}^{(\mathrm{EVDB})}] $, $\forall j \in\{0,1,\cdots,\infty\}$, and $ \mathbf{Q}^{(\mathrm{EVDB})} $ is the transition rate matrix given by
\begin{align}
\mathbf{Q}^{(\mathrm{EVDB})}= \left[
\begin{array}{llllllllllll}
\mathbf{L}^{\mathrm{N}}_{\mathrm{00}} & \mathbf{D}^{\mathrm{N}}_{\mathrm{01}}& \mathbf{0}&\cdots\\
\mathbf{F}^{\mathrm{N}}_{\mathrm{10}} & \mathbf{L} & \mathbf{D}& \mathbf{0}&\cdots\\
\mathbf{0}& \mathbf{F} & \mathbf{L} & \mathbf{D}& \mathbf{0}&\cdots\\
\cdots&       \mathbf{0}     & \mathbf{F} & \mathbf{L} & \mathbf{D}& \mathbf{0}\\
&            &         \ddots    & \ddots     & \ddots     & \ddots
\end{array}
\right].
\end{align}

We can observe that the repetitive part of $ \mathbf{Q}^{(\mathrm{EVDB})} $ is constructed by the same block matrices defined in Sec. \ref{steady_state_distribution_exponential} (i.e., $ \mathbf{D}, \mathbf{L} $, and $ \mathbf{F} $), and is equal to the transpose of that of matrix $ \mathbf{Q} $. Meanwhile, the upper-left-corner part (i.e., matrices $ \mathbf{L}^{\mathrm{N}}_{00}, \mathbf{F}^{\mathrm{N}}_{00} $, and $ \mathbf{D}^{\mathrm{N}}_{10} $) of $ \mathbf{Q}^{(\mathrm{EVFB})} $ is also equal to the transpose of the previous finite matrix $ \mathbf{Q} $. In fact, when $ B $ approaches infinity, the remaining dotted part of the CTMC illustrated in Fig. \ref{CTMC} after removing the dashed part is exactly the CTMC in Fig. \ref{CTMC_equivalent_model_I}(b). This is equivalent to saying that $ \mathbf{Q}^{(\mathrm{EVFB})} $ is the remaining part of $ \mathbf{Q} $ after removing the upper-left-corner blocks (i.e., matrices $ \mathbf{L}_{00}, \mathbf{F}_{00} $, and $ \mathbf{D}_{10} $).

\subsection{Step 3: Ergodicity of the Two Sub-Queueing Networks} 
\label{neccessary_and_sufficient_conditions}
A key question related to the transition rate matrix of a CTMC is whether there exists a unique stationary distribution or not, or equivalently, whether the Markov chain is ergodic (i.e., irreducible and positive recurrent). This subsection shows that the above two transition rate matrices are indeed irreducible and positive recurrent under the charging-limiting mode and the swapping-limiting mode, respectively. Below, we first present a lemma that shows the irreducibility of $ \mathbf{Q}^{(\mathrm{EVFB})} $ and $ \mathbf{Q}^{(\mathrm{EVDB})} $. 
\begin{lemma}
	The Markov chains defined by $ \mathbf{Q}^{(\mathrm{EVFB})} $ and $ \mathbf{Q}^{(\mathrm{EVDB})} $ are irreducible.
	\label{irreducibility}
\end{lemma}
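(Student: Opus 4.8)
The plan is to prove irreducibility directly from the transition diagrams in Fig.~\ref{CTMC_equivalent_model_I}, i.e.\ to show that the directed graph whose vertices are the states and whose arcs are the transitions with strictly positive rate is strongly connected. Since reachability is transitive, it suffices to single out the ``empty'' state $(0,0)$ and establish two things: (i) every state is reachable from $(0,0)$, and (ii) $(0,0)$ is reachable from every state. Throughout I use that $\lambda,\mu,\nu>0$ and $S,C\ge 1$, so that each elementary move listed below has positive rate.

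For the EV-FB queue the moves available from a state $(n,b)$ are: an EV arrival $(n,b)\to(n+1,b)$ when $n<N$ (rate $\lambda$), an FB arrival $(n,b)\to(n,b+1)$ (rate $C\mu$), and a swap $(n,b)\to(n-1,b-1)$ when $n\ge 1$ and $b\ge 1$ (rate $\nu\min\{n,b,S\}>0$). For (i), from $(0,0)$ apply $b$ FB arrivals to reach $(0,b)$ and then $n$ EV arrivals to reach $(n,b)$. For (ii), note the only move that decreases $b$ is the swap, which also decreases $n$; so from $(n,b)$ I first apply swaps until one coordinate hits $0$, landing at $(0,b-n)$ if $n\le b$ and at $(n-b,0)$ if $n>b$. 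From $(0,m)$ with $m\ge 1$, the two-step combination ``EV arrival then swap'', $(0,m)\to(1,m)\to(0,m-1)$, lowers $m$ by one, and iterating reaches $(0,0)$; similarly from $(k,0)$ with $k\ge 1$ the combination ``FB arrival then swap'', $(k,0)\to(k,1)\to(k-1,0)$, reaches $(0,0)$. This shows $\mathbf{Q}^{(\mathrm{EVFB})}$ is irreducible.

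For the EV-DB queue the moves from $(n,j)$ are: an EV arrival $(n,j)\to(n+1,j)$ when $n<N$ (rate $\lambda$), a charge completion $(n,j)\to(n,j-1)$ when $j\ge 1$ (rate $\mu\min\{j,C\}>0$), and a swap $(n,j)\to(n-1,j+1)$ when $n\ge 1$ (rate $\nu\min\{n,S\}>0$), where the swap now couples a decrease in $n$ with an increase in $j$. For (i), starting from $(0,0)$ the combination ``EV arrival then swap'', $(0,k)\to(1,k)\to(0,k+1)$, raises $j$ by one, so iterating reaches $(0,j)$, after which $n$ EV arrivals give $(n,j)$. For (ii), from $(n,j)$ apply $j$ charge completions to reach $(n,0)$, then repeatedly use ``swap then charge'', $(k,0)\to(k-1,1)\to(k-1,0)$, to reach $(0,0)$. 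Hence $\mathbf{Q}^{(\mathrm{EVDB})}$ is irreducible as well.

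The only point requiring any care — what I would flag as the ``obstacle'', such as it is — is that in both chains the swap is a genuine diagonal move and cannot change a single coordinate in isolation; the remedy, used repeatedly above, is to pair it with an arrival so that the net effect is a unit change in one coordinate. Everything else is bookkeeping over finitely many move types, so once the move set is written out the argument is short.
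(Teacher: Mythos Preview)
Your proof is correct and follows essentially the same approach as the paper's: both argue directly from the transition diagram that the state graph is strongly connected, with the paper pairing two arbitrary states while you route everything through the reference state $(0,0)$. Your write-up is in fact more careful than the paper's sketch, since you spell out explicitly how to undo the diagonal swap move by pairing it with an arrival, whereas the paper simply asserts the remaining cases are ``likewise.''
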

\begin{proof}
	Please refer to \ref{appendix_proof_irreducibility}.
\end{proof}


Furthermore, we have the following important lemma to show the necessary and sufficient conditions for the irreducible Markov chains $ \mathbf{Q}^{(\mathrm{EVFB})} $ and $ \mathbf{Q}^{(\mathrm{EVDB})} $ to be positive recurrent.
\begin{lemma}
	The irreducible Markov chain $ \mathbf{Q}^{(\mathrm{EVFB})} $ is positive recurrent if and only if $ C\leq \lfloor \lambda\big(1-\mathbb{P}_{\mathrm{nslb}}(N,S)\big)/\mu\rfloor $, and the irreducible Markov chain $ \mathbf{Q}^{(\mathrm{EVDB})} $ is positive recurrent if and only if $ C\geq \lceil \lambda\big(1-\mathbb{P}_{\mathrm{nslb}}(N,S)\big)/\mu\rceil $.
	\label{positive_recurrent}
\end{lemma}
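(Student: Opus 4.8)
The plan is to recognise that $\mathbf{Q}^{(\mathrm{EVFB})}$ and $\mathbf{Q}^{(\mathrm{EVDB})}$ are, apart from a finite ``boundary'' at small levels, level-independent quasi-birth-and-death (QBD) processes in which the level is the number of FBs (resp.\ DBs) and the phase is the number $n\in\{0,1,\dots,N\}$ of EVs, and then to invoke the classical mean-drift criterion for positive recurrence of a QBD. Inspecting \eqref{Q_EVFB}, the repeating block-row of $\mathbf{Q}^{(\mathrm{EVFB})}$ is $(\mathbf{D},\mathbf{L},\mathbf{F})$ with $\mathbf{F}$ the \emph{up} block ($b\!\mapsto\! b+1$, an FB arrival), $\mathbf{D}$ the \emph{down} block ($b\!\mapsto\! b-1$, a swap completion, which simultaneously sends $n\!\mapsto\! n-1$), and $\mathbf{L}$ the local block; for $\mathbf{Q}^{(\mathrm{EVDB})}$ the \emph{same} three $(N+1)\times(N+1)$ matrices appear, but now $\mathbf{D}$ is the up block ($j\!\mapsto\! j+1$) and $\mathbf{F}=C\mu\mathbf{I}$ is the down block ($j\!\mapsto\! j-1$, a charging completion). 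Since irreducibility is already granted by Lemma~\ref{irreducibility}, and since positive recurrence of a QBD is governed only by its repeating part (the finitely many non-repeating boundary blocks, which here differ from the repeating ones merely because $\min\{n,b,S\}$, resp.\ $\min\{j,C\}$, has not yet saturated, play no role), the whole question reduces to an analysis of the triple $\mathbf{D},\mathbf{L},\mathbf{F}$ — and the two sub-networks share this triple, only interchanging the up/down roles, which is exactly why the two conditions turn out to be complementary.

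The first concrete step is to form the phase generator $\mathbf{A}\triangleq\mathbf{F}+\mathbf{L}+\mathbf{D}$. Using $m_n=-(\mathbb{I}_{(n\neq N)}\lambda+C\mu+\nu\min\{n,S\})$ and $d_n=\nu\min\{n,S\}$, the $C\mu$-terms cancel, and $\mathbf{A}$ is seen to be precisely the generator of an $M/M/S/N$ queue with arrival rate $\lambda$ and per-server rate $\nu$; being an irreducible birth-and-death generator, it has a unique stationary row vector $\boldsymbol{\alpha}$, which is exactly the $M/M/S/N$ stationary distribution, so by \eqref{P_best_N_S}--\eqref{z_0} its last component is $\alpha_N=\mathbb{P}_{\mathrm{nslb}}(N,S)$. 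Next I would evaluate the two mean drift rates. On one hand, $\boldsymbol{\alpha}\mathbf{F}\mathbf{e}=C\mu$ because $\mathbf{F}=C\mu\mathbf{I}$ and $\boldsymbol{\alpha}\mathbf{e}=1$. On the other hand, $\boldsymbol{\alpha}\mathbf{D}\mathbf{e}=\nu\sum_{n=0}^{N}\alpha_n\min\{n,S\}$, which is the steady-state rate of swap completions in the $M/M/S/N$ queue; by the elementary flow-balance identity (accepted-arrival rate equals departure rate), it equals $\lambda(1-\alpha_N)=\lambda\big(1-\mathbb{P}_{\mathrm{nslb}}(N,S)\big)$. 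Hence the single number $\lambda\big(1-\mathbb{P}_{\mathrm{nslb}}(N,S)\big)/\mu$, the ``$C$-limiting threshold'', is precisely what separates the two drift quantities $C\mu$ and $\lambda\big(1-\mathbb{P}_{\mathrm{nslb}}(N,S)\big)$.

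Finally I would apply the QBD mean-drift theorem: the chain is positive recurrent if and only if the mean up-rate is strictly less than the mean down-rate, i.e.\ $\boldsymbol{\alpha}A_{\mathrm{up}}\mathbf{e}<\boldsymbol{\alpha}A_{\mathrm{down}}\mathbf{e}$. For $\mathbf{Q}^{(\mathrm{EVFB})}$ we have $A_{\mathrm{up}}=\mathbf{F}$, $A_{\mathrm{down}}=\mathbf{D}$, so it is positive recurrent iff $C\mu<\lambda\big(1-\mathbb{P}_{\mathrm{nslb}}(N,S)\big)$, i.e.\ $C\le\lfloor\lambda\big(1-\mathbb{P}_{\mathrm{nslb}}(N,S)\big)/\mu\rfloor$; for $\mathbf{Q}^{(\mathrm{EVDB})}$ we have $A_{\mathrm{up}}=\mathbf{D}$, $A_{\mathrm{down}}=\mathbf{F}$, so it is positive recurrent iff $\lambda\big(1-\mathbb{P}_{\mathrm{nslb}}(N,S)\big)<C\mu$, i.e.\ $C\ge\lceil\lambda\big(1-\mathbb{P}_{\mathrm{nslb}}(N,S)\big)/\mu\rceil$, which is the assertion of the lemma. (When the threshold happens to be an integer, the two drifts coincide and the chain is null recurrent; this degenerate, non-generic case is the only fine point in passing from the strict drift inequalities to the floor/ceiling bounds, and I would dispose of it separately.)

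The step I expect to be the main obstacle is the careful justification that the finitely many non-repeating boundary blocks of each generator do not affect the positive-recurrence dichotomy — so that one may legitimately reduce to the repeating triple $(\mathbf{D},\mathbf{L},\mathbf{F})$ and quote the mean-drift theorem in a version valid for a continuous-time level-independent QBD with an arbitrary finite boundary. The remaining ingredients are routine: the cancellation identity making $\mathbf{A}$ an $M/M/S/N$ generator, the flow-balance identity $\nu\sum_n\alpha_n\min\{n,S\}=\lambda(1-\alpha_N)$, and the arithmetic turning the strict drift inequalities into the stated integer thresholds. As a fully self-contained alternative that avoids quoting QBD theory, one could instead exhibit a Foster--Lyapunov function of the form $V(n,\ell)=\ell+g(n)$ with a suitable bounded correction $g$ for the sufficiency direction and a matching test function for the converse; but this essentially re-derives the mean-drift criterion, so I would keep it only as a backup.
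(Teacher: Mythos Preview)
Your proposal is correct and follows essentially the same route as the paper: form $\mathbf{A}=\mathbf{F}+\mathbf{L}+\mathbf{D}$, recognise it as the generator of an $M/M/S/N$ queue, use flow balance to identify $\boldsymbol{\alpha}\mathbf{D}\mathbf{e}=\lambda(1-\mathbb{P}_{\mathrm{nslb}}(N,S))$ and $\boldsymbol{\alpha}\mathbf{F}\mathbf{e}=C\mu$, and then invoke Neuts' mean-drift criterion (Theorem~3.1.1 in \cite{matrix_geometric_method}), which already absorbs your worry about the finite boundary blocks. Your write-up is in fact more complete than the paper's, which treats only $\mathbf{Q}^{(\mathrm{EVFB})}$ explicitly and waves at the dual case, and does not comment on the degenerate integer-threshold situation.
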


\begin{proof}
	Please refer to \ref{appendix_proof_positive_recurrent}.
\end{proof}

Lemma \ref{irreducibility} and Lemma \ref{positive_recurrent} guarantee that as long as the MQN falls within the charging-limiting mode (the swapping-limiting mode), the transition rate matrix corresponding to the EV-FB queue (EV-DB queue) is irreucibile and positive recurrent, and thus ergodic. Therefore, there exists a unique steady-state distribution for the EV-FB queue (EV-DB queue) if and only if the BSCS is working in the charging-limiting mode (swapping-limiting mode). As a result, there also exists a steady-state blocking probability for the EV-FB queue (the EV-DB queue) in the charging-limiting mode (swapping-limiting mode). To aid our proof in Step 4, we denote these two types of blocking probabilities as $ \mathbb{P}_{\mathrm{EVFB}}(N,S,C)$ and  $\mathbb{P}_{\mathrm{EVDB}}(N,S,C)$, respectively. 

Note that the EV-FB queue can be considered as a rate-control throttle \cite{atm_network} \cite{rate_control_throttle}, where the EVs' arrivals are controlled by an infinite token bank, i.e., the buffer of the FB-queue. For the EV-FB queue being ergodic, equation $ \lambda(1-\mathbb{P}_{\mathrm{EVFB}}(N,S,C))=C\mu $ must hold in the steady-state. Thus, 
\begin{align}
\mathbb{P}_{\mathrm{EVFB}}(N,S,C) = 1-C\mu/\lambda = \mathbb{P}_{\mathrm{clb}}(C).
\label{PoB_EVFB}
\end{align}
Meanwhile, the EV-DB queue is a two-stage tandem queueing network, and the open EV-queue will not be affected by the DB-queue. Thus, the blocking probability of the EV-DB queue is equal to $\mathbb{P}_{\mathrm{nslb}}(N,S)$, i.e., 
\begin{align}
\mathbb{P}_{\mathrm{EVDB}}(N,S,C)=\mathbb{P}_{\mathrm{nslb}}(N,S).
\label{PoB_EVDB}
\end{align}

Interestingly, the right-hand-side of \eqref{PoB_EVFB} and \eqref{PoB_EVDB} are respectively the right-hand-side of \eqref{clm_asymptotic_convergence} and \eqref{slm_asymptotic_convergence}. Therefore, to prove Theorem \ref{asymptotic_property_PoB}, it suffices to prove  that i) if the charging-limiting mode is active, the open EV-queue and the FB-queue of the MQN asymptotically converge to the EV-FB queue when $ B $ approaches infinity;  i) if the swapping-limiting mode is active, then the open EV-queue and the DB-queue of the MQN asymptotically converge to the EV-DB queue when $ B $ approaches infinity. The following subsection shows the detailed proof of these two types of convergence.

\subsection{Step 4: Asymptotic Convergence of the Blocking Probability}
\label{final_proof_of_asymptotic_properties}
To aid our proof, let us define the steady-state probability that there always exist enough FBs for the swapping service as the \textit{probability of having enough FBs} ( denoted as $ \mathbb{P}_{\mathrm{enough}} $). Meanwhile,  we also define the steady-state probability that all CSs are busy as the \textit{probability of all CSs busy} (denoted as $ \mathbb{P}_{\mathrm{busy}} $). Intuitively, both of these two probabilities can be calculated as
\begin{align}
\mathbb{P}_{\mathrm{enough}} = \sum_{n=0}^{N}\sum_{b=\min\{n,S\}}^{B}\pi_{n,b},  \mathbb{P}_{\mathrm{busy}} = \sum_{n=0}^{N}\sum_{b=0}^{B-C}\pi_{n,b}.
\label{definition_enough_busy}
\end{align}

Based on the above definition, we have the following lemma which shows the convergence of these two probabilities when $ B $ approaches infinity.

\begin{lemma}
	Given $ N $ and $ S $ in Stage-I, and $ C $ in Stage-II, if the charging-limiting mode is active, then we have 
	\begin{align}
	\lim\limits_{B\rightarrow\infty}\mathbb{P}_{\mathrm{enough}} =  
	\phi(N,S,C),\lim\limits_{B\rightarrow\infty}\mathbb{P}_{\mathrm{busy}} = 1,
	\label{probability_of_enough_and_busy_clm}
	\end{align}
	where $ \phi(N,S,C)\in (0,1) $ is determined by $ N, S$, and $ C\leq \lfloor \lambda\big(1-\mathbb{P}_{\mathrm{nslb}}(N,S)\big)/\mu\rfloor $. Otherwise, if the swapping-limiting mode is active, then we have
	\begin{align} \lim\limits_{B\rightarrow\infty}\mathbb{P}_{\mathrm{enough}} = 1,\lim\limits_{B\rightarrow\infty}\mathbb{P}_{\mathrm{busy}} =  
	\psi(N,S,C),
	\label{probability_of_enough_and_busy_slm}
	\end{align}
	where $ \psi(N,S,C)\in (0,1) $ is determined by $ N, S$, and $ C\geq \lceil \lambda\big(1-\mathbb{P}_{\mathrm{nslb}}(N,S)\big)/\mu\rceil $.
	\label{probability_of_enough_and_busy}
\end{lemma}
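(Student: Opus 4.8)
The plan is to prove Lemma~\ref{probability_of_enough_and_busy} by showing that, as $B\to\infty$, the stationary law of the MQN converges to the stationary law of one of the two sub-queueing networks of Step~1, and then to read off the four limits from this convergence. In the charging-limiting mode the DB-queue is the bottleneck, so the batteries accumulate as DBs and the FB-count $b$ stays $O(1)$; accordingly I expect $\pi_{n,b}\to\pi^{(\mathrm{EVFB})}_{n,b}$ for every fixed $(n,b)$, the limit being the stationary distribution of $\mathbf{Q}^{(\mathrm{EVFB})}$, which exists and is unique by Lemmas~\ref{irreducibility}--\ref{positive_recurrent}. In the swapping-limiting mode the two queues swap roles: the batteries accumulate as FBs, the DB-count $j=B-b$ stays $O(1)$, and $\pi_{n,B-j}\to\pi^{(\mathrm{EVDB})}_{n,j}$ for every fixed $(n,j)$. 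The two cases are entirely symmetric, so I describe the charging-limiting case and indicate the changes for the other.

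The argument has two ingredients. First, a uniform tail bound (tightness). The MQN is a level-dependent QBD in the level $b$ with phase $n\in\{0,\dots,N\}$, whose blocks coincide with the homogeneous triple $\mathbf{D},\mathbf{L},\mathbf{F}$ on a band $S\le b\le B-C$ of levels (with only a bounded number of exceptional boundary levels at each end). In the charging-limiting mode, Lemma~\ref{positive_recurrent}, applied to the EV-FB queue, forces the minimal nonnegative solution $\mathbf{R}$ of $\mathbf{F}+\mathbf{R}\mathbf{L}+\mathbf{R}^{2}\mathbf{D}=\mathbf{0}$ to have spectral radius $\mathrm{sp}(\mathbf{R})<1$. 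Writing the stationary vector on the band as the sum of a component anchored at the $b\approx 0$ boundary (which decays like $\mathrm{sp}(\mathbf{R})^{b}$, uniformly in $B$) and a component anchored at the $b\approx B$ boundary (whose total mass tends to $0$ exponentially in $B$, because the normalization constraint forces its boundary coefficients to be that small), one obtains $\sup_{B}\sum_{n}\sum_{b\ge m}\pi_{n,b}\to 0$ as $m\to\infty$. Second, identification of the limit. Since $n$ is already bounded, the uniform tail bound makes the family of MQN laws $\{\boldsymbol{\mu}^{(B)}\}$ tight; any weak subsequential limit is a stationary measure of $\mathbf{Q}^{(\mathrm{EVFB})}$, because for each fixed state the MQN transition rates agree with those of $\mathbf{Q}^{(\mathrm{EVFB})}$ once $B-C$ exceeds that state's $b$-coordinate, and it is a probability measure by tightness; uniqueness (Lemma~\ref{positive_recurrent}) then pins the limit to $\boldsymbol{\pi}^{(\mathrm{EVFB})}$, so the whole sequence converges pointwise.

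Granting pointwise convergence and the uniform tail bound, the four limits follow routinely. In the charging-limiting mode, $1-\mathbb{P}_{\mathrm{enough}}=\sum_{n}\sum_{b=0}^{\min\{n,S\}-1}\pi_{n,b}$ is a sum over finitely many states (all with $b<S$), hence converges to $\sum_{n}\sum_{b=0}^{\min\{n,S\}-1}\pi^{(\mathrm{EVFB})}_{n,b}$; its complement, which I call $\phi(N,S,C)$, lies in $(0,1)$ because $\pi^{(\mathrm{EVFB})}_{0,0}>0$ contributes to it while $\pi^{(\mathrm{EVFB})}_{1,0}>0$ is omitted from it (both positive by irreducibility and positive recurrence). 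Likewise $1-\mathbb{P}_{\mathrm{busy}}=\sum_{n}\sum_{b=B-C+1}^{B}\pi_{n,b}$ only involves states with $b\ge B-C+1$, so it is dominated by the uniform tail bound at $m=B-C+1\to\infty$ and therefore vanishes, giving $\mathbb{P}_{\mathrm{busy}}\to 1$. In the swapping-limiting mode one repeats everything in the variable $j=B-b$ with $\boldsymbol{\pi}^{(\mathrm{EVDB})}$ in place of $\boldsymbol{\pi}^{(\mathrm{EVFB})}$ (and the analogous rate matrix $\tilde{\mathbf{R}}$ with $\mathrm{sp}(\tilde{\mathbf{R}})<1$): $1-\mathbb{P}_{\mathrm{busy}}=\sum_{n}\sum_{j=0}^{C-1}\pi_{n,B-j}\to\sum_{n}\sum_{j=0}^{C-1}\pi^{(\mathrm{EVDB})}_{n,j}$, whose complement $\psi(N,S,C)$ again lies in $(0,1)$, while $1-\mathbb{P}_{\mathrm{enough}}=\sum_{n}\sum_{j>B-\min\{n,S\}}\pi_{n,B-j}$ ranges over states with $j$ within $S$ of $B$, hence is controlled by the $j$-directed uniform tail bound and tends to $0$.

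The main obstacle is precisely the uniform tail bound in the first ingredient: showing that the mass the MQN places near the ``wrong'' boundary ($b$ near $B$ in the charging-limiting mode, $j$ near $B$ in the swapping-limiting mode) is $o(1)$ uniformly in $B$. This is where the spectral dichotomy handed over by Lemma~\ref{positive_recurrent} must be upgraded to a quantitative, $B$-uniform estimate, either through the matrix-geometric representation sketched above or through a phase-dependent Lyapunov function of the form $f(n,b)=\theta^{b}g(n)$ with $\theta>1$ chosen just below $\lambda\big(1-\mathbb{P}_{\mathrm{nslb}}(N,S)\big)/(C\mu)$; a phase-free choice $f(n,b)=\theta^{b}$ does not suffice, since the swapping rate $\nu\min\{n,b,S\}$ vanishes at $n=0$, so the downward drift of $b$ is negative only on average over the EV-phase rather than pointwise. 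Everything else (pointwise convergence, identification of the limit, and the strict bounds on $\phi$ and $\psi$) is immediate once this estimate and Lemmas~\ref{irreducibility}--\ref{positive_recurrent} are in hand.
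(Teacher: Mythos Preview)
Your proposal is correct and follows essentially the same route as the paper: both argue that, as $B\to\infty$, the MQN's $(n,b)$-marginal (respectively $(n,j)$-marginal) converges to the stationary law of the EV-FB queue (respectively EV-DB queue), and both read off the four limits from this convergence using the matrix-geometric representation and the spectral bound $\mathrm{sp}(\mathbf{R})<1$ (respectively $\mathrm{sp}(\mathbf{M})<1$) supplied by Lemma~\ref{positive_recurrent}.

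The difference is one of rigor rather than strategy. The paper's convergence step is largely heuristic: in the charging-limiting case it argues that the DB-queue must be unstable (since the FB-queue is stable and at least one of the two must blow up), hence all CSs become busy, hence the FB-input rate approaches $C\mu$, hence the $(n,b)$-chain ``converges to the EV-FB queue''; only in the swapping-limiting case does it explicitly use the eigenvalue bound to kill the tail $\sum_{b}\boldsymbol{\pi}_{C}^{(\mathrm{EVDB})}\mathbf{M}^{B-b-C}\mathbf{e}\to 0$. You instead set up an honest tightness-plus-identification argument and correctly isolate the one nontrivial step---the $B$-uniform tail bound near the ``wrong'' boundary---as the place where the spectral dichotomy must be made quantitative. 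Your sketch (splitting the band solution into a left-anchored piece decaying like $\mathrm{sp}(\mathbf{R})^{b}$ and a right-anchored piece of total mass $o(1)$, or alternatively a phase-dependent Lyapunov function $f(n,b)=\theta^{b}g(n)$) is the right way to make rigorous what the paper only asserts. Your justification that $\phi,\psi\in(0,1)$ via positivity of specific coordinates is also more explicit than the paper, which simply defines $\phi$ and $\psi$ by the limiting formulas without separately arguing the strict inequalities.
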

\begin{proof}
	Please refer to \ref{appendix_proof_probability_of_having_FB_shortage} for the proof and computation of $ \phi(N,S,C)$ and $ \psi(N,S,C) $. 
\end{proof}

Basically, Lemma \ref{probability_of_enough_and_busy} shows that in the charging-limiting mode,  no matter how many batteries are used in the closed battery-queue, there always exists a non-zero probability $ 1-\phi(N,S,C) $ that the number of FBs is not enough to serve all the queued EVs. Meanwhile, all the CSs will become busy with probability 1 when $ B $ approaches infinity, which means that the output rate of the DB-queue or the input rate of the FB-queue will approach $ C\mu $. Therefore, the open EV-queue and the FB-queue of the original MQN will asymptotically converge to the EV-FB queue when $ B $ approaches infinity. Based on \eqref{PoB_EVFB}, we have
\begin{align}
\lim\limits_{B\rightarrow\infty}\mathbb{P}_{\mathrm{MQN}}(N,S,C,B)=\mathbb{P}_{\mathrm{EVFB}}(N,S,C) = 1-C\mu/\lambda=\mathbb{P}_{\mathrm{clm}}(C).
\end{align}
We thus complete the proof of \eqref{clm_asymptotic_convergence} in Theorem \ref{asymptotic_property_PoB}. 

Similarly, Lemma \ref{probability_of_enough_and_busy} also shows that if the swapping-limiting mode is active, then the probability of having enough FBs approaches 1 when $ B $ approaches infinity. Therefore, the open EV-queue will asymptotically work as an independent $ M/M/S/N $ queueing system when $ B $ approaches infinity. As a result, the open EV-queue and the DB-queue of the original MQN will asymptotically converge to the EV-DB queue. Based on \eqref{PoB_EVDB}, we have  
\begin{align}
\lim\limits_{B\rightarrow\infty}\mathbb{P}_{\mathrm{MQN}}(N,S,C,B)=\mathbb{P}_{\mathrm{EVDB}}(N,S,C)=\mathbb{P}_{\mathrm{nslb}}(N,S).
\end{align}
We thus complete the proof of \eqref{slm_asymptotic_convergence} in Theorem \ref{asymptotic_property_PoB}.

\section{Numerical Evaluations}
\label{sec_simulation_results}
In this section, we validate our theoretical analysis with extensive simulation. We will also particularly focus on illustrating the lower bound of the blocking probability with different average arrival rates $ \lambda $, average swapping rates $ \nu $, and average charging rates $ \mu $. 
Based on the specific simulation, the EVs' arrival rate $ \lambda  $ varies between $ 0.01 $ and $ 0.06 $, which corresponds to the average number of EV arrivals  being between 0.6 (light traffic) and 3.6 (heavy traffic) within a 1-minute duration.
The average swapping time, i.e., $ 1/\nu $, is assumed to be between 100 seconds and 500 seconds \cite{tesla_battery_swapping, bss_qingdao}.
We also assume that the average charging time $ \frac{1}{\mu} $ is between 1 hour and 4 hours, which follows the state-of-the-art charging technology \cite{tesla_supercharging_stations}.  

\begin{figure}
	\centering
	\subfigure[The Charging-Limiting Mode]{\includegraphics[width=8cm] {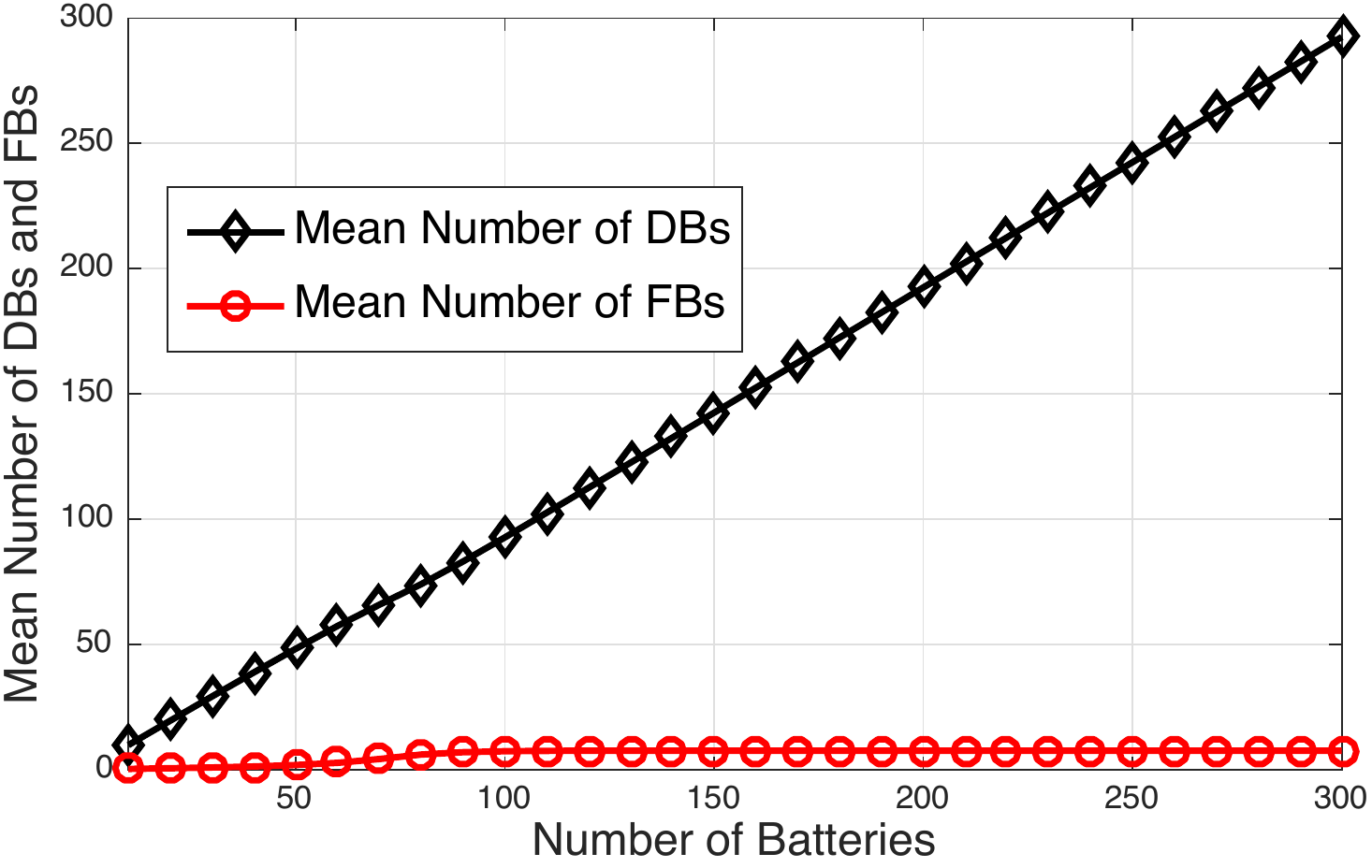}}
	\subfigure[The Swapping-Limiting Mode]{\includegraphics[width=8cm] {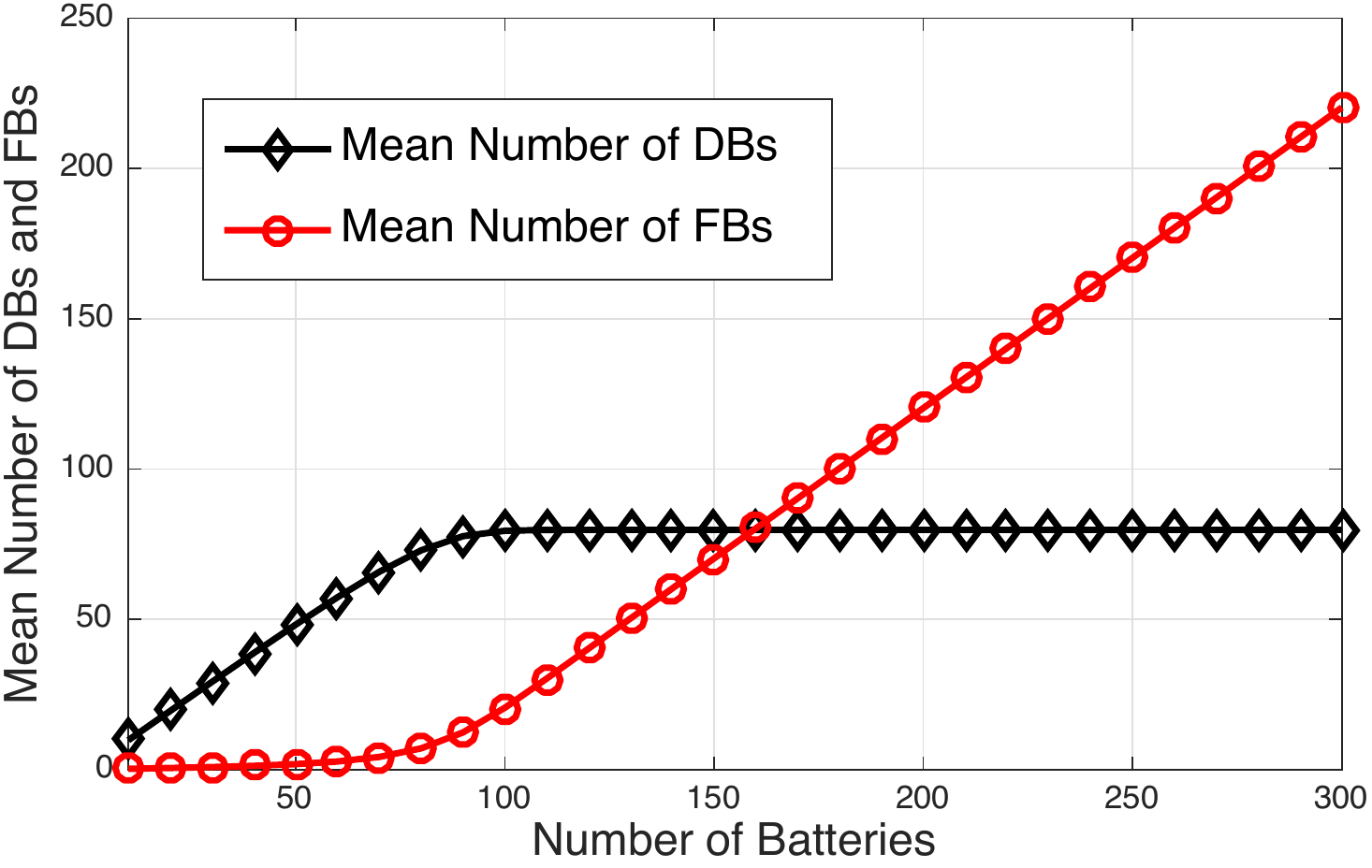}}
	\caption{Illustrations of the mean number of DBs and FBs. We set $ C=70 $ and $ C=120 $ for subfigure (a) and (b), respectively. Other parameters are chosen as follows: $N=12, S=2,  \lambda = 1/30, \nu=1/90 $, and $\mu=1/3600 $. }
	\label{number_of_DBs_FBs}
\end{figure}

\subsection{Validation of Lemma \ref{irreducibility} and Lemma \ref{positive_recurrent}}
The ergodicity of the two sub-queueing networks can be illustrated by their mean queue length. As can be seen from Fig. \ref{number_of_DBs_FBs}(a), in the charging-limiting mode,  the average number of FBs converges to a constant even when $ B $ becomes sufficiently large. This follows Lemma \ref{irreducibility} and Lemma \ref{positive_recurrent} that the EV-FB queue is ergodic in the charging-limiting mode, and there exists a unique steady-state average number of FBs in the EV-FB queue. Therefore, in this operating mode, further increasing $ B $ does not help reduce the blocking probability, since all the newly added batteries will be backlogged as DBs in the DB-queue, as shown by the curve with diamonds in Fig. \ref{number_of_DBs_FBs}(a). In comparison, as depicted by Fig. \ref{number_of_DBs_FBs}(b), the mean number of DBs converges to a constant in the swapping-limiting mode, but the average number of FBs keeps increasing when $ B $ increases. This validates Lemma \ref{positive_recurrent} that the FB-queue is not ergodic (i.e., unstable) when the swapping-limiting mode is active.

\begin{figure}
	\centering
	\subfigure[Prob. of having enough FBs]{\includegraphics[width=8cm] {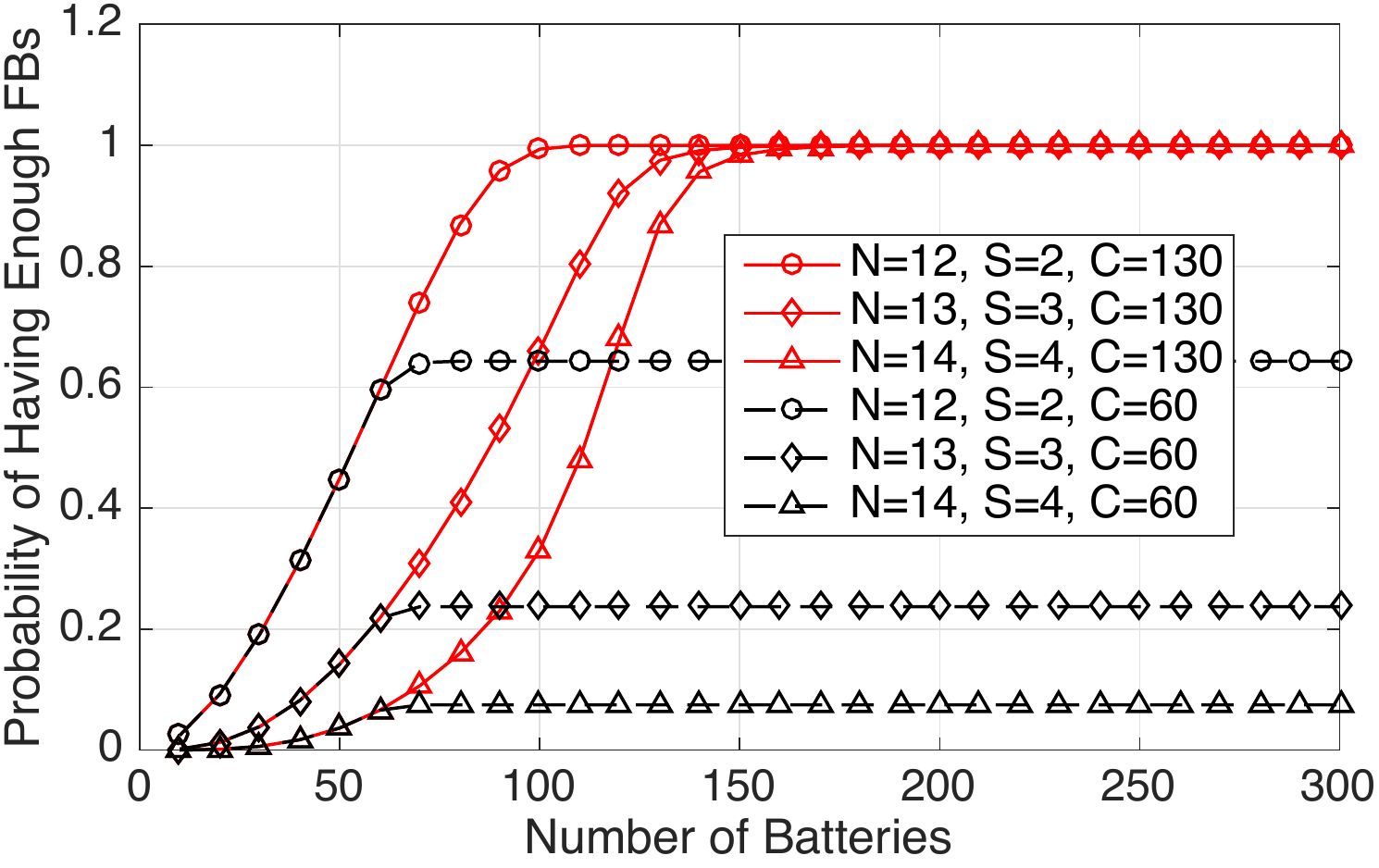}}
	\subfigure[Prob. of all CSs busy]{\includegraphics[width=8cm] {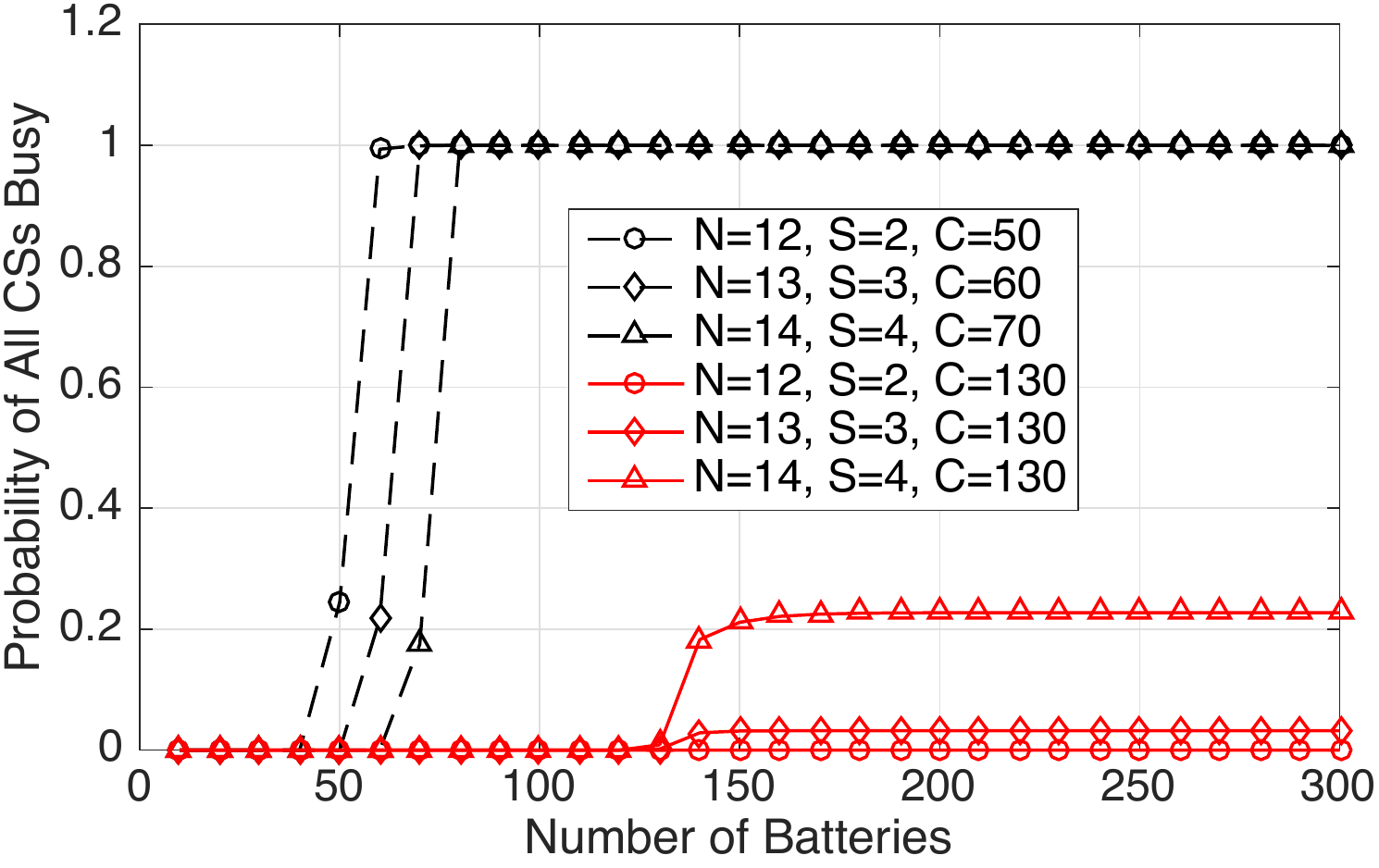}}
	\caption{Convergence of the probability of having enough FB in subfigure (a), and convergence of the probability of all CSs busy in subfigure (b). For both subfigures, solid curves represent the swapping-limiting mode and dashed curves represent the charging-limiting mode. Other parameters are chosen as follows: $\lambda = 1/30, \nu=1/90 $, and $\mu=1/3600 $.}
	\label{probability_of_having_enough_FBs}
\end{figure}

\begin{figure}
	\centering
	\subfigure[$ \phi(N,S,C) $ in the charging-limiting mode.]{\includegraphics[width=8cm] {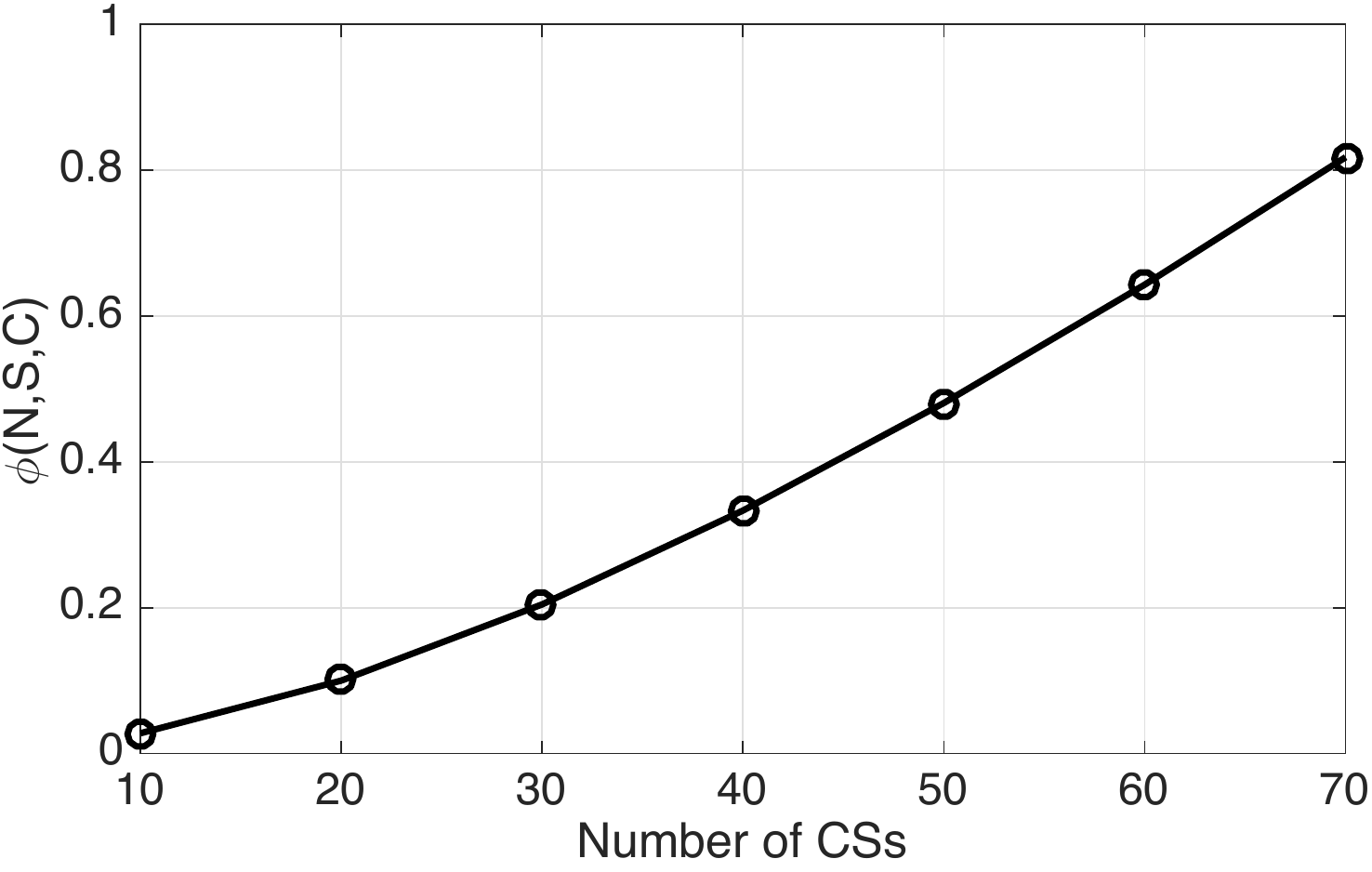}}
	\subfigure[$ \psi(N,S,C) $ in the swapping-limiting mode.]{\includegraphics[width=8cm] {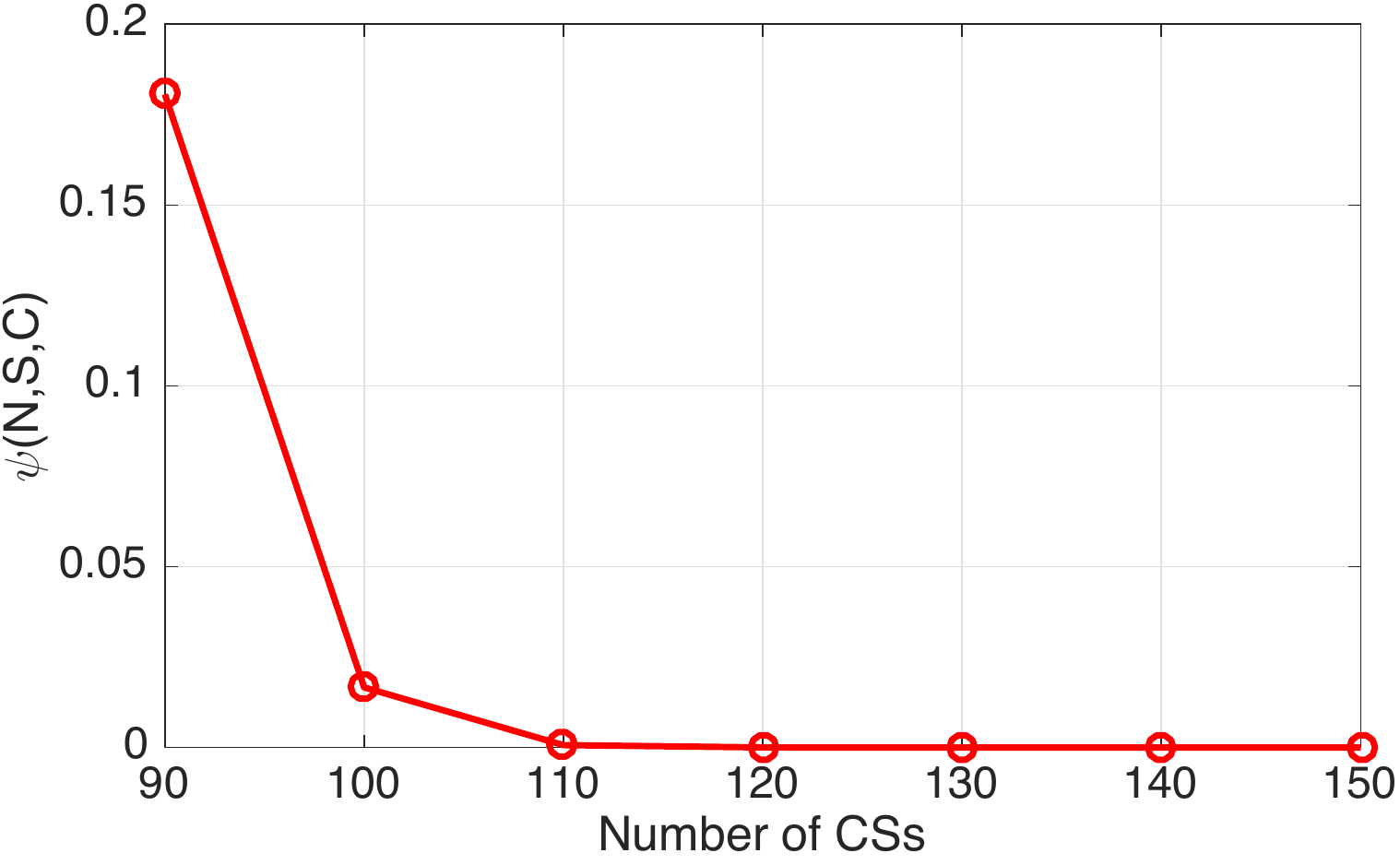}}
	\caption{Illustration of $ \phi(N,S,C) $ and $ \psi(N,S,C) $ with different numbers of CSs. For both figures, we choose  $N=12, S=2,  \lambda = 1/30, \nu=1/90 $, and $\mu=1/3600 $, thus $ \lambda\big(1-\mathbb{P}_{\mathrm{nslb}}(N,S)\big)/\mu = 79.74 $.  Therefore, $ 1\leq C\leq 79 $ represents the charging-limiting mode and $  C \geq 80$ represents the swapping-limiting mode.}
	\label{phi_N_S_C}
\end{figure}

\subsection{Validation of Lemma \ref{probability_of_enough_and_busy}}
As shown by the bottom three dashed curves (which denote the result when the charging-limiting mode is active) in Fig. \ref{probability_of_having_enough_FBs}(a), no matter how many batteries are used in the closed battery-queue, there always exists a non-zero probability  that the number of FBs is not enough to serve all the queued EVs.  Meanwhile, as shown by the top three dashed curves in Fig. \ref{probability_of_having_enough_FBs}(b), all the CSs will become busy with probability 1 when $ B $ approaches infinity. However, in the swapping-limiting mode, as depicted by the top three curves in Fig. \ref{probability_of_having_enough_FBs}(a), the probability of having enough FBs (i.e., $ \mathbb{P}_{\mathrm{enough}} $) always converges to 1, which verifies the result in \eqref{slm_asymptotic_convergence}. Therefore, the open EV-queue in the MQN will asymptotically work as an independent $ M/M/S/N $ queue as long as i) $ B $ is sufficiently large, and ii) the MQN is working in the swapping-limiting mode.

We also illustrate $ \phi(N,S,C) $ and $ \psi(N,S,C) $ with different numbers of CSs in their corresponding operating modes. As shown in Fig. \ref{phi_N_S_C}(a), $ \phi(N,S,C) $ is strictly increasing in $ C $, this follows our intuition  that more CSs will increase the probability of having enough FBs. Figure \ref{phi_N_S_C}(b) shows that $ \psi(N,S,C) $ is strictly decreasing in $ C $, which indicates that using less CSs in the closed battery-queue will make the CSs more busy. By combining these two figures together, it can be observed that when $ C $ is increasing between $ [10, 150] $,  the MQN first works in the charging-limiting mode and $ \phi(N,S,C) $ gradually increases and  approaches 1. When $ C $ becomes larger than or equal to $ \lceil \lambda\big(1-\mathbb{P}_{\mathrm{nslb}}(N,S)\big)/\mu\rceil = 80 $, the MQN switches to the swapping-limiting mode and $ \psi(N,S,C) $ sharply decreases and approaches 0. Therefore, switching between the two operating modes can trigger different asymptotic probabilistic performances for $ \mathbb{P}_{\mathrm{enough}} $ and $ \mathbb{P}_{\mathrm{busy}} $.

\begin{figure}
	\centering
	\subfigure[$ C=30 $]{\includegraphics[width=10cm]{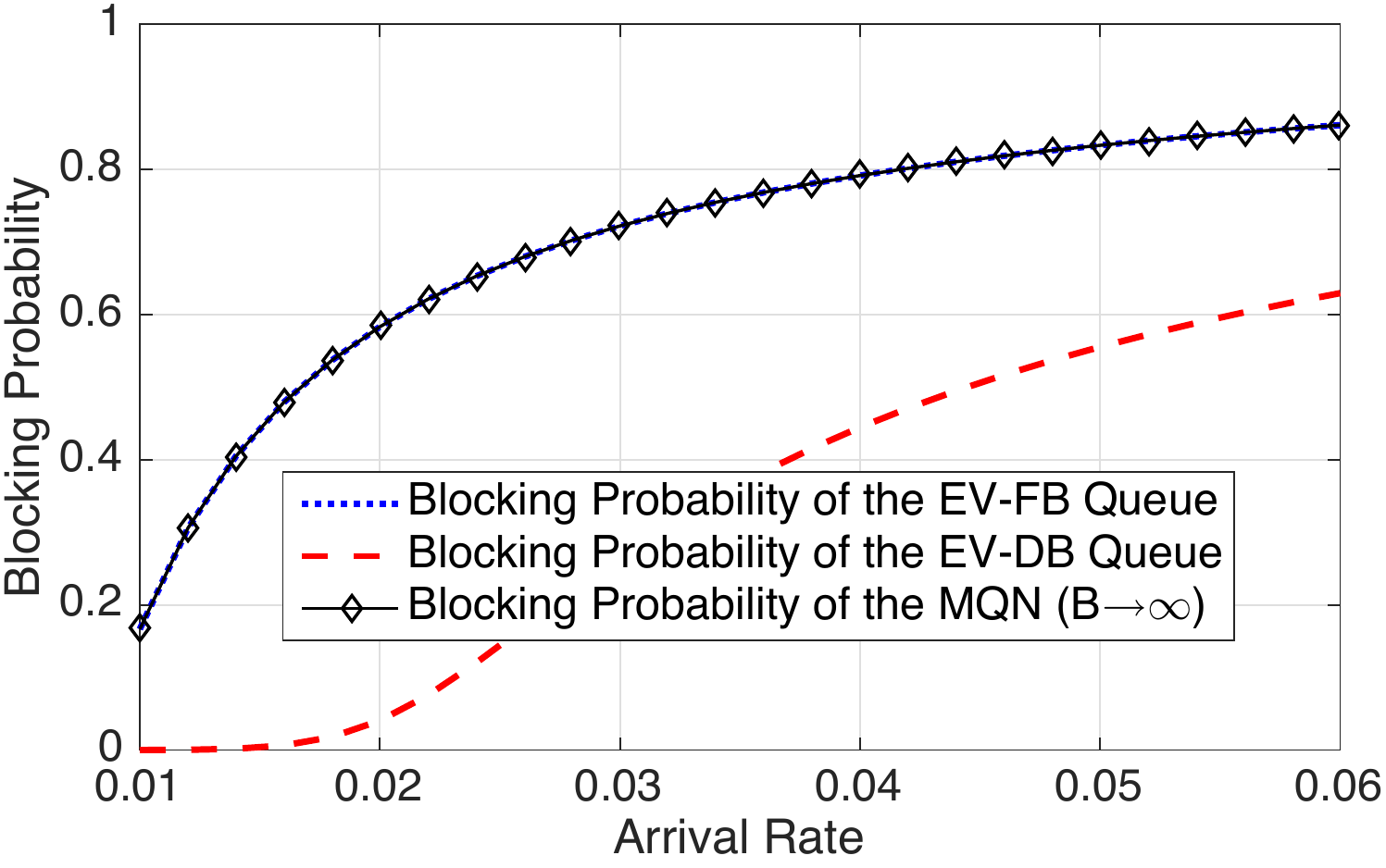}}
	\subfigure[$ C=60 $]{\includegraphics[width=10cm]{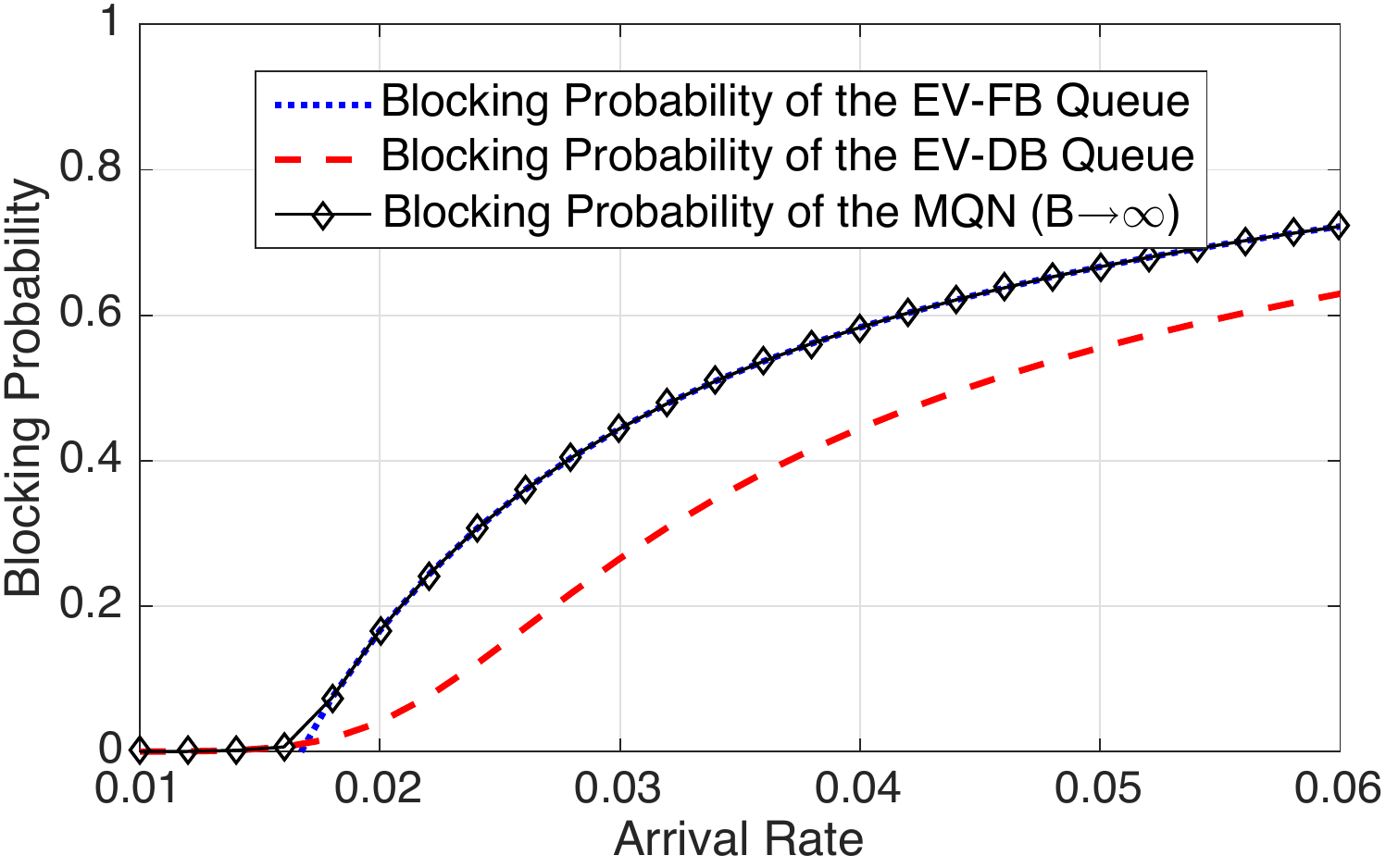}}
	\subfigure[$ C=90 $]{
		\includegraphics[width=10cm]{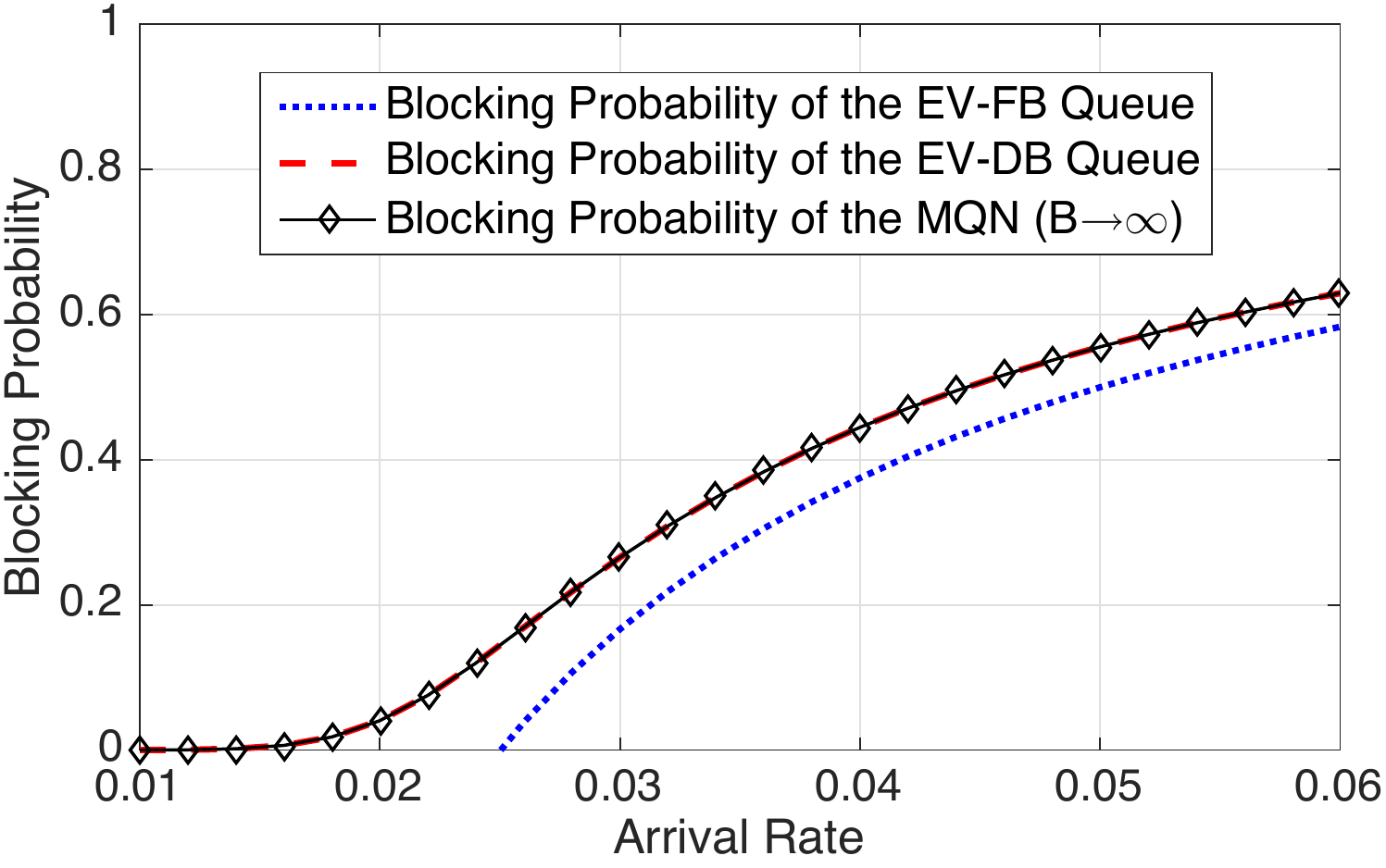}}
	\caption{Blocking probabilities with different traffic loads for different $ C $ values. The common parameters for the three figures are as follows:  $ N = 12, S =2, \nu = 1/90 $, and $ \mu = 1/3600 $.}
	\label{PoB_arrival_rate}
\end{figure}

\subsection{Blocking Probabilities with Different Traffic Loads}
Figure \ref{PoB_arrival_rate} plots  the blocking probabilities of the MQN (when $ B $ approaches infinity) with different arrival rates. The three subfigures in Fig. \ref{PoB_arrival_rate} have the same $ N, S, \nu $ and $ \mu $, and thus the $(N,S)$-limiting lower bound (i.e., the blocking probability of the EV-DB queue illustrated by the dashed curve) in these three subfigures are the same.  As shown in Fig. \ref{PoB_arrival_rate}(a), when $ C = 30 $ and $ \lambda\in[0.01,0.06] $, the blocking probability of the MQN exactly overlaps with the blocking probability of the EV-FB queue. As a result, increasing $ N $ and $ S $ cannot reduce the blocking probability of the MQN, since the blocking probability of the EV-FB queue is independent of both $ N $ and $ S $. Therefore, if the MQN is working in such a case, the operator should not invest resources for building more parking lots and SSs. Instead, effort should be spent in increasing $ C $  since it can drag the $C$-limiting lower bound closer to the $(N,S)$-limiting lower bound, as shown in the comparison between Fig. \ref{PoB_arrival_rate}(a) and Fig. \ref{PoB_arrival_rate}(b). However, after the blocking probability of the MQN completely overlaps with the $(N,S)$-limiting lower bound, as shown in Fig. \ref{PoB_arrival_rate}(c), further increasing $ C $ cannot reduce the blocking probability of the MQN anymore. In fact, Fig. \ref{PoB_arrival_rate}(c) shows that the MQN is always working in the swapping-limiting mode for traffic load $ \lambda \in[0.01,0.06] $, and the only way to reduce the blocking probability is to increase $ N $ and $ S $, instead of $ C $.  In summary, when the MQN is working in the charging-limiting mode under a certain traffic load, the blocking probability of the MQN can be reduced by increasing $ C $. However, when the MQN is working in the swapping-limiting mode as shown in Fig. \ref{PoB_arrival_rate}(a), the blocking probability of the MQN can only be reduced by increasing $ N $  and $ S $.

Note that  as illustrated in Fig. \ref{PoB_arrival_rate}(b),  the MQN switch from the swapping-limiting mode when $ \lambda\in [0.01,0.06] $ to the charging-limiting mode when $ \lambda = [0.016,0.06] $.  This switching phenomenon shows that the asymptotic performance of the BSCS can be considerably changed even the traffic load only has a minor change around a particular threshold (e.g., $ \lambda = 0.016 $ in Fig. \ref{PoB_arrival_rate}(b)). 

\begin{figure}
	\centering
	\subfigure[$ C=10 $]{\includegraphics[width=10cm]{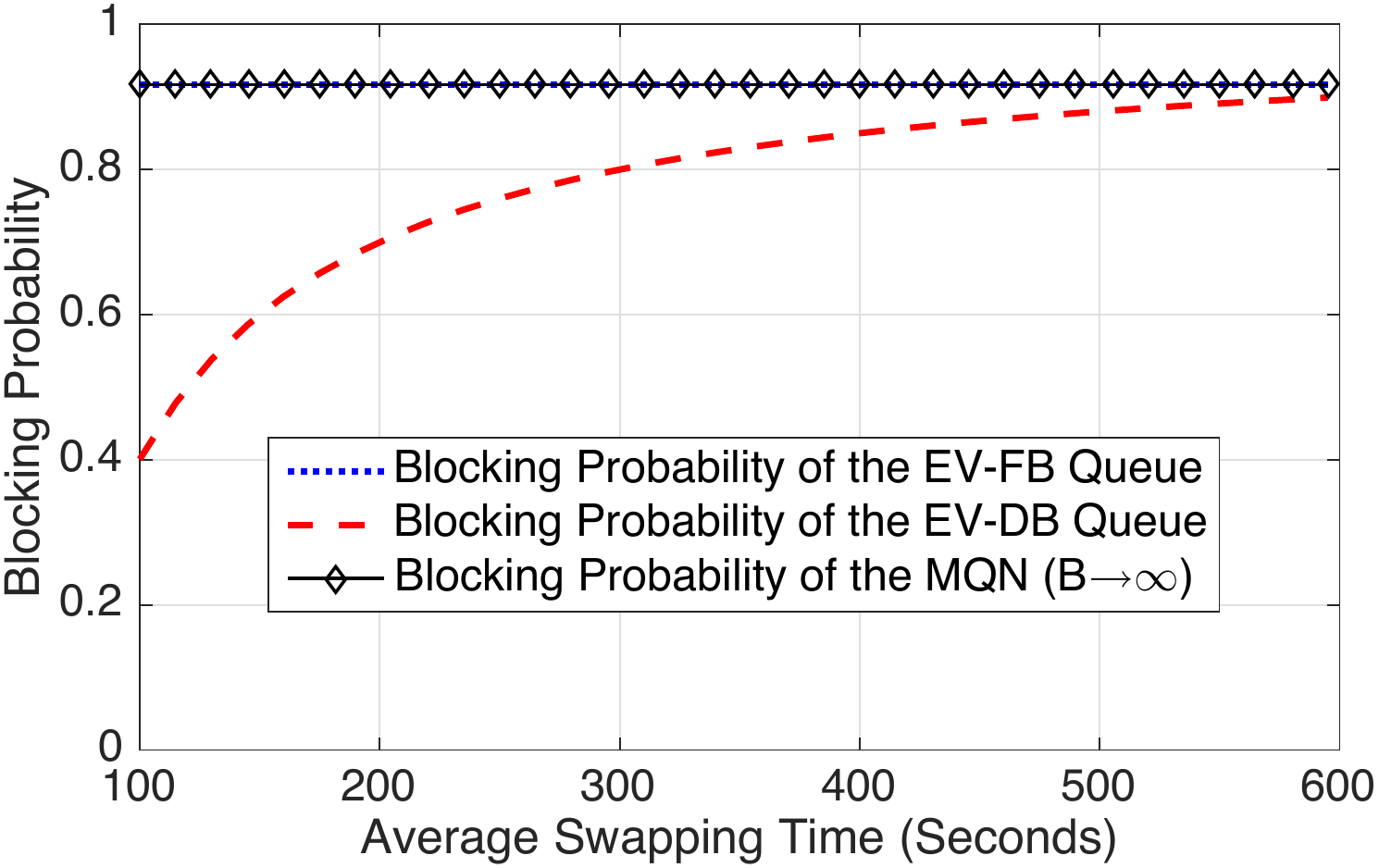}}
	\subfigure[$ C=60 $]{\includegraphics[width=10cm]{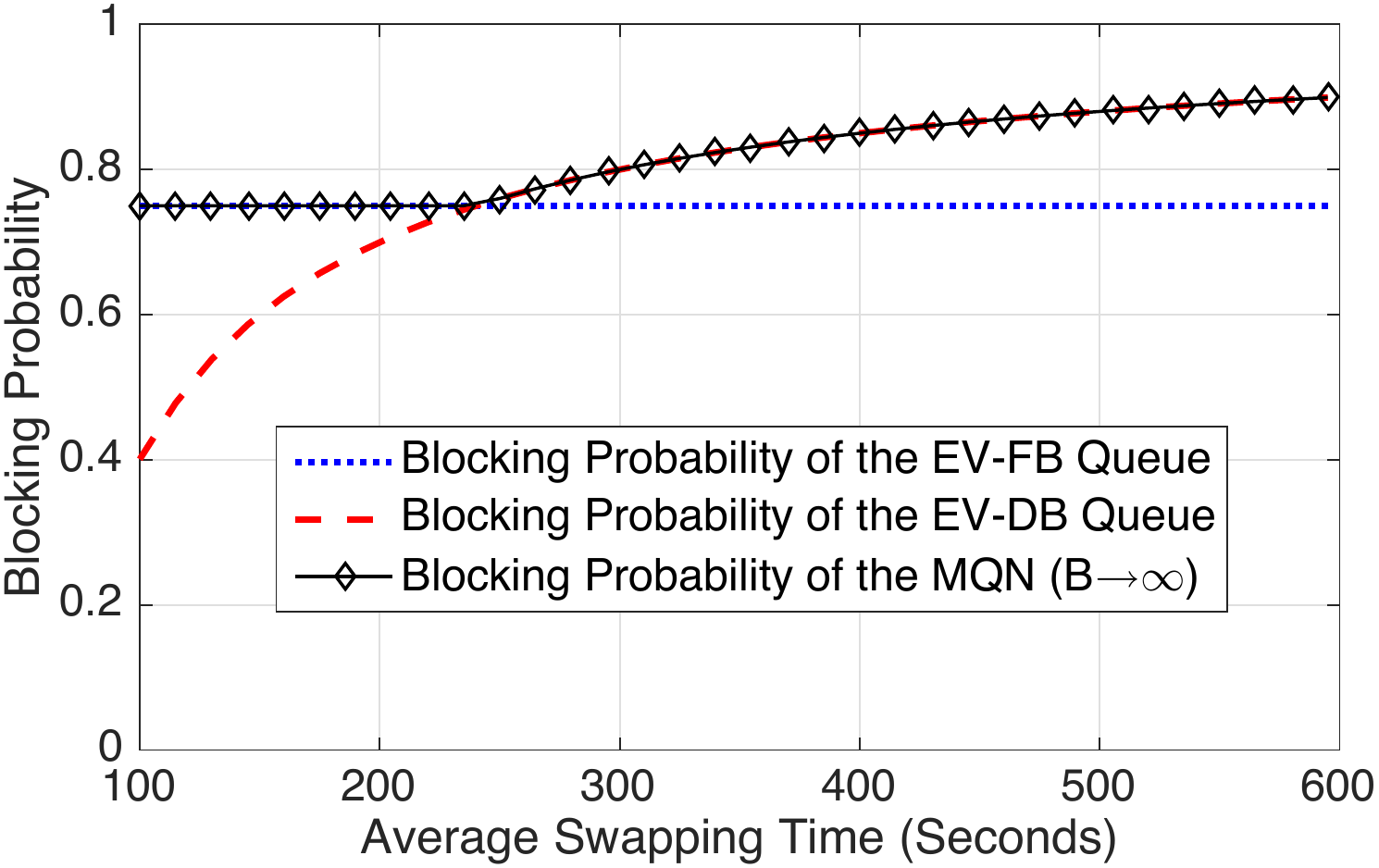}}
	\subfigure[$ C=90 $ ]{\includegraphics[width=10cm]{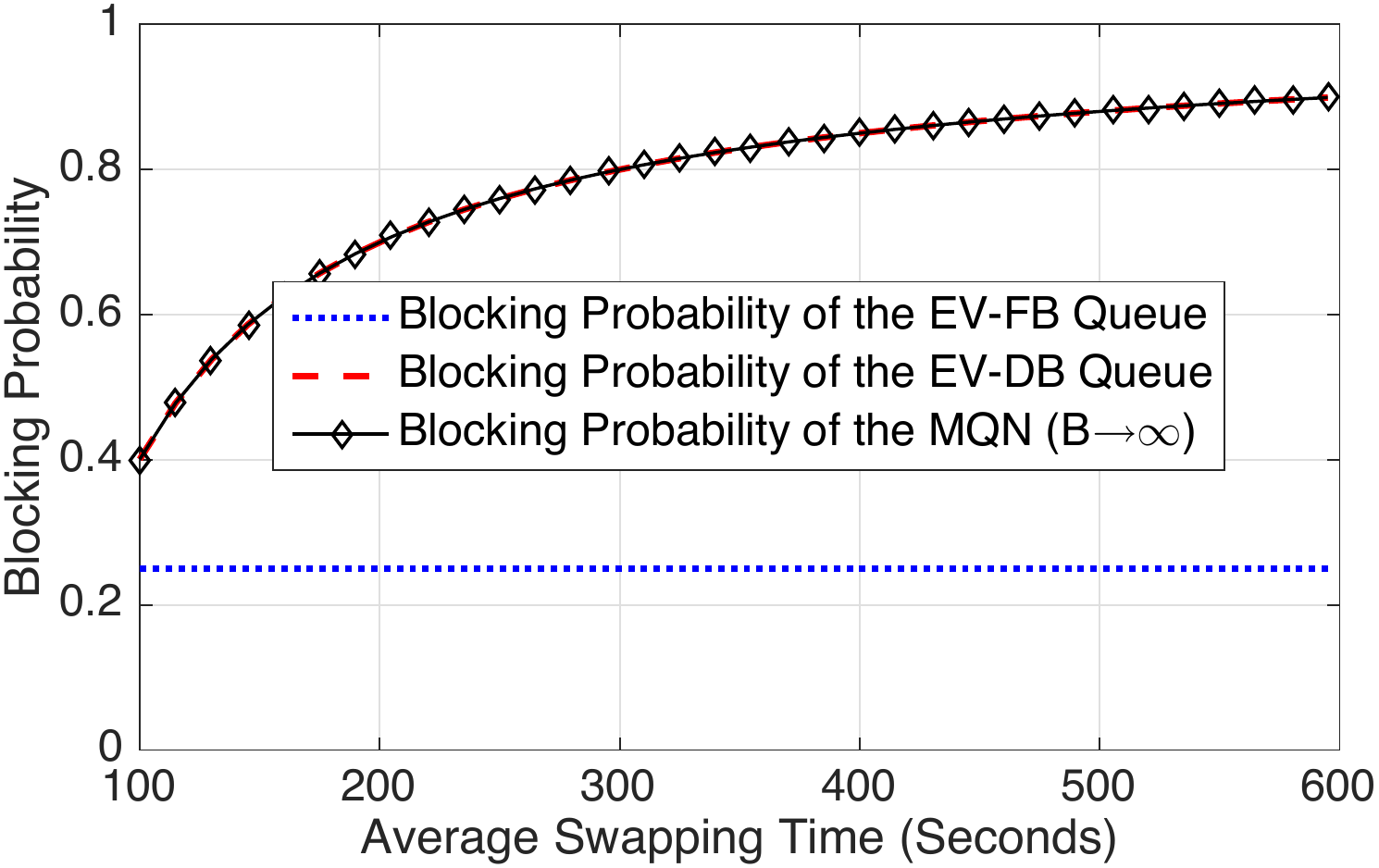}}
	\caption{Blocking probabilities with different swapping rates. The common parameters for the three figures are as follows:  $ N = 12, S =2, \lambda = 1/30 $, and $ \mu = 1/3600 $.}
	\label{PoB_swapping_rate}
\end{figure}

\begin{figure}
	\centering
	\subfigure[$ C=20 $]{\includegraphics[width=10cm]{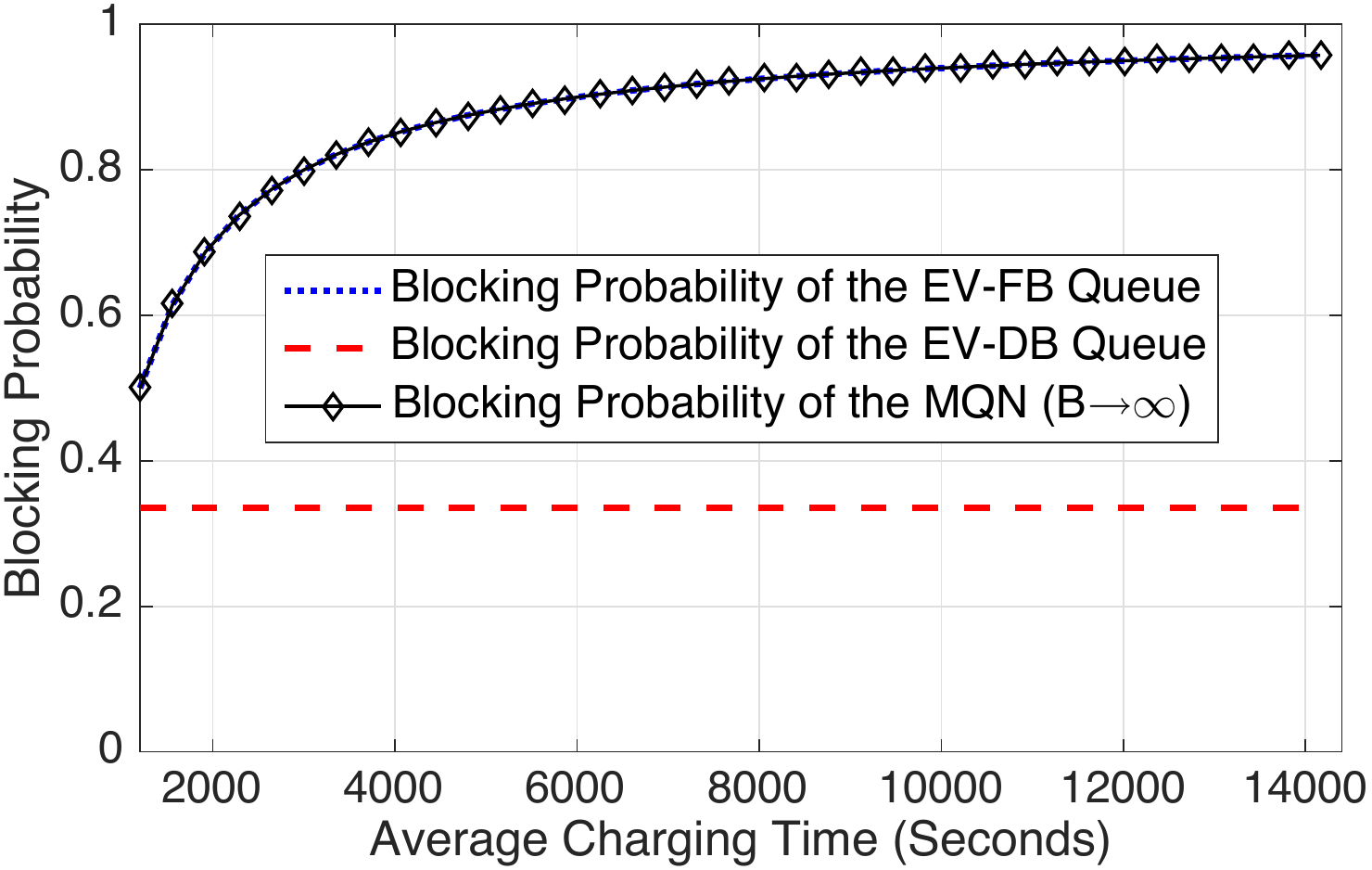}}
	\subfigure[$ C=120 $]{\includegraphics[width=10cm]{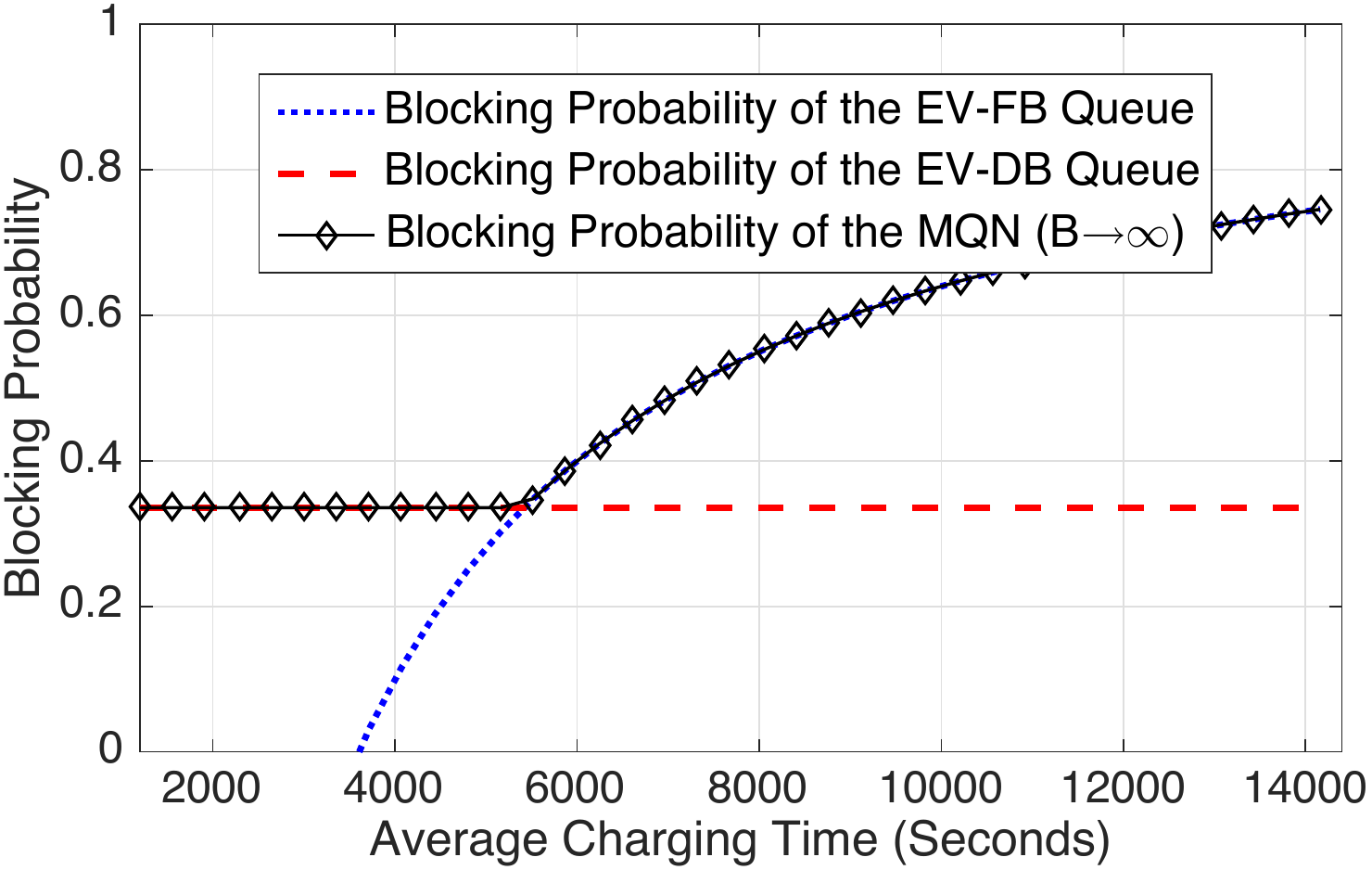}}
	\subfigure[$ C=330 $ ]{\includegraphics[width=10cm]{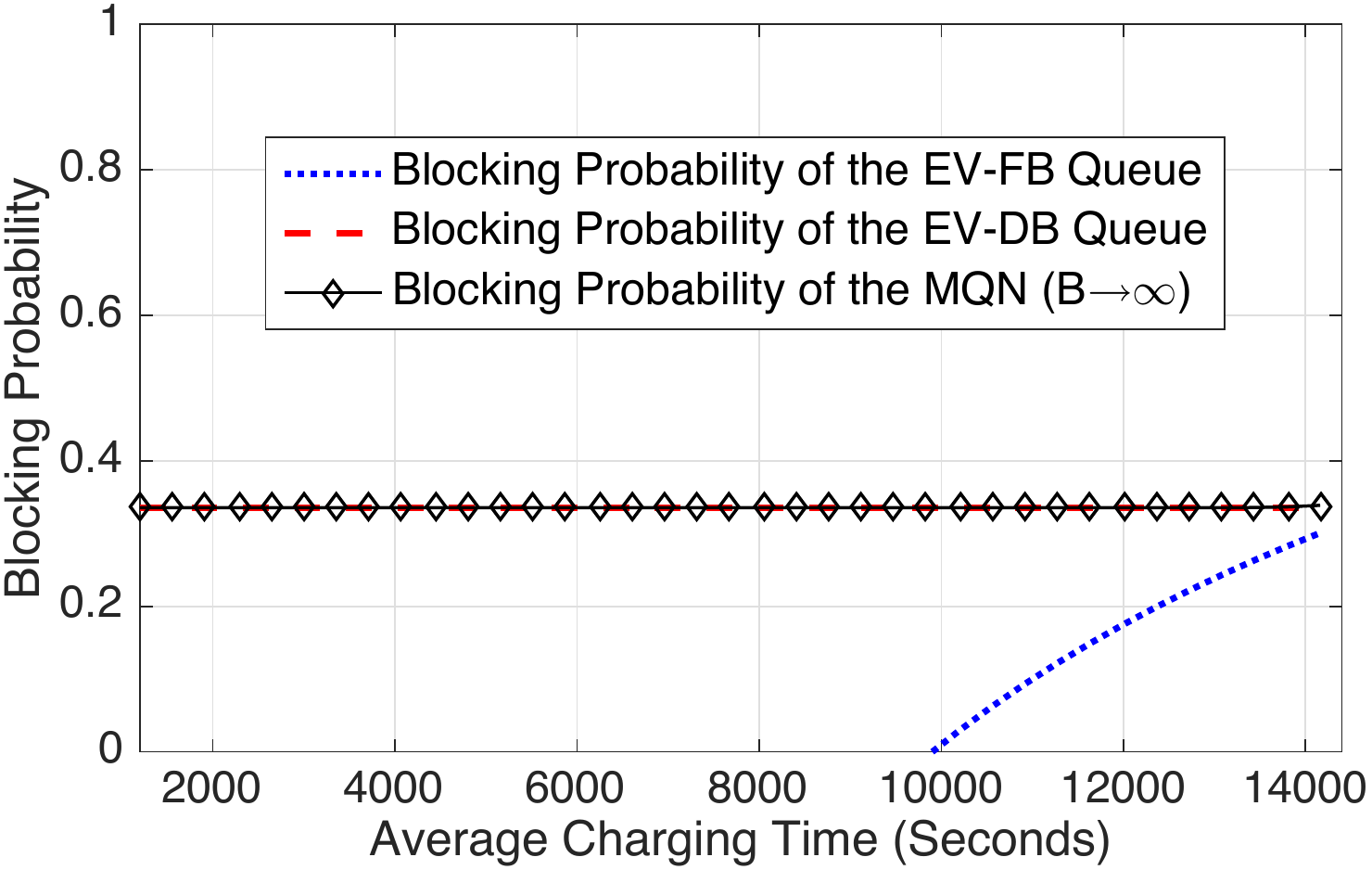}}
	\caption{Blocking probabilities with different charging rates. The common parameters for the three figures are chosen as follows:  $ N = 12, S =2, \lambda = 1/30 $, and $ \nu = 1/90 $.}
	\label{PoB_charging_rate}
\end{figure}

\subsection{Blocking Probabilities with Different Swapping Rates}
Figure \ref{PoB_swapping_rate} shows the blocking probabilities of the MQN (when $ B $ approaches infinity) with different swapping rates. The three subfigures in Fig. \ref{PoB_swapping_rate} have the same $ N, S, \lambda$ and $ \nu $, thus the $(N,S)$-limiting lower bound is the same among these three subfigures. As we can see from Fig. \ref{PoB_swapping_rate}(a), when the MQN is working in the charging-limiting mode, the blocking probability of the MQN remains as a constant when $ \nu $ changes. Therefore, it is impossible to reduce the blocking probability of the MQN by unilaterally increasing the swapping rate (or equivalently, decreasing the average swapping time) in the charging-limiting mode. In comparison, the blocking probability of the MQN is sensitive to the swapping rate when the MQN is working in the swapping-limiting mode, as shown in Fig. \ref{PoB_swapping_rate}(c). In particular, the comparison between Fig. \ref{PoB_swapping_rate}(a) and Fig. \ref{PoB_swapping_rate}(c) demonstrates that increasing $ C $ can drag the blocking probability of the MQN closer to the $(N,S)$-limiting lower bound, and the impact is more significant when the swapping rate is high. However, an exceptional case is illustrated by Fig. \ref{PoB_swapping_rate}(b), where the blocking probability of the MQN remains as a constant when the swapping-rate is high. Therefore, it is important to check the current operating modes before making any further investment in increasing the swapping rate. For instance, Fig. \ref{PoB_swapping_rate}(b) depicts that further decreasing  the average swapping time when it is already less than 240 seconds has no positive effect on reducing the blocking probability of the MQN.

\subsection{Blocking Probabilities with Different Charging Rates}
Figure \ref{PoB_charging_rate} shows the blocking probabilities of the MQN (when $ B $ approaches infinity) with different charging rates.  The three subfigures in Fig. \ref{PoB_charging_rate} have the same $ N, S, \nu $ and $ \mu $, thus the $(N,S)$-limiting lower bound (i.e., the blocking probability of the EV-DB queue illustrated by the dashed curve) in these three subfigures are the same constant.  When the MQN is working in the charging-limiting mode, the blocking probability of the MQN is strictly increasing in $ 1/\mu$, meaning that a longer charging time will increase the achievable lower bound. Moreover, this blocking probability can be greatly reduced if  $ C $ becomes larger, as shown by the comparison between Fig. \ref{PoB_charging_rate}(a) and Fig. \ref{PoB_charging_rate}(b). However, further increasing $ C $ cannot arbitrarily reduce the blocking probability of the MQN, as shown by the comparison between Fig. \ref{PoB_charging_rate}(b) and Fig. \ref{PoB_charging_rate}(c). In fact, Fig. \ref{PoB_charging_rate}(c) shows that the blocking probability of the MQN will not be influenced by the charging rate as long as the MQN is working in the swapping-limiting mode. Therefore, there is no need to increase the charging speed once the MQN in the swapping-limiting mode.

\subsection{Justification of Theoretic Results}
Our theoretic analysis is mainly based on the assumption that both the charging time and the swapping time are exponentially distributed. In practice, the swapping and charging distributions have finite supports and thus deviate from the exponential distribution.  Therefore, it is important to quantify the gap between our analytical results and the numerical results based on practical distributions with finite supports. To this end, we evaluate the blocking probability of the proposed MQN via Monte Carlo (MC) simulation.  The distributions for the swapping and charging time used in our MC simulation are shown in Fig. \ref{distributions}. In particular, the two examples of swapping distribution (SD) in Fig. \ref{distributions}(a) both follow the truncated normal distribution within range $ [60,120] $, and the mean is chosen to be 90 seconds, which is the same as that of the exponential distribution (i.e., $ 1/\nu = 90 $)\footnote{SD-I and SD-II can be considered the distributions for autonomous swapping and manual swapping, respectively. We keep the mean of SD-I and SD-II the same as the exponential distribution in order to have a fair comparison.}. Similarly, the two truncated normal distributions in Fig. \ref{distributions}(b) represent two types of charging distribution (CD) with the same finite support (i.e.,  $ [3000, 4200] $) but different variance. Meanwhile, both CD-I and CD-II have the same mean as the corresponding exponential swapping distribution (i.e., $ 1/\mu = 3600 $). Without loss of generality, we set all the other parameters of the BSCS as follows: $ N=12, S=2, C=120, \lambda = 1/30$. 

\begin{figure}
	\centering
	\subfigure[Two examples of swapping distributions]{\includegraphics[width=8cm]{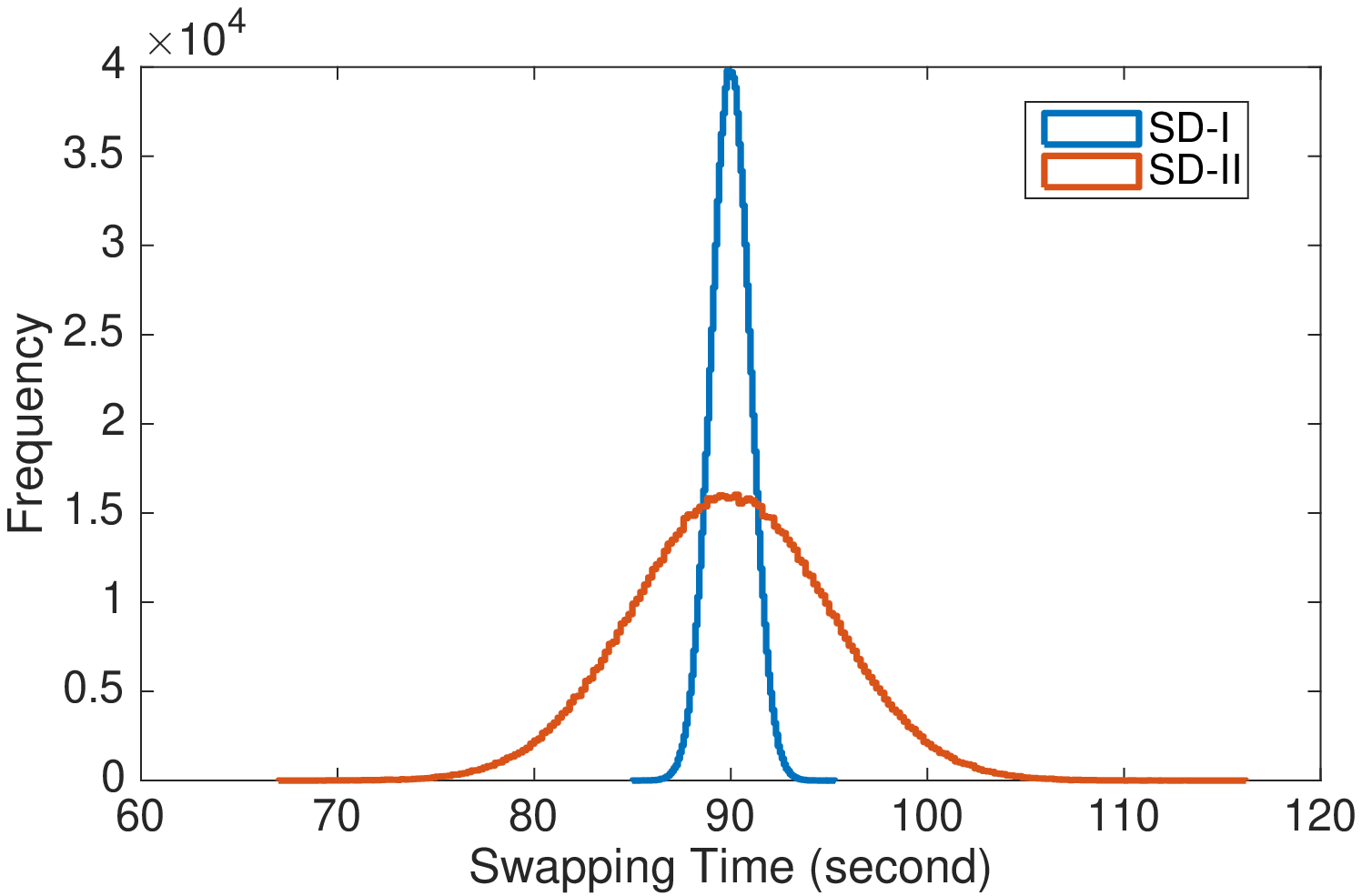}}
	\subfigure[Two examples of charging distributions]{\includegraphics[width=8cm]{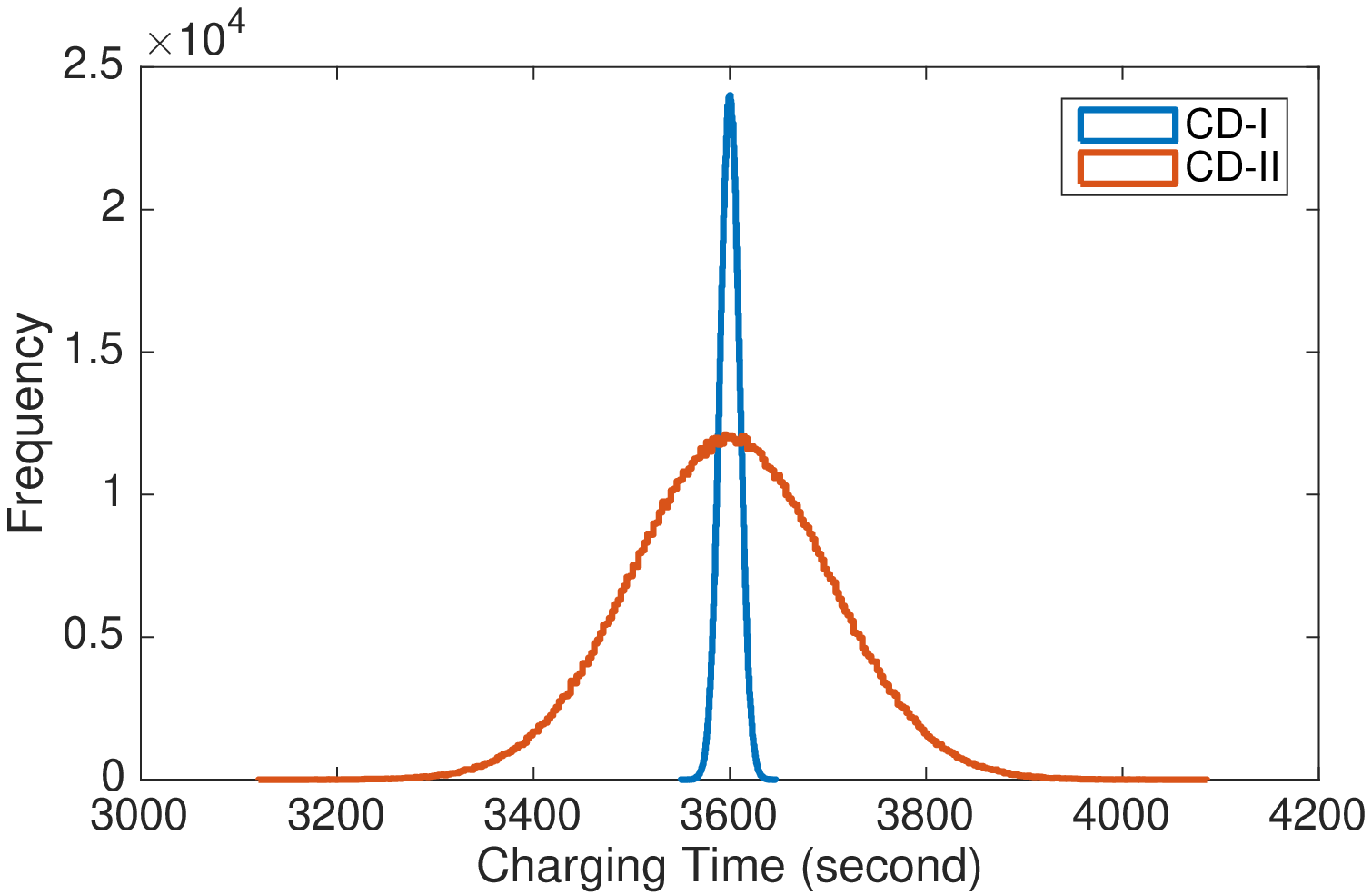}}
	\caption{Examples for the swapping and charging distributions. }
	\label{distributions}
\end{figure}

\begin{table}
	\caption[]{Comparison between theoretic analysis and MC simulations}.
	\centering
	\begin{tabular}{ c c  c c c c}
		\hline\hline
		$ B $ & \textbf{AR}  & \textbf{SR-I (Gap)} & \textbf{SR-II (Gap)} & \textbf{SR-III (Gap)}  & \textbf{SR-IV (Gap)} \\
		\hline
		
		10 &	0.9187  & 0.9186 (0.01\%) & 0.9185 (0.02\%) & 0.9185 (0.02\%) & 0.9187 (0.00\%) \\
		
		30 & 0.7569  & 0.7563 (0.08\%) & 0.7566 (0.04\%) & 0.7560 (0.12\%) & 0.7562 (0.09\%) \\
		
		50 & 0.5983  & 0.5945 (0.64\%) & 0.5950 (0.55\%) & 0.5947 (0.61\%) & 0.5963 (0.34\%) \\
		
		70 & 0.4518  & 0.4352 (3.81\%) & 0.4413  (2.38\%) & 0.4350 (3.86\%) & 0.4421 (2.19\%)\\
		
		90 & 0.3543  & 0.3326 (6.52\%) & 0.3356 (5.57\%) & 0.3335 (6.24\%) & 0.3358 (5.51\%)\\
		
		110 & 0.3383  & 0.3329 (1.62\%) & 0.3343 (1.20\%) & 0.3331 (1.56\%) & 0.3329 (1.62\%)\\
		
		130 & \underline{0.3382}  & 0.3331 (1.53\%) & 0.3339 (1.29\%) & 0.3331 (1.53\%) & 0.3324 (1.74\%)\\
		\hline
	\end{tabular}
	\label{table_slm}
\end{table}

We denote the analytical blocking probability of the MQN as AR (i.e., analytical result),  the simulation result (SR) based on SD-I and CD-I (CD-II) as SR-I (SR-II), and the SR based on SD-II and CD-I (CD-II) as SR-III (SR-IV). Table \ref{table_slm} shows the comparison between AR, SR-I, SR-II, SR-III, and SR-IV. Note that all the MC simulation results in Table \ref{table_slm} are average of 100 times MC simulations. For each simulation, we set the length of total time horizon as 30 days, which is long enough to reach the steady-state of the MQN. We can see from Table \ref{table_slm} that for most of the cases, the ARs and the SRs are very close to each other (below 2\%), and even the maximum gap between ARs and SRs is still less than 7\% (when $ B= 90 $). Meanwhile, it is worth pointing out that all the lower bounds of the blocking probability are very close to the theoretic $ (N,S)$-limiting lower bound (i.e., the underlined 0.3382 shown in Table \ref{table_slm}). The comparison shown in Table \ref{table_slm} demonstrates that the blocking probability will not be greatly affected by the specific swapping and charging distribution as long as the mean is kept the same. Therefore, the theoretic results can serve as a very accurate approximation for the BSCS with practical swapping and charging distributions. 

\section{Conclusions}
\label{conclusion}
In this paper, we adopted a queueing theoretical approach to evaluate the asymptotic performance of a BSCS. The BSCS was modeled as a novel MQN comprised of an open EV-queue and a closed battery-queue. We derived the balance equations for the queueing system and obtained the corresponding steady-state distribution. We further proposed the concept of multi-stage capacity planning, by which the four critical parameters of the BSCS will be considered in three different planning stages. In each stage, we quantified the relationship between the blocking probability and the corresponding planning parameters. As the main result, we showed that the number of CSs will distinguish two difference operating modes for the BSCS, namely, the charging-limiting mode and the swapping-limiting mode, and each limiting mode has a completely different asymptotic convergence property with respect to the number of batteries. Moreover, we proved the asymptotic ergodicity of the system in both the charging-limiting mode and the swapping-limiting mode. Extensive numerical results validated the proposed queueing model and showed practical insights for the planning and operations of BSCSs.

\section*{References}

\newpage
\appendix
\section{Proof of Lemma \ref{irreducibility}}
\label{appendix_proof_irreducibility}
The irreducibility of $ \mathbf{Q}^{(\mathrm{EVFB})} $ and $ \mathbf{Q}^{(\mathrm{EVDB})} $ can be seen from the CTMC illustrated in Fig. \ref{CTMC_equivalent_model_I}. Here, we briefly sketch the proof for $ \mathbf{Q}^{(\mathrm{EVFB})} $ and skip that for $ \mathbf{Q}^{(\mathrm{EVDB})} $ due to similarity. To prove $ \mathbf{Q}^{(\mathrm{EVFB})} $ is irreducible, it is equivalent to show that every two states can be mutually accessible from each other. In fact, let us randomly pick two states $ (n,b) $ and $ (n',b') $, where $ n,n'\in\mathcal{N} $, and $ b,b'\in\{0,1,\cdots, B\} $. We consider a case when  $ n'\geq n $ and $ b'\geq b $. In order to prove $ (n,b)\leftrightarrow (n',b')$, i.e., $ (n,b)  $ and $ (n',b') $ are mutually accessible from each other, it suffices to prove $
(n,b)\leftrightarrow (n',b) \leftrightarrow (n',b') $.
First, since $ n'\geq n $, it is trivial to see that there is direct transition between state $ (n,b) $ and state $ (n',b) $ with a non-zero probability (i.e., the transitions moving downward in each column of Fig. \ref{CTMC_equivalent_model_I}(a)). Second, since $ b'\geq b $, we can always have $(n',b)\leftrightarrow (n',b+1)\leftrightarrow\cdots\leftrightarrow (n',b')$ (i.e., the transitions moving rightward in each row of Fig. \ref{CTMC_equivalent_model_I}(a)). Therefore, we prove the communicability between these two states. Likewise for other three cases based on combinations of $ n'\geq n, n'<n, b'\geq b $ and $ b'<b $. 

\section{Proof of Lemma \ref{positive_recurrent}}
\label{appendix_proof_positive_recurrent}
We only prove the first half of this lemma since the proof of the second half is similar. Define matrix $\mathbf{A} \triangleq \mathbf{F} + \mathbf{L} + \mathbf{D}$, where $ \mathbf{F}, \mathbf{L} $ and $ \mathbf{D} $ are given by \eqref{matrices_F_L_D}. Thus, $ \mathbf{A} $ is given as
\begin{align}
\mathbf{A} = \left[
\begin{array}{ccccc}
-\lambda & \lambda \\
\nu & -(\lambda + \nu) & \lambda\\
& 2\nu & -(\lambda + 2\nu) & \lambda\\
& & \ddots & \ddots & \ddots\\
& & & S \nu & -S \nu\\
\end{array}
\right].
\end{align}
Observe that $ \mathbf{A} $ is the transition rate matrix of an $M/M/S/N$ queue whose arrival rate and exponential service rate are $\lambda$ and $\nu$, respectively. Therefore, the Markov chain $ \mathbf{A} $ is definitely irreducible and positive recurrent, and it has a unique stationary distribution. Recall that the blocking probability of the $ M/M/S/N $ queue corresponds to the Markov chain $ \mathbf{A} $, which has been given by $\mathbb{P}_{\mathrm{nslb}}(N,S)  $  in \eqref{P_best_N_S}.

We denote the unique stationary distribution of the Markov chain $ \mathbf{A} $ as $\mathbf{p} =(p_0,p_1,\cdots,p_N) $. When being stationary, the average departure rate of the $M/M/S/N$ queue equals the rate at which customers (EVs in this paper) arrive and enter the system. Therefore, the following equality holds
\begin{align}
\sum_{n=0}^{N}p_n\min\{n,S\}\nu
= \ \lambda \Big(1 - \mathbb{P}_{\mathrm{nslb}}(N,S)\Big).
\label{definition_of_P_best}
\end{align}
The left-hand-side of \eqref{definition_of_P_best} can be equivalently transformed into a more compact form as
\begin{align}
\sum_{n=0}^{N}p_n\min\{n,S\}\nu = \mathbf{p}\mathbf{D}\mathbf{e}.
\end{align}
Therefore, $ C\leq \lfloor \lambda\big(1-\mathbb{P}_{\mathrm{nslb}}(N,S)\big)/\mu\rfloor $ is equivalent to the following inequality:
\begin{align}
C\mu = \mathbf{p}\mathbf{F}\mathbf{e} < \lambda\Big(1 - \mathbb{P}_{\mathrm{nslb}}(N,S)\Big) = \mathbf{p}\mathbf{D}\mathbf{e}.
\end{align}

Theorem 3.1.1 in \cite{matrix_geometric_method} has proved that an irreducible Markov chain in the form of $ \mathbf{Q}^{(\mathrm{EVFB})} $ defined in \eqref{Q_EVFB} is positive recurrent if and only if $ \mathbf{p}\mathbf{F}\mathbf{e} < \mathbf{p}\mathbf{D}\mathbf{e} $. We thus complete the proof.

\section{Proof of Lemma \ref{probability_of_enough_and_busy}}
\label{appendix_proof_probability_of_having_FB_shortage}
\textbf{First, we show how to calculate the steady-state distributions for the two sub-queueing networks}. 
The CTMCs defined by $ \mathbf{Q}^{(\mathrm{EVFB})} $ and $ \mathbf{Q}^{(\mathrm{EVDB})} $ are two quasi-birth-death (QBD) processes, whose stationary distributions can be computed by using the matrix geometric method \cite{matrix_geometric_method}. Since the irreducible Markov chain $ \mathbf{Q}^{(\mathrm{EVFB})} $ is positive recurrent when the BSCS is working in the charging-limiting mode, based on Theorem 1.2.1 in \cite{matrix_geometric_method}, the minimal nonnegative solution $\mathbf{R}$ to the matrix-quadratic equation $\mathbf{R}^2 \mathbf{D} + \mathbf{R} \mathbf{L} + \mathbf{F} = 0 $ has all its eigenvalues being less than 1, and the finite system of equations
\begin{align}
[\boldsymbol{\pi}_{0:S-1}^{(\mathrm{EVFB})}, \boldsymbol{\pi}_{S}^{(\mathrm{EVFB})}] \left[
\begin{array}{ccc}
\mathbf{e} & \tilde{\mathbf{L}}_{\mathrm{00}} &\mathbf{F}_{\mathrm{01}}\\
(\mathbf{I}-\mathbf{R})^{-1} \mathbf{e} & \tilde{\mathbf{D}}_{\mathrm{10}} & \mathbf{L} + \mathbf{R} \mathbf{D}
\end{array}\right] = [1,\mathbf{0}]
\label{clm_finite_equations}
\end{align} 
has a unique positive solution $\boldsymbol{\pi}_{0:S-1}^{(\mathrm{EVFB})}$ and $\boldsymbol{\pi}_S^{(\mathrm{EVFB})}$, where $ \boldsymbol{\pi}_{0:S-1}^{(\mathrm{EVFB})} \triangleq [ \boldsymbol{\pi}_0^{(\mathrm{EVFB})}$,  $\cdots,\boldsymbol{\pi}_{S-1}^{(\mathrm{EVFB})}] $. Moreover, we have $ \boldsymbol{\pi}_b^{(\mathrm{EVFB})} = \boldsymbol{\pi}_{S}^{(\mathrm{EVFB})} \mathbf{R}^{b-S}, \forall b=\{S+1, \cdots,\infty\} $. Note that $ \tilde{\mathbf{L}}_{\mathrm{00}} $ and $ \tilde{\mathbf{D}}_{\mathrm{10}} $ in \eqref{clm_finite_equations} are respectively $ \mathbf{L}_{\mathrm{00}} $ and $ \mathbf{D}_{\mathrm{10}} $ with the first column being eliminated.

Similarly, in the swapping-limiting mode, the unique stationary distribution of $ \mathbf{Q}^{(\mathrm{EVDB})} $ can be computed by solving the following finite system of equations
\begin{align}
[\boldsymbol{\pi}^{(\mathrm{EVDB})}_{0:C-1}, \boldsymbol{\pi}^{(\mathrm{EVDB})}_{C}] \left[
\begin{array}{ccc}
\mathbf{e} & \tilde{\mathbf{L}}^{\mathrm{N}}_{\mathrm{00}} &\mathbf{D}^{\mathrm{N}}_{\mathrm{01}}\\
(\mathbf{I}-\mathbf{M})^{-1} \mathbf{e} & \tilde{\mathbf{F}}^{\mathrm{N}}_{\mathrm{10}} & \mathbf{L} + \mathbf{M} \mathbf{F}
\end{array}\right] = [1,\mathbf{0}],
\label{slm_finite_equations}
\end{align} 
where $ \boldsymbol{\pi}_{0:C-1}^{(\mathrm{EVDB})}=[ \boldsymbol{\pi}_0^{(\mathrm{EVDB})},\cdots,\boldsymbol{\pi}_{C-1}^{(\mathrm{EVDB})}] $, and $ \boldsymbol{\pi}^{(\mathrm{EVDB})}_{C} $ denote the stationary distribution. $\mathbf{M}$ is the minimal nonnegative solution to the matrix-quadratic equation
$ \mathbf{M}^2 \mathbf{F} + \mathbf{M} \mathbf{L} + \mathbf{D} = 0 $, whose eigenvalues are all less than 1. Moreover, we  have $ \boldsymbol{\pi}^{(\mathrm{EVDB})}_{j} = \boldsymbol{\pi}^{(\mathrm{EVDB})}_{C} \mathbf{M}^{(j-C)},\forall j=\{ C+1, \cdots,\infty\} $.  Note that $ \tilde{\mathbf{L}}^{\mathrm{N}}_{\mathrm{00}} $ and $ \tilde{\mathbf{F}}^{\mathrm{N}}_{\mathrm{10}} $ in \eqref{slm_finite_equations} are respectively $ \mathbf{L}^{\mathrm{N}}_{\mathrm{00}} $ and $ \mathbf{F}^{\mathrm{N}}_{\mathrm{10}} $ with the first column being eliminated.

\textbf{Second, we show the proof of \eqref{probability_of_enough_and_busy_clm} in Lemma \ref{probability_of_enough_and_busy}}. Based on Lemma \ref{positive_recurrent}, we know that if the charging-limiting mode is active, i.e., $ C\leq \lfloor \lambda\big(1-\mathbb{P}_{\mathrm{nslb}}(N,S)\big)/\mu\rfloor $, then the EV-FB queue is stable and there exists a unique steady-state distribution for the EV-FB queue. Note that the average input rate of the FB-queue in the MQN is always less than or equal to $ C\mu $, thus when $ B $ approaches infinity, the open EV-queue and the FB-queue in the original MQN must also be stable. Therefore, there exists a unique steady-state distribution for the open EV-queue and the FB-queue of the original MQN. However, when $ B $ approaches infinity, at least one of the two queues (i.e., the DB-queue and the FB-queue) in the closed battery-queue  is  unstable\footnote{This can be proved by contradiction as follows: if both of these two queues are stable, meaning that they both have unique stationary distributions, then there must be a non-zero stationary probability that the total number of batteries within the closed battery-queue is finite, which is definitely not true as $ B $ is infinity.}. Therefore, in the charging-limiting mode, the DB-queue in the MQN must be unstable, which means that the queue length of the DB-queue does not have a stationary distribution and keeps increasing when $ B $ increases. As a result, an infinite number of  DBs will be backlogged in the DB-queue, which makes all the CSs close to busy at all the time. Since each CS's charging time follows an exponential distribution with rate $ \mu $, the total input rate of the FB-queue will thus approach $ C\mu $. Therefore, in the charging-limiting mode, the open EV-queue and the FB-queue of the MQN will converge to the EV-FB queue when $ B $ approaches infinity.

Recall that when the charging-limiting mode is active,  the EV-FB queue is ergodic and it  has a unique steady-state distribution. Moreover, this unique distribution $\{ \boldsymbol{\pi}_b^{(\mathrm{EVFB})}\}_{\forall b} $ can be obtained by solving the finite system of equations \eqref{clm_finite_equations}. Therefore, based on the definition of $ \mathbb{P}_{\mathrm{enough}} $ (i.e., Equation \eqref{definition_enough_busy}), we have
\begin{align} 
\lim\limits_{B\rightarrow\infty} \mathbb{P}_{\mathrm{enough}}= & 1-\lim\limits_{B\rightarrow\infty}\sum_{n=1}^{N}\sum_{b=0}^{\min\{n,S\}-1}\pi_{n,b} \\
= &1 - \sum_{n=1}^{N}\sum_{b=0}^{\min\{n,S\}-1}\pi^{(\mathrm{EVFB})}_{n,b}\\
=& 1- \sum_{b=0}^{S-1}\boldsymbol{\pi}_b^{(\mathrm{EVFB})}\mathbf{e}_b \triangleq \phi(N,S,C),
\end{align}
where  $ \mathbf{e}_b $ is a $ (N+1)\times1 $ column vector with entries between $ (b+2) $-th and $ (N+1) $-th being 1 and 0 otherwise. For instance, if $ b = 0 $, $ \mathbf{e}_b = (0, 1, 1, \cdots, 1)^{\intercal} $. Meanwhile, based on the definition of $ \mathbb{P}_{\mathrm{busy}} $ (i.e., Equation \eqref{definition_enough_busy}), we have
\begin{align}
\lim\limits_{B\rightarrow\infty} \mathbb{P}_{\mathrm{busy}}= \lim\limits_{B\rightarrow\infty}\sum_{n=0}^{N}\sum_{b=0}^{B-C}\pi_{n,b}
=\sum_{n=0}^{N}\sum_{b=0}^{\infty}\pi^{(\mathrm{EVFB})}_{n,b}=1.
\end{align}
We thus complete the proof of \eqref{probability_of_enough_and_busy_clm} in Lemma \ref{probability_of_enough_and_busy}.

\textbf{Finally, we show the proof of  \eqref{probability_of_enough_and_busy_slm} in Lemma \ref{probability_of_enough_and_busy}}. 
Similar to the proof of  \eqref{probability_of_enough_and_busy_clm}, we can first prove that when the swapping-limiting mode is active, the open EV-queue and the DB-queue will converge to the EV-DB queue when $ B $ approaches infinity. We skip the details for brevity and only focus on deriving the two probabilities in  \eqref{probability_of_enough_and_busy_slm}.
	
Recall that when the swapping-limiting mode is active, the EV-DB queue is ergodic and  it has a unique steady-state distribution. Moreover, this unique distribution $\{ \boldsymbol{\pi}_j^{(\mathrm{EVDB})}\}_{\forall j} $ can be obtained by solving the finite system of equations \eqref{slm_finite_equations}. Based on the definition of $ \mathbb{P}_{\mathrm{enough}} $, we have
	\begin{align}
	\lim\limits_{B\rightarrow\infty} \mathbb{P}_{\mathrm{enough}}=   & 1- \lim\limits_{B\rightarrow\infty}\sum_{n=1}^{N}\sum_{b=0}^{\min\{n,S\}}\pi_{n,b}\\
	\geq  &  1- \lim\limits_{B\rightarrow\infty} \sum_{b=0}^{S}\boldsymbol{\pi}_{B-b}^{(\mathrm{EVDB})}\mathbf{e}\\
	= & 1- \lim\limits_{B\rightarrow\infty} \sum_{b=0}^{S}\boldsymbol{\pi}_{C}^{(\mathrm{EVDB})}\mathbf{M}^{B-b-C}\mathbf{e},
	\label{EVDB_B_b_C}
	\end{align}
where the last equality comes from the fact that  $ \boldsymbol{\pi}^{(\mathrm{EVDB})}_{j} = \boldsymbol{\pi}^{(\mathrm{EVDB})}_{C} \mathbf{M}^{(j-C)},\forall j=\{ C+1, \cdots,\infty\} $. Since vector $ \mathbf{e} $ can be written as 
	\begin{align}
	\mathbf{e} = \xi_1\mathbf{x}_1+\xi_2\mathbf{x}_2+\cdots+\xi_{N+1}\mathbf{x}_{N+1},
	\label{expan_e}
	\end{align} 
where $ \mathbf{x}_i $ denote the eigenvectors of matrix $ \mathbf{M} $, and $\xi_i $ are real coefficients, $ \forall  i=\{1,\cdots,N+1\} $. Based on Equations \eqref{EVDB_B_b_C} and \eqref{expan_e}, we have		
	\begin{align}
	\nonumber
	& \lim\limits_{B\rightarrow\infty} \sum_{b=0}^{S}\boldsymbol{\pi}_{C}^{(\mathrm{EVDB})}\mathbf{M}^{B-b-C}\mathbf{e}\\\nonumber
	= & \lim\limits_{B\rightarrow\infty} \sum_{b=0}^{S}\boldsymbol{\pi}_{C}^{(\mathrm{EVDB})}\mathbf{M}^{B-b-C}(\xi_1\mathbf{x}_1+\xi_2\mathbf{x}_2+\cdots+\xi_{N+1}\mathbf{x}_{N+1})\\
	= & \lim\limits_{B\rightarrow\infty} \sum_{b=0}^{S}\boldsymbol{\pi}_{C}^{(\mathrm{EVDB})}\Big(\sigma_1^{B-b-C}\xi_1\mathbf{x}_1+\sigma^{B-b-C}_2\xi_2\mathbf{x}_2+\cdots +\sigma^{B-b-C}_{N+1}\xi_{N+1}\mathbf{x}_{N+1}\Big),
	\end{align}
where $ \sigma_i, i=\{1,\cdots,N+1\} $ denotes the eigenvalue of matrix $ \mathbf{M} $ corresponds to the eigenvector $ \mathbf{x}_i, i=\{1,\cdots,N+1\}$.
	
According to \cite{matrix_geometric_method}, if Markov chain $ \mathbf{Q}^{(\mathrm{EVDB})} $ is irreducible and positive recurrent, then $ |\sigma_i|<1, \forall i=\{1,\cdots,N+1\} $. Therefore, the ergodicity of the EV-DB queue in the swapping-limiting mode indicates that the following equality holds
	\begin{align}
	\lim\limits_{B\rightarrow\infty} \sum_{b=0}^{S}\boldsymbol{\pi}_{C}^{(\mathrm{EVDB})}\mathbf{M}^{B-b-C}\mathbf{e} = 0.
	\end{align}
	Therefore, we have
	\begin{align}
	 \lim\limits_{B\rightarrow\infty} \mathbb{P}_{\mathrm{enough}} = 1.
	\end{align} 
	
Meanwhile, based on the definition of $ \mathbb{P}_{\mathrm{busy}} $, we have
	\begin{align}
	\lim\limits_{B\rightarrow\infty} \mathbb{P}_{\mathrm{busy}} = &\lim\limits_{B\rightarrow\infty}\sum_{n=0}^{N}\sum_{b=0}^{B-C}\pi_{n,b}  \\
	= & \sum_{n=0}^{N}\sum_{j=C}^{\infty}\pi^{(\mathrm{EVDB})}_{n,b}\\
	= &1 - \sum_{n=0}^{N}\sum_{j=0}^{C-1}\pi^{(\mathrm{EVDB})}_{n,b}\\
	= & 1- \sum_{j=0}^{C-1} \boldsymbol{\pi}_{j}^{(\mathrm{EVDB})}\mathbf{e} \triangleq \psi(N,S,C).
	\end{align}
We thus complete the proof of \eqref{probability_of_enough_and_busy_slm} in Lemma \ref{probability_of_enough_and_busy}.
\end{document}